\documentclass[smallextended]{svjour3}
\usepackage{algorithm, algorithmic}
\usepackage{amsmath}
\usepackage{amsfonts}
\usepackage{url}

\smartqed  
\usepackage{graphicx}

\newcommand{\RR}{\mathbb{R}}

\newcommand{\NN}{\mathbb{N}}
\newcommand{\yy}{\mathbf{y}}
\newcommand{\xx}{\mathbf{x}}
\newcommand{\zz}{\mathbf{z}}

\newcommand{\SSS}{\mathrm{S}}
\newcommand{\MM}{\mathcal{M}}
\newcommand{\TT}{\mathcal{T}}
\newcommand{\II}{\mathcal{I}}

\newcommand{\sdp}{{\rm SDP}}

\newcommand{\oomit}[1]{}







\begin{document}
\title{Nonlinear Craig Interpolant Generation
}


\author{Ting Gan         \and
        Bican Xia      \and
        Bai Xue    \and
        Naijun Zhan   \and
        Liyun Dai
}


\institute{T. Gan \at
              School of Computer Science,
		Wuhan University,
		Wuhan, China \\
       \email{ganting@whu.edu.cn}
           \and
           B. Xia \at
              LMAM \& School of Mathematical Sciences, Peking University, Beijing, China\\
  \email{xbc@math.pku.edu.cn}
           \and
           B. Xue \at
           State Key Lab. of Computer Science, Institute of Software, Chinese Academy of Sciences, Beijing, China\\
  \email{xuebai@ios.ac.cn}
           \and
           N. Zhan \at
           State Key Lab. of Computer Science, Institute of Software, Chinese Academy of Sciences\& University of CAS, Beijing, China\\
  \email{znj@ios.ac.cn}
           \and
           L. Dai \at
           RISE \& School of Computer and Information Science, Southwest University, Chongqing, China\\
  \email{dailiyun@swu.edu.cn}
}

\date{Received: date / Accepted: date}

\maketitle

\begin{abstract}
  Interpolation-based techniques have become popular in recent years because of their inherently modular and local reasoning,
	which can scale up existing formal verification techniques like theorem proving, model-checking, abstraction interpretation, and
	so on, while the scalability is the bottleneck of these techniques.
	Craig interpolant generation plays a central role in interpolation-based techniques, and therefore
	has drawn increasing attention. In the literature, there are various works
	done on how to automatically synthesize interpolants for decidable fragments of first-order logic, linear arithmetic, array logic, equality logic with uninterpreted functions (EUF), etc., and their combinations. But Craig interpolant generation for non-linear theory and its combination with the aforementioned theories are
	still in infancy, although some attempts have been done.
	In this paper,
	we first prove that a polynomial interpolant of the form $h(\xx)>0$ exists for two
	mutually contradictory polynomial formulas
	$\phi(\xx,\yy)$ and $\psi(\xx,\zz)$, with the form
	$f_1\ge0\wedge\cdots\wedge f_n\ge0$, where $f_i$ are polynomials in $\xx,\yy$ or $\xx,\zz$, and the quadratic module generated by $f_i$ is Archimedean. Then, we show that synthesizing such interpolant can be reduced to solving a semi-definite programming  problem ($\sdp$).
	In addition, we propose a verification approach to assure the validity of the synthesized interpolant and consequently avoid the unsoundness caused by numerical error in $\sdp$ solving. Then, we discuss how to generalize our approach to general semi-algebraic formulas. Finally, as an applicaiton of our approach, we demonstrate how to apply it to invariant generation in program verification.
\end{abstract}

\begin{keywords}
  Craig interpolant, Archimedean condition, semi-definite programming, program verification, sum of squares
\end{keywords}


\section{Introduction} \label{sec:intr}

Interpolation-based techniques have become popular in recent years because of their inherently modular and local reasoning,
  which can scale up existing formal verification techniques like theorem proving, model-checking, abstract interpretation, and
  so on, while the scalability is the bottleneck of these techniques.
The study of interpolation was pioneered by Kraj${\rm \acute{i}\breve{c}}$ek \cite{krajicek97} and Pudl\'{a}k \cite{pudlak97} in
connection with theorem proving,
by McMillan in connection with model-checking \cite{mcmillan03},
by Graf and Sa\"{i}di \cite{GS97}, Henzinger et al. \cite{HJMM04} and McMillan \cite{mcmillan05} in connection with abstraction
 like CEGAR, by Wang et al. \cite{wang11} in connection with machine-learning based program verification.

Craig interpolant generation plays a central role in interpolation-based techniques, and therefore
  has drawn increasing attention. In the literature, there are various efficient algorithms proposed
    for automatically synthesizing  interpolants for various theories, e.g., decidable fragments of first-order logic, linear arithmetic, array logic, equality logic with uninterpreted functions (EUF), etc., and their combinations,
     and their use in verification {\cite{mcmillan05,HJMM04,YM05,KMZ06,RS10,KV09,CGS08,mcmillan08}.}
    In \cite{mcmillan05}, McMillan presented a method for deriving Craig interpolants from proofs in the quantifier-free theory of
 linear inequality and uninterpreted function symbols, and based on which an interpolating theorem prover was
 provided.
  In  \cite{HJMM04},   Henzinger \emph{et al.} proposed a method to synthesize Craig interpolants for a theory with arithmetic and pointer expressions, as well as call-by-value functions. In \cite{YM05}, Yorsh and Musuvathi  presented a combination method to generate Craig interpolants for a class of first-order
theories.  In \cite{KMZ06}, Kapur \emph{et al.} presented different efficient procedures to construct interpolants for
 the theories of arrays, sets and multisets using the reduction approach.  Rybalchenko and Sofronie-Stokkermans \cite{RS10} proposed an approach
 to reducing the synthesis of Craig interpolants of  the combined theory of linear arithmetic and
uninterpreted function symbols to constraint solving.
In addition, D'Silva et al. \cite{SPWK10} investigated
strengths of various interpolants.

However, interpolant generation for non-linear theory and its combination with the aforementioned theories is
    still in infancy,   although
nonlinear polynomials inequalities are quite common  in
 software involving number theoretic
functions as well as hybrid systems \cite{ZZKL12,Zhan17}.
In \cite{DXZ13},  Dai et al. had a first try and gave an
algorithm for generating interpolants for conjunctions of
mutually contradictory nonlinear polynomial inequalities
based on the existence of a witness guaranteed by Stengle's
Positivstellensatz \cite{Stengle}, which is computable using
semi-definite programming ($\sdp$).
Their algorithm is incomplete in general but if all variables are bounded (called Archimedean condition), then
their algorithm is complete. A major limitation of their work is
that two mutually contradictory formulas $\phi$ and $\psi$ must
have the same set of variables. \oomit{\footnote{See however an expanded version of their paper
  under preparation where they propose heuristics using program
  analysis for eliminating uncommon variables.}. }
In \cite{GDX16}, Gan \emph{et al.} proposed an algorithm to generate interpolants for
quadratic polynomial inequalities.
The basic idea is based on the insight that for analyzing the solution space
of concave quadratic polynomial inequalities, it
suffices to linearize them. A generalization of Motzkin's transposition theorem
is proved to be applicable for concave quadratic polynomial inequalities. Using this, they
proved  the existence of an interpolant for two mutually
contradictory conjunctions $\phi(\xx,\yy), \psi(\xx,\zz)$ of concave quadratic polynomial
inequalities and proposed an $\sdp$-based algorithm to compute it.
Also in \cite{GDX16}, they developed a combination algorithm for generating
interpolants for the combination of quantifier-free theory of concave quadratic polynomial
inequalities and \textit{EUF} based on the hierarchical calculus framework
proposed in \cite{SSLMCS2008} and used in \cite{RS10}.
\oomit{A polynomial is called
\emph{quadratic concave} iff $p(\xx) = \xx^T A \xx +\mathbf{\alpha}^T \xx + a$ with
$A$ is  semi-definite negative, where $\xx^T$ stands for the transposition of $\xx$, $A$ is a
 $|\xx|\times |\xx|$-dimension matrix over reals, and $\mathbf{\alpha}\in \mathbb{R}^{|\xx|}$.
 A polynomial formula is called \emph{quadratic concave} iff
all polynomials in it are \emph{quadratic concave}.} Obviously, \emph{quadratic concave} polynomial
inequalities is a very restrictive class of polynomial formulas, although most of existing abstract domains
fall within it as argued in \cite{GDX16}.  Meanwhile, in \cite{GZ16}, Gao and Zufferey presented an approach to extract interpolants for
non-linear formulas possibly containing transcendental functions and differential equations from
proofs of unsatisfiability generated by $\delta$-decision procedure \cite{Gao14} that are based on interval constraint propagation
(ICP) \cite{BG06}. Similar idea was also reported in \cite{KB11}. They transform proof traces from $\delta$-complete decision procedures into interpolants that consist of Boolean combinations of linear constraints. Thus, their approach can only find the interpolants between two formulas whenever their conjunction is not $\delta$-satisfiable.

\begin{example} \label{twist}
	
Let
{\small
\begin{eqnarray*}
\phi & = &  -2xy^2 +x^2-3xz-y^2 -yz+z^2-1\geq 0 \wedge
       100-x^2-y^2\geq 0 \wedge \\
       & &
     x^2z^2 +y^2z^2-x^2-y^2+ \frac{1}{6} (x^4+2x^2y^2+y^4)-\frac{1}{120}(x^6+y^6)-4 \leq 0; \\
\psi & = & 4(x-y)^4 +(x+y)^2 +w^2-133.097\leq 0
    \wedge 100(x+y)^2-w^2(x-y)^4-3000 \geq 0.
\end{eqnarray*}}
It can be checked that $\phi \wedge \psi \models \bot$.

\begin{center}
		\begin{figure}
			\flushleft
			\includegraphics[scale=0.5]{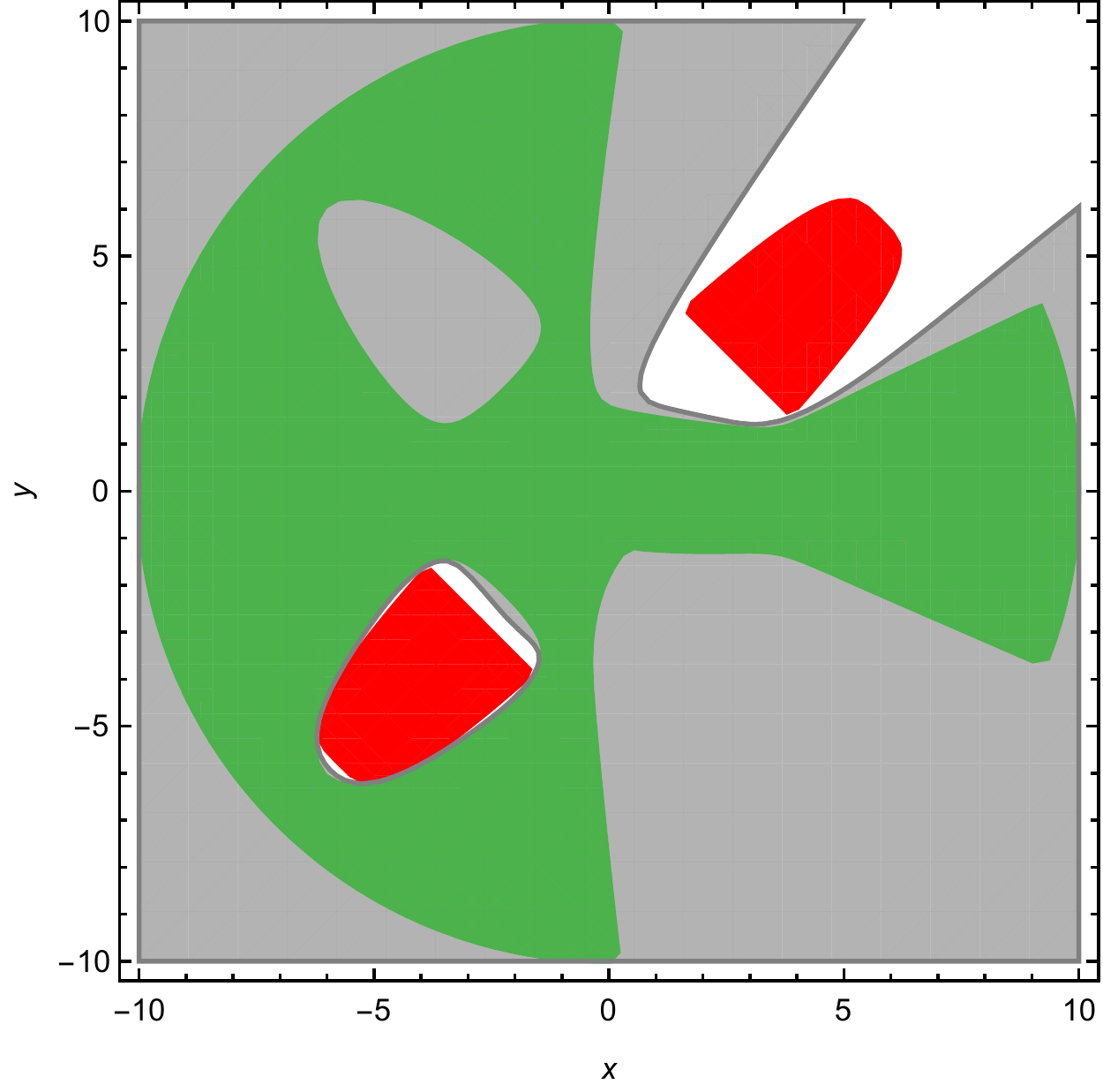}
			\caption{Example \ref{twist}. \small{(Green region: The projection of $\phi(x,y,z)$ onto $x$ and $y$; Red region: The projection of  $\psi(x,y,w)$ onto $x$ and $y$;
					 Gray region plus the green region:  The synthesized interpolant $\{(x,y)\mid h(x,y)>0\}$.)}}
			\label{fig-twist}
		\end{figure}
	\end{center}

Obviously, synthesizing
interpolants for $\phi$ and $\psi$ in this example is beyond the ability of the above approaches reported in \cite{DXZ13,GDX16}.
Using the method in \cite{GZ16} implemented in dReal3 
it would return ``SAT'' with $\delta=0.001$, i.e., $\phi\wedge \psi$ is $\delta$-satisfiable, and
 hence it cannot synthesize any interpolant using \cite{Gao14}'s approach with any precision greater than $0.001$\footnote{Alternatively, if we try the formula with the latest version of dReal4, it does not produce any output after 20 hours.}. While,
using our method, an interpolant  $h>0$ with degree $10$ can be found as shown in Fig \ref{fig-twist},
where,
{
\begin{align*}
  h=&85.868928829-3.48171895438*x-4.05055570642*y+69.5179717355*y^2-\\
  &1.975586161*x*y^2+73.7505185252*x^2+79.7889009541*x^4+\\
  &39.4019812625*x^2*y^2+59.0193160126*y^4+5.81006679403*x^6-\\
  &2.92654531095*y^6+43.1570807435*x*y+25.0366942193*x*y^3+\\
  &34.8268489109*x^3*y+3.86714864443*x^3+4.04378820386*x^2*y-\\
  &5.94528747172*y^3+8.72599707958*x^5+2.97413012798*x^4*y-\\
  &6.09661882807*x^3*y^2-6.92712190458*x^2*y^3-9.98542713025*x*y^4-\\
  &13.6495929388*y^5-28.5122945936*x^5*y-7.3306401143*x^4*y^2-\\
  &12.8405648769*x^3*y^3-2.4656615978*x^2*y^4-25.914058476*x*y^5+\\
  &0.574487452756*x^7-2.04161159811*x^6*y-2.1652665291*x^5*y^2+\\
  &0.441933622851*x^4*y^3-3.31339953201*x^3*y^4-1.18860782739*x^2*y^5-\\
  &1.59589948759*x*y^6-0.207834237584*y^7-0.39416212619*x^8-\\
  &2.61990676298*x^7*y+10.5074625167*x^6*y^2-11.2424047363*x^5*y^3+\\
  &4.40746440558*x^4*y^4-8.71563327623*x^3*y^5+12.7151449616*x^2*y^6-\\
  &7.65894409136*x*y^7+0.997807267855*y^8+0.360026919214*x^9-\\
  &1.60158863204*x^8*y+1.72767640966*x^7*y^2+1.39706253748*x^6*y^3-\\
  &3.00346730309*x^5*y^4+0.232691322651*x^4*y^5-1.70644550812*x^3*y^6+\\
  &5.74945017152*x^2*y^7-4.80737336486*x*y^8+1.1814541304*y^9+\\
  &0.387479209367*x^{10}-2.45667500053*x^9*y+7.28054219791*x^8*y^2-\\
  &12.6714503175*x^7*y^3+14.875324384*x^6*y^4-14.7134968452*x^5*y^5+\\
  &14.5995206041*x^4*y^6-12.7340406803*x^3*y^7+7.85021807731*x^2*y^8-\\
  &2.95705991155*x*y^9+0.522661022723*y^{10}.
\end{align*} }
Additionally, using the symbolic procedure REDUCE, it can be proved that $h>0$ is indeed an interpolant of $\phi$ and $\psi$.

\end{example}

In this paper, we further investigate this issue and consider how to synthesize
an interpolant for two polynomial formulas $\phi(\xx,\yy)$ and
$\psi(\xx,\zz)$ with $\phi(\xx,\yy)\wedge \psi(\xx,\zz) \models \bot$, where
\begin{center}
  $\phi(\xx,\yy) : f_1(\xx,\yy) \ge 0 \wedge \cdots \wedge f_m(\xx,\yy)\ge 0$, \\
  $\psi(\xx,\zz) : g_1(\xx,\zz) \ge 0 \wedge \cdots \wedge g_n(\xx,\zz)\ge 0$,
\end{center}
 $\xx \in \RR^r$, $\yy \in \RR^s$, $\zz \in \RR^t$ are variable vectors, $r,s,t \in \NN$,
and $f_1,\ldots,f_m,g_1,\ldots,g_n$ are polynomials. In addition,
$\MM_{\xx,\yy}\{ f_1(\xx,\yy),$ $\ldots, f_m(\xx,\yy) \}$ and
$\MM_{\xx,\zz}$ $\{ g_1(\xx,\zz)$, $\ldots$, $g_n(\xx,\zz) \}$ are two Archimedean quadratic modules
(the definition will be given later).
Here we allow uncommon variables, that are not allowed in \cite{DXZ13}, and drop the constraint
that polynomials must be concave and quadratic,
which is assumed in \cite{GDX16}.
The Archimedean condition amounts to that all the variables are bounded, which is reasonable
in program verification, as only bounded numbers can be represented in computer in practice. 
We first prove that there exists a polynomial $h(\xx)$ such that $h(\xx)=0$ separates the state space of $\xx$ defined by
$\phi(\xx,\yy)$ from that defined by $\psi(\xx,\zz)$ theoretically, and then propose an algorithm to compute
such $h(\xx)$ based on $\sdp$. 
Furthermore, we propose a verification approach to assure the validity of the synthesized interpolant and consequently avoid the unsoundness caused by numerical error in $\sdp$ solving. Finally, we also discuss how to extend our results to general semi-algebraic constraints.

Another contribution of this paper is that as an application, we illustrate how to apply our approach to invairant generation
in program verification by revising the framework proposed in  \cite{LSXLSH2017} by Lin \emph{et al.}  for invariant generation based on \emph{weakest precondition},
\emph{strongest postcondition} and \emph{interpolation}. It consists of two procedures, i.e.,
synthesizing invariants by forward interpolation based on \emph{strongest postcondition} and
\emph{interpolant generation}, and by backward interpolation  based on \emph{weakest precondition} and
\emph{interpolant generation}. In \cite{LSXLSH2017}, only linear invariants can be synthesized as no powerful
approaches are available to synthesize nonlinear interpolants. Obviously, our results can
strengthen their framework by allowing to generate nonlinear invariants.
To this end, we revise the two procedures in their
framework accordingly.



The paper is organized as follows. Some preliminaries are introduced in Section~\ref{sec:prel}.
Section~\ref{sec:existence} shows the existence of an interpolant for two mutually contradictory polynomial
 formulas only containing conjunction, and Section~\ref{sec:ssc} presents SDP-based methods to compute it.
  In Section~\ref{sec:numericalerror}, we discuss how to avoid unsoundness caused by numerical error in SDP.
  Section~\ref{sec:generalization} extends  our approach
 to general polynomial formulas. Section~\ref{sec:invariant} demonstrates how to apply our approach to invariant generation in program verification.
 We conclude this paper  in Section  \ref{sec:con}.

\section{Preliminaries} \label{sec:prel}
In this section, we first give a brief introduction on some notions used throughout the rest of this paper and then describe the problem of interest.
\subsection{Quadratic Module}
\label{sn}
$\mathbb{N}$, $\mathbb{Q}$ and $\mathbb{R}$ are the sets of integers, rational numbers and real numbers, respectively. $\mathbb{Q}[\xx]$ and $\mathbb{R}[\xx]$ denotes the polynomial ring over rational numbers and real numbers in $r\geq 1$ indeterminates $\xx:(x_1,\ldots,x_r)$. We use $\mathbb{R}[\xx]^2:=\{p^2\mid p\in \mathbb{R}[\xx]\}$ for the set of squares and $\sum\mathbb{R}[\xx]^2$ for the set of sums of squares of polynomials in $\xx$. Vectors are denoted by boldface letters. $\bot$ and $\top$ stand for \textbf{false} and \textbf{true}, respectively.


\begin{definition}[Quadratic Module \cite{Marshall2008}]
A subset $\mathcal{M}$ of $\mathbb{R}[\xx]$ is called a \emph{quadratic module} if it contains 1 and  is closed under addition and  multiplication with squares, i.e.,
\[ 1\in\mathcal{M}, \mathcal{M} + \mathcal{M} \subseteq \mathcal{M}, \mbox{ and } p^2 \mathcal{M} \subseteq \mathcal{M}~ \mbox{ for all } p \in \mathbb{R}[\xx].\]
\end{definition}


\begin{definition}
  Let $\overline{p}:=\{p_1,\ldots,p_s\}$ be a finite subset of $\mathbb{R}[\xx]$, the quadratic module $\mathcal{M}_{\xx}( \overline{p})$ or simply $\mathcal{M}(\overline{p})$ generated by $\overline{p}$ (i.e. the smallest quadratic module containing
  all $p_i$s) is
  \[\mathcal{M}_{\xx}(\overline{p})=\{\sum_{i=0}^s \delta_i p_i\mid  \delta_i \in \sum \mathbb{R}[\xx]^2\},\]
  where $p_0=1$.
\end{definition}

In other words, the quadratic module generated by $\overline{p}$ is a subset of polynomials that are nonnegative on the set $\{\xx\mid p_i(\xx)\geq 0, i=1,\ldots,s\}$. 
The following Archimedean condition plays a key role in the study of polynomial optimization.

\begin{definition}[Archimedean]
  Let $\mathcal{M}$ be a quadratic module of $\mathbb{R}[\xx]$ with $\xx=(x_1,\ldots,x_r)$. $\mathcal{M}$ is said to be \emph{Archimedean} if there exists some $a>0$ such that $a-\sum_{i=1}^{r} x_i^2 \in \mathcal{M}$.
\end{definition}

\subsection{Problem Description} \label{sec:prel:2}

Craig showed that given two formulas $\phi$ and $\psi$ in a
first-order theory $\TT$ s.t.
$\phi \models \psi$, there always exists an \emph{interpolant} $I$ over
the common symbols of $\phi$ and $\psi$ s.t.  $\phi \models
I, I \models \psi$. In the verification literature, this
terminology has been abused following \cite{mcmillan05}, where a
\emph{reverse interpolant} (coined by Kov\'{a}cs and Voronkov in \cite{KV09}) $I$ over the common symbols of $\phi$ and
$\psi$ is defined by
\begin{definition}[Interpolant]
\label{interpolant}
  Given two formulas $\phi$ and $\psi$ in a theory $\TT$ s.t. $\phi \wedge \psi \models_{\TT} \bot$,
  a formula $I$ is an \emph{interpolant} of $\phi$ and $\psi$ if (i) $\phi \models_{\TT} I$;
  (ii) $I \wedge \psi \models \bot$; and (iii) $I$ only contains common symbols and free variables
  shared by $\phi$ and $\psi$.
\end{definition}

\begin{definition}
\label{sets}
A basic semi-algebraic set $\{\xx\in \mathbb{R}^n\mid \bigwedge_{i=1}^s p_i(\xx)\geq 0\}$ is called a set of the {\em Archimedean form} if $\mathcal{M}_{\xx}\{p_1(\xx),\ldots,p_s(\xx)\}$ is Archimedean,
where $p_i(\xx)\in \mathbb{R}[\xx]$, $i=1,\ldots,s$.
\end{definition}

The interpolant synthesis problem of interest in this paper is described in \textbf{Problem} \ref{problem:1}.
\begin{problem} \label{problem:1}
Let $\phi(\xx,\yy)$ and $\psi(\xx,\zz)$ be two polynomial formulas defined as follows,
\begin{align*}
  \phi(\xx,\yy) : f_1(\xx,\yy) \geq 0 \wedge \cdots \wedge f_m(\xx,\yy)\ge 0, \\
  \psi(\xx,\zz) : g_1(\xx,\zz) \geq 0 \wedge \cdots \wedge g_n(\xx,\zz)\ge 0,
\end{align*}
where, $\xx \in \mathbb{R}^r$, $\yy \in \mathbb{R}^s$, $\zz \in \mathbb{R}^t$ are variable vectors, $r,s,t \in \mathbb{N}$,
and $f_1,\ldots,f_m,g_1$, \ldots, $g_n$ are polynomials in the corresponding variables.
Suppose $\phi \wedge \psi \models \bot$, and $\{(\xx,\yy)\mid \phi(\xx,\yy)\}$ and $\{(\xx,\zz)\mid \psi(\xx,\zz)\}$ are semi-algebraic sets of the Archimedean form. Find a polynomial $h(\xx)$ such that $h(\xx)>0$ is an interpolant for $\phi$ and $\psi$.
\end{problem}

\oomit{
\section{Interpolant Synthesis} \label{sec:intSyn}
In this section, we focus our attention on elucidating our method for synthesizing a polynomial $h(\xx)$ as presented in \textbf{Problem} \ref{problem:1}. We first prove the existence of such polynomial function $h(\xx)$ in subsection \ref{EI}, and then recast the problem of synthesizing such $h(\xx)$ as a semi-definite programming problem in subsection \ref{ssc}.}

\section{Existence of Interpolant}  \label{sec:existence}

The basic idea and steps of proving the existence of interpolant are as follows: Because an interpolant of $\phi$ and $\psi$ contains only the common symbols in $\phi$ and $\psi$, it is natural to consider the projections of the sets defined by $\phi$ and $\psi$ on $\xx$, i.e. $P_{\xx}(\phi(\xx,\yy)) := \{ \xx \mid \exists \yy. \, \phi(\xx,\yy)\}$ and $P_{\xx}(\psi(\xx,\zz)) := \{ \xx \mid \exists \zz.\,  \psi(\xx,\zz)\}$, which are obviously disjoint. We therefore prove that, if $h(\xx)=0$ separates $P_{\xx}(\phi(\xx,\yy))$ and $P_{\xx}(\psi(\xx,\zz))$, then $h(\xx)$ solves \textbf{Problem} \ref{problem:1} (see Proposition \ref{prop:1}). Thus, we only need to prove the existence of such $h(\xx)$ through the following steps.

First, we prove that $P_{\xx}(\phi(\xx,\yy))$ and $P_{\xx}(\psi(\xx,\zz))$ are compact semi-algebraic sets which are unions of finitely many basic closed semi-algebraic sets (see Lemma \ref{the:forproj}). Second, using Putinar's Positivstellensatz, we prove that, for two disjoint basic closed semi-algebraic sets $S_1$ and $S_2$ of the Archimedean form, there exists a polynomial $h_1(\xx)$ such that $h_1(\xx)=0$ separates $S_1$ and $S_2$ (see Lemma \ref{the:scase}). This result is then extended to the case that $S_2$ is a finite union of basic closed semi-algebraic sets (see Lemma \ref{lemma:1}). Finally, by generalizing Lemma \ref{lemma:1} to the case that two compact semi-algebraic sets both are unions of finitely many basic closed semi-algebraic sets ($P_{\xx}(\phi(\xx,\yy))$ and $P_{\xx}(\psi(\xx,\zz))$ are in this case by Lemma \ref{the:forproj}) and combining Proposition \ref{prop:1}, we prove the existence of interpolant in Theorem \ref{the:main} and Corollary \ref{conse}.

\begin{proposition}
\label{prop:1}
  If $h(\xx)\in \mathbb{R}[\xx]$ satisfies the following constraints
\begin{equation}
\label{formu:h}
\begin{split}
 &\forall \xx \in P_{\xx}(\phi(\xx,\yy)), h(\xx) > 0,\\
 &\forall \xx \in P_{\xx}(\psi(\xx,\zz)), h(\xx) < 0,
\end{split}
\end{equation}
then $h(\xx)> 0$ is an interpolant for $\phi(\xx,\yy)$ and $\psi(\xx,\zz)$, where $\phi(\xx,\yy)$ and $\psi(\xx,\zz)$ are defined as in \textbf{Problem} \ref{problem:1}.
\end{proposition}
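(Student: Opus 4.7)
The plan is to verify directly that $h(\xx) > 0$ fulfills the three conditions of an interpolant given in Definition~\ref{interpolant}, using only the definitions of the projections $P_{\xx}(\phi)$ and $P_{\xx}(\psi)$ together with the two hypotheses on $h$. There is no heavy machinery needed here; the proposition is essentially a bookkeeping lemma that justifies reducing the interpolant synthesis problem (Problem~\ref{problem:1}) to the separation problem on the $\xx$-projections.

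First I would check condition (iii), the syntactic common-symbol requirement. Since $h$ is a polynomial in $\xx$ alone, and $\xx$ is by assumption the vector of variables shared by $\phi(\xx,\yy)$ and $\psi(\xx,\zz)$, the formula $h(\xx)>0$ trivially mentions only the common variables (and the symbols $0$, $>$, together with the ring operations, which belong to the shared signature). Next, for condition (i), $\phi \models h(\xx)>0$, I would take an arbitrary model $(\xx_0,\yy_0)$ of $\phi(\xx,\yy)$; by the definition $P_{\xx}(\phi(\xx,\yy))=\{\xx\mid \exists \yy.\,\phi(\xx,\yy)\}$ the point $\xx_0$ lies in $P_{\xx}(\phi(\xx,\yy))$, and the first hypothesis in~(\ref{formu:h}) yields $h(\xx_0)>0$.

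For condition (ii), $h(\xx)>0 \wedge \psi(\xx,\zz) \models \bot$, I would argue by contradiction: suppose $(\xx_0,\zz_0)$ satisfies both $h(\xx_0)>0$ and $\psi(\xx_0,\zz_0)$. Then $\xx_0\in P_{\xx}(\psi(\xx,\zz))$, and the second hypothesis in~(\ref{formu:h}) forces $h(\xx_0)<0$, contradicting $h(\xx_0)>0$. Hence no such model exists.

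I do not foresee a genuine obstacle in this proof: the statement is a direct translation between the separation condition on projections and the semantic interpolant conditions, and the only subtlety worth flagging explicitly is that the existential quantifiers hidden in the projections are what allow us to discard the local variables $\yy$ and $\zz$ when passing to models of $\phi$ or $\psi$. The real work of the paper is to establish \emph{existence} of a polynomial $h$ satisfying~(\ref{formu:h}), which Proposition~\ref{prop:1} will license us to treat as the sole remaining task.
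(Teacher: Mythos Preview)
Your proof is correct and follows essentially the same route as the paper's own argument: verify the three interpolant conditions directly, using that any model of $\phi$ (resp.\ $\psi$) projects to a point of $P_{\xx}(\phi)$ (resp.\ $P_{\xx}(\psi)$), where the sign hypotheses on $h$ apply. The only cosmetic difference is that the paper phrases condition~(ii) as $\psi \models h(\xx)\le 0$ (the contrapositive of your contradiction argument) and leaves condition~(iii) implicit, whereas you spell it out.
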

\begin{proof}
 According to Definition \ref{interpolant}, it is enough to prove that $\phi(\xx,\yy) \models h(\xx)>0$ and $\psi(\xx,\zz) \models h(\xx) \leq 0$.

  Since any $(\xx_0,\yy_0)$ satisfying $\phi(\xx,\yy)$ must imply $\xx_0 \in P_{\xx}(\phi(\xx,\yy))$, it follows  that $h(\xx_0)>0$ from \eqref{formu:h} and $\phi(\xx,\yy) \models h(\xx)>0$. Similarly, we can prove $\psi(\xx,\zz) \models h(\xx)<0$, implying that $\psi(\xx,\zz) \models h(\xx) \leq 0$. Therefore, $h(\xx)> 0$ is an interpolant for $\phi(\xx,\yy)$ and $\psi(\xx,\zz)$.
\end{proof}

For the sake of synthesizing such $h(\xx)$ in Proposition \ref{prop:1}, we first dig deeper into the two sets $P_{\xx}(\phi(\xx,\yy))$ and $P_{\xx}(\psi(\xx,\zz))$.  As shown later, i.e. in Lemma \ref{the:forproj} ,  we will find that  these two sets are compact semi-algebraic sets of the form $\{\xx\mid\bigvee_{i=1}^{c} \bigwedge_{j=1}^{J_i} \alpha_{i,j}(\xx) \geq 0\}$. Before  this lemma,  we introduce Finiteness theorem pertinent to a {\em basic closed semi-algebraic subset} of $\RR^n$, which will be used in the proof of Lemma \ref{the:forproj}, where a basic closed semi-algebraic subset of $\RR^n$ is a set of the form $$\{\xx\in\RR^n\mid \alpha_1(\xx)\ge0,\ldots,\alpha_k(\xx)\ge0\}$$ with $\alpha_1,\ldots,\alpha_k\in\RR[\xx]$.

\begin{theorem}[Finiteness Theorem, Theorem 2.7.2 in \cite{BCR98}]
 \label{the:finiteness}
  Let $A\subset \RR^n$ be a closed semi-algebraic set. Then $A$ is a finite union of basic closed semi-algebraic sets.
\end{theorem}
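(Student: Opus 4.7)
The plan is to combine the normal-form decomposition of a semi-algebraic set with the closedness hypothesis to convert strict sign conditions into non-strict ones. Throughout, I work with the standard fact that every semi-algebraic set is a finite Boolean combination of sets of the form $\{p\ge 0\}$ and $\{p>0\}$.

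First, I would invoke the disjunctive normal form for semi-algebraic sets: there exist a finite family $\mathcal{P}=\{p_1,\ldots,p_N\}\subset\RR[\xx]$ and a subset $\Sigma\subseteq\{-1,0,+1\}^N$ such that
\[ A=\bigcup_{\sigma\in\Sigma}C_\sigma,\qquad C_\sigma=\{\xx\in\RR^n:\mathrm{sgn}\,p_i(\xx)=\sigma_i,\ i=1,\ldots,N\}. \]
The cells $\{C_\sigma\}$ partition $\RR^n$ as $\sigma$ ranges over $\{-1,0,+1\}^N$, and each individual $C_\sigma$ is described by strict inequalities together with equations on $\mathcal{P}$.

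Second, I would associate to each $\sigma\in\Sigma$ the canonical basic closed semi-algebraic set
\[ B_\sigma=\bigcap_{\sigma_i\ge 0}\{p_i\ge 0\}\cap\bigcap_{\sigma_i\le 0}\{-p_i\ge 0\}, \]
obtained by relaxing every strict sign condition of $\sigma$ to a non-strict one. One always has $C_\sigma\subseteq B_\sigma$, so $A\subseteq\bigcup_{\sigma\in\Sigma}B_\sigma$ is automatic. The theorem then amounts to proving the reverse inclusion once $\mathcal{P}$ is chosen sufficiently fine, and this is exactly where closedness of $A$ must enter: without it, the relaxation can sweep in points lying on the boundary of $A$ that do not actually belong to $A$.

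The main obstacle is to guarantee $B_\sigma\subseteq A$ for every $\sigma\in\Sigma$. A point of $B_\sigma\setminus C_\sigma$ lies in some adjacent cell $C_{\sigma'}$, where $\sigma'$ is obtained from $\sigma$ by changing some nonzero entries to $0$; one must ensure that every such $\sigma'$ already belongs to $\Sigma$. If not, the family $\mathcal{P}$ must be refined. Following the BCR approach, I would adjoin finitely many auxiliary polynomials—derived, for instance, from a cylindrical algebraic decomposition adapted to $\mathcal{P}$ or from a stratification of the frontier $\overline{A}\setminus\mathrm{int}(A)$ (which, being semi-algebraic of strictly smaller dimension, may even be treated by induction on $\dim A$)—that separate the offending adjacent cells lying outside $A$ from those inside. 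Because the set of sign patterns on any finite polynomial family is finite, this refinement process terminates, producing an enlarged family $\mathcal{P}'$ and an enlarged index set $\Sigma'$ with $A=\bigcup_{\sigma\in\Sigma'}B_\sigma$, the desired finite union of basic closed semi-algebraic sets. The delicate point is controlling this refinement: the cleanest route is the inductive one, combining a basic closed representation of $\mathrm{int}(A)$ with one for the boundary supplied by the induction hypothesis.
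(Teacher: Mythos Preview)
The paper does not prove this statement at all: it is quoted as Theorem~2.7.2 of \cite{BCR98} and used as a black box in the proof of Lemma~\ref{the:forproj}. There is therefore no ``paper's own proof'' to compare your proposal against.

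On its own merits, your sketch points in the right direction---sign-cell decomposition, relaxing strict to non-strict conditions, induction on dimension---but the crucial step is only gestured at, not carried out. The assertion that a suitable refinement of $\mathcal{P}$ forces every relevant ``adjacent'' $\sigma'$ into $\Sigma$, and that this refinement terminates, is exactly the content of the Finiteness Theorem; invoking CAD or a stratification here is not yet an argument. Your final sentence about combining a basic closed description of $\mathrm{int}(A)$ with one for $\partial A$ is also problematic as stated: $\mathrm{int}(A)$ is open, not closed, so the theorem being proved does not apply to it, and the open analogue (finite union of basic \emph{open} sets) is a separate statement whose combination with the boundary piece requires care. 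The actual BCR proof does proceed by induction on dimension, but the inductive step is more delicate than what you outline; if you want to write this up, you should either follow BCR~2.7.2 closely or supply the missing control on the refinement explicitly.
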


\begin{lemma}
\label{the:forproj}
The set $P_{\xx}(\phi(\xx,\yy))$ is compact semi-algebraic set of the following form
   $$ P_{\xx}(\phi(\xx,\yy)):=\{\xx\mid\bigvee_{i=1}^{c} \bigwedge_{j=1}^{J_i} \alpha_{i,j}(\xx) \ge 0\},$$
   where $\alpha_{i,j}(\xx)\in \mathbb{R}[\xx]$, $i=1,\ldots,c$, $j=1,\ldots,J_i$. The same claim applies to the set $P_{\xx}(\psi(\xx,\zz))$ as well.
\end{lemma}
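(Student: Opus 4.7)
The plan is to show $P_{\xx}(\phi(\xx,\yy))$ is compact first, and then invoke the Finiteness Theorem to put it in the stated disjunctive normal form. Let $S_\phi := \{(\xx,\yy)\in\RR^{r+s}\mid f_1(\xx,\yy)\ge 0,\ldots,f_m(\xx,\yy)\ge 0\}$. This set is closed as a finite intersection of closed sublevel sets of continuous functions. For boundedness, I would use the Archimedean hypothesis on $\MM_{\xx,\yy}\{f_1,\ldots,f_m\}$: there exists $a>0$ with $a-\sum_{i=1}^r x_i^2-\sum_{j=1}^s y_j^2 \in \MM_{\xx,\yy}\{f_1,\ldots,f_m\}$, and since every element of this quadratic module is nonnegative on $S_\phi$, we obtain $\|\xx\|^2+\|\yy\|^2\le a$ on $S_\phi$. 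Hence $S_\phi$ is closed and bounded, i.e., compact.

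Next, I would observe that $P_{\xx}(\phi(\xx,\yy))$ is the image of $S_\phi$ under the coordinate projection $\pi:\RR^{r+s}\to\RR^r$ that drops the $\yy$-coordinates. Being the continuous image of a compact set, $P_{\xx}(\phi(\xx,\yy))$ is compact, and in particular closed in $\RR^r$. It is also semi-algebraic by the Tarski--Seidenberg theorem, since the projection of a semi-algebraic set is semi-algebraic.

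At this point $P_{\xx}(\phi(\xx,\yy))$ is a closed semi-algebraic subset of $\RR^r$, so the Finiteness Theorem (Theorem \ref{the:finiteness}) applies and writes it as a finite union $\bigcup_{i=1}^c \{\xx\mid\alpha_{i,1}(\xx)\ge 0\wedge\cdots\wedge\alpha_{i,J_i}(\xx)\ge 0\}$ of basic closed semi-algebraic sets, which is exactly the claimed form. The same argument, with $\MM_{\xx,\zz}\{g_1,\ldots,g_n\}$ in place of $\MM_{\xx,\yy}\{f_1,\ldots,f_m\}$, delivers the statement for $P_{\xx}(\psi(\xx,\zz))$.

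The only real subtlety is the compactness step: one needs to extract boundedness of the full tuple $(\xx,\yy)$ from the Archimedean condition (not just of $\xx$), because projections preserve compactness but not closedness in general, so without pre-projection compactness the projection need not even be closed. Everything else is a direct invocation of Tarski--Seidenberg and Theorem \ref{the:finiteness}, so I do not anticipate further obstacles.
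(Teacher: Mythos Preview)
Your proposal is correct and follows essentially the same approach as the paper: compactness of the ambient set, continuity of the projection, Tarski--Seidenberg, then the Finiteness Theorem. If anything, your version is slightly more explicit, since you spell out how the Archimedean hypothesis yields boundedness of $S_\phi$, whereas the paper simply asserts compactness of $S$.
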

\begin{proof}[of Lemma~\ref{the:forproj}]
For the sake of simple exposition, we denote $\{(\xx,\yy)\mid \phi(\xx,\yy)\}$ and $ P_{\xx}(\phi(\xx,\yy))$ by $S$ and $\pi(S)$, respectively.

Because $S$ is a compact set, $\pi$ is a continuous map, and
continuous function maps compact set to compact set, then $\pi(S)$, which is the image of
a compact set under a continuous map, is compact. Moreover, since $S$ is a semi-algebraic set, and by Tarski-Seidenberg theorem \cite{Bierstone1988}  the projection of a semi-algebraic set is also
a semi-algebraic set,  this implies that  $\pi(S)$ is a semi-algebraic set. Thus, $\pi(S)$ is a compact semi-algebraic set.

Since $\pi(S)$ is a compact semi-algebraic set, and also a closed semi-algebraic set,  we have that
  $\pi(\SSS)$ is a finite union of basic closed semi-algebraic sets from Theorem \ref{the:finiteness}.
  Thus, there exist a series of polynomials
  $\alpha_{1,1}(\xx)$, \ldots, $\alpha_{1,J_1}(\xx)$, \ldots,
  $\alpha_{c,1}(\xx)$, \ldots, $\alpha_{c,J_c}(\xx)$ such that
   \begin{align*}
  \pi(\SSS)&=\bigcup_{i=1}^{c}\{\xx\mid \bigwedge_{j=1}^{J_i} \alpha_{i,j}(\xx) \ge 0\}
  =\{\xx\mid \bigvee_{i=1}^{c}\bigwedge_{j=1}^{J_i} \alpha_{i,j}(\xx) \ge 0\}.
  \end{align*}

  We conclude, we have proved this lemma.
\end{proof}

After knowing the structure of $P_{\xx}(\phi(\xx,\yy))$ and $P_{\xx}(\psi(\xx,\zz))$ being a union of some basic semialgebraic sets as illustrated in Lemma \ref{the:forproj}, we next prove the existence of $h(\xx)\in \mathbb{R}[\xx]$ satisfying \eqref{formu:h}, as formally stated in Theorem \ref{the:main}.
\begin{theorem}
\label{the:main}
Suppose that $\phi(\xx,\yy)$ and $\psi(\xx,\zz)$ are defined as in \textbf{Problem} 1. Then there exists a polynomial $h(\xx)$ satisfying \eqref{formu:h}.
\end{theorem}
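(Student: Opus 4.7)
The plan is to reduce the theorem to the preceding lemmas and then perform a careful combination. By Lemma~\ref{the:forproj} I can write
\[
P_{\xx}(\phi(\xx,\yy)) = \bigcup_{i=1}^c S_i, \qquad P_{\xx}(\psi(\xx,\zz)) = \bigcup_{k=1}^d T_k,
\]
where every $S_i$ and $T_k$ is a basic closed semi-algebraic set and both unions are compact. The Archimedean hypothesis on $\phi$ and $\psi$ implies both projections lie in a common ball $x_1^2+\cdots+x_r^2 \le R^2$, and appending the redundant inequality $R^2-(x_1^2+\cdots+x_r^2)\ge 0$ to the defining polynomials of each $S_i$ and $T_k$ turns each piece into a set of the Archimedean form in the sense of Definition~\ref{sets}. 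The hypothesis $\phi \wedge \psi \models \bot$, together with the fact that $\yy$ and $\zz$ are disjoint variable tuples, forces $P_{\xx}(\phi(\xx,\yy)) \cap P_{\xx}(\psi(\xx,\zz)) = \emptyset$, so in particular each $S_i$ is disjoint from the compact set $P_{\xx}(\psi(\xx,\zz))$.

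For every $i \in \{1,\ldots,c\}$, Lemma~\ref{lemma:1} applied to the basic set $S_i$ and the finite union $P_{\xx}(\psi(\xx,\zz)) = \bigcup_k T_k$ yields a polynomial $h_i(\xx)$ with $h_i > 0$ on $S_i$ and $h_i < 0$ on $P_{\xx}(\psi(\xx,\zz))$. By the compactness of $P_{\xx}(\phi(\xx,\yy)) \cup P_{\xx}(\psi(\xx,\zz))$ I may choose uniform constants $\epsilon>0$ and $M>0$ such that, for every $i$, $h_i \ge \epsilon$ on $S_i$, $h_i \le -\epsilon$ on $P_{\xx}(\psi(\xx,\zz))$, and $|h_i| \le M$ throughout $P_{\xx}(\phi(\xx,\yy)) \cup P_{\xx}(\psi(\xx,\zz))$. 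Fix $C > M$, a positive integer $N$, and a real constant $\lambda$ to be chosen below, and set
\[
h(\xx) := \sum_{i=1}^c \bigl(h_i(\xx)+C\bigr)^{2N+1} - \lambda.
\]
At any point of $P_{\xx}(\phi(\xx,\yy))$, say in $S_j$, the $j$-th summand is at least $(C+\epsilon)^{2N+1}$ while the remaining summands are nonnegative (because $h_i+C \ge C-M > 0$), so $h \ge (C+\epsilon)^{2N+1} - \lambda$. At any point of $P_{\xx}(\psi(\xx,\zz))$ every summand lies in $\bigl(0,(C-\epsilon)^{2N+1}\bigr]$, so $h \le c(C-\epsilon)^{2N+1} - \lambda$. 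Since $(C+\epsilon)/(C-\epsilon) > 1$, for $N$ large enough one has $(C+\epsilon)^{2N+1} > c(C-\epsilon)^{2N+1}$, and I can choose $\lambda$ strictly between these two bounds. The resulting polynomial $h$ then satisfies condition \eqref{formu:h}.

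The main obstacle is this combination step: a plain sum $\sum_i h_i$ or product $\prod_i h_i$ fails because on $S_j$ the sign of $h_i$ with $i \ne j$ is uncontrolled, so local information from Lemma~\ref{lemma:1} does not aggregate directly. The odd-power-of-shift trick, together with the additive constant $\lambda$, is the ingredient that uses compactness to turn the local guarantees into a single global separator. Once \eqref{formu:h} is established, Proposition~\ref{prop:1} immediately yields that $h(\xx)>0$ is an interpolant, finishing the proof of Theorem~\ref{the:main}.
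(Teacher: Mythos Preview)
Your argument is correct. The setup---decomposing both projections via Lemma~\ref{the:forproj}, forcing the Archimedean form by appending a ball constraint, and invoking Lemma~\ref{lemma:1} to obtain a separator $h_i$ for each piece $S_i$ of $P_{\xx}(\phi)$ against all of $P_{\xx}(\psi)$---matches the paper exactly. The difference lies in the final combination step. The paper does not build $h$ explicitly from the $h_i$; instead it observes that the basic set $S' := \{\xx \mid -h_1(\xx)\ge 0,\ldots,-h_c(\xx)\ge 0,\ R^2-\|\xx\|^2\ge 0\}$ is Archimedean, contains $P_{\xx}(\psi)$, and is disjoint from $P_{\xx}(\phi)$, and then re-applies the separation lemma to $S'$ versus the pieces of $P_{\xx}(\phi)$. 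Your route is more self-contained: the shifted odd-power sum $\sum_i (h_i+C)^{2N+1}-\lambda$ gives an explicit polynomial formula for $h$ and avoids a second appeal to Putinar's Positivstellensatz, at the cost of introducing the auxiliary constants $C,N,\lambda$. Both arguments rest on the same compactness bounds; the paper's version recycles its existing machinery, while yours is closer in spirit to the proof of Lemma~\ref{lemma:2} and arguably more constructive.
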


A formal proof of Theorem \ref{the:main} requires some preliminaries, which will be given later. The main tool in our proof is Putinar's Positivstellensatz, as formulated in Theorem \ref{the:putinar}.
\begin{theorem}[Putinar's Positivstellensatz \cite{Putinar1993}]
\label{the:putinar}
  Let $p_1,\ldots,p_k \in \mathbb{R}[\xx]$ and $S_1=\{\xx\mid p_1(\xx)\geq 0, \ldots, p_k(\xx) \geq 0 \}$.
  Assume that the quadratic module $\mathcal{M}(p_1,\ldots,p_k)$ is Archimedean. For $q\in\mathbb{R}[\xx]$,
  if $q>0$ on $S_1$ then $q\in \mathcal{M}(p_1,\ldots,p_k)$.
\end{theorem}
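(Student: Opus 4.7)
The plan is to prove the contrapositive by a separation-and-representation argument: assume $q>0$ on $S_1$ but $q\notin M:=\mathcal{M}(p_1,\ldots,p_k)$, and derive a contradiction.

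First, I would convert the Archimedean hypothesis into a usable quantitative statement: for every $f\in\mathbb{R}[\xx]$ there exists $N\in\mathbb{N}$ with $N\pm f\in M$. Starting from $a-\sum x_i^2\in M$, one shows $N-x_i^{2k}\in M$ for all $i,k$ by induction (using that $M$ is closed under multiplication by squares and under sums), and then extends to arbitrary monomials via sum-of-squares identities such as $2fg=(f+g)^2-f^2-g^2$ and boundedness lemmas like $N'-(x^\alpha)^2\in M$. This shows $\mathbb{R}\cdot 1$ is ``algebraically internal'' to the cone $M$, which is exactly what makes separation available.

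Second, I would apply a geometric Hahn–Banach/Eidelheit separation theorem to the convex cone $M\subseteq\mathbb{R}[\xx]$ and the external point $q$. The absorption from Step~1 implies that $1$ is an internal point of $M$ (any $f$ lies in $N\cdot 1-M$), so the separating hyperplane can be chosen non-trivially. This yields a linear functional $L:\mathbb{R}[\xx]\to\mathbb{R}$ with $L(M)\subseteq[0,\infty)$, $L(1)=1$, and $L(q)\le 0$. In particular $L$ is nonnegative on sums of squares, so $(f,g)\mapsto L(fg)$ is a positive semidefinite bilinear form.

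Third — and this is the main technical step — I would show that $L$ is represented by a Borel probability measure $\mu$ supported on $S_1$. The Archimedean bounds $N-\sum x_i^{2d}\in M$ force $L\!\left(\sum x_i^{2d}\right)\le N$ for each $d$, which gives the growth/determinacy condition needed by Haviland's theorem (or, equivalently, one completes $\mathbb{R}[\xx]/\ker$ in the sup-norm over a large compact box that already contains $S_1$ by Archimedeanness, and invokes Riesz representation on the resulting $C(K)$). This produces $\mu$ with $L(f)=\int f\,d\mu$. To pin $\text{supp}(\mu)\subseteq S_1$, one uses $L(s\cdot p_i)\ge 0$ for every sum of squares $s$: if $p_i$ were negative on a set of positive $\mu$-measure, a suitable nonnegative polynomial approximation (or a standard localization argument) would produce a sum of squares $s$ with $\int s\, p_i\, d\mu<0$.

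Finally, since $q$ is continuous and strictly positive on the nonempty compact set $S_1$ (compactness follows from Archimedeanness), it attains a positive minimum $c>0$, so $L(q)=\int q\,d\mu\ge c\cdot\mu(S_1)=c>0$, contradicting $L(q)\le 0$. The main obstacle is the representation step: one must carefully verify that the Archimedean hypothesis simultaneously provides the interior-point property needed for non-trivial separation, the moment-growth bounds needed for Haviland's theorem, and the generator inequalities needed to confine $\text{supp}(\mu)$ to $S_1$; if $S_1$ is empty, the statement is vacuous and the argument requires only the first half to derive $-1\in M$ directly.
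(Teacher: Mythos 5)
First, a point of reference: the paper does not prove this statement at all --- it is quoted verbatim as Putinar's Positivstellensatz with a citation to Putinar (1993) and used as a black box. So there is no in-paper proof to compare against; what follows is an assessment of your sketch on its own terms.

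Your architecture is the standard one (Archimedean absorption, Eidelheit separation of the cone $\mathcal{M}$ from $q$ with $1$ as a core point, a representing measure on $S_1$, and the final integral contradiction), and Steps 1, 2 and 4 and the support-localization argument are essentially correct. The genuine gap is in the representation step. Haviland's theorem requires $L(f)\geq 0$ for \emph{every} polynomial $f$ that is nonnegative on $S_1$; you only have $L\geq 0$ on the quadratic module $\mathcal{M}$, and upgrading the latter to the former is (up to an $\epsilon$) exactly Putinar's theorem, so the appeal is circular. The alternative you offer --- extending $L$ continuously to $C(K)$ in the sup-norm and invoking Riesz --- has the same problem: the bound $|L(f)|\leq N_f$ with $N_f\pm f\in\mathcal{M}$ controls $L$ only by a weighted $\ell^1$-norm of the coefficients of $f$, not by $\|f\|_{\infty,K}$, and getting the sup-norm bound again presupposes that $\|f\|_K+\epsilon-f\in\mathcal{M}$. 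The standard non-circular repair is operator-theoretic: run the GNS construction on the positive semidefinite form $(f,g)\mapsto L(fg)$, observe that the multiplication operators $M_{x_i}$ are bounded self-adjoint precisely because $a-\sum x_i^2\in\mathcal{M}$ gives $L(x_i^2f^2)\leq a\,L(f^2)$, and apply the spectral theorem for commuting bounded self-adjoint operators to obtain the measure (equivalently, invoke Nussbaum/Carleman determinacy for the exponentially bounded moment sequence). A minor additional slip: when $S_1=\emptyset$ the conclusion $-1\in\mathcal{M}$ is not obtained ``from the first half alone''; one still needs the separation and representation steps, with the contradiction now being that a probability measure cannot be supported on the empty set.
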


 With Putinar's Positivstellensatz we can draw a conclusion that there exists a polynomial such that its zero level set\footnote{The zero level set of an $n$-variate polynomial $h(\xx)$ is defined as $\{\xx\in\mathbb{R}^n\mid h(\xx)=0\}.$} separates two compact semi-algebraic sets of the Archimedean form, as claimed in Lemmas \ref{the:scase} and \ref{lemma:1}. Theorem \ref{the:main} is a generalization of these two lemmas.
\begin{lemma}
\label{the:scase}
  Let $$S_1=\{\xx \mid p_1(\xx)\geq 0, \ldots, p_J(\xx)\geq 0\},$$
   $$S_2=\{\xx \mid q_1(\xx)\geq 0, \ldots, q_K(\xx)\geq 0\}$$ be semi-algebraic sets of the Archimedean form and $S_1 \cap S_2 = \emptyset$, 
  then there exists
  a polynomial $h_1(\xx)$ such that
  \begin{align}
    \forall \xx \in S_1,~~~ h_1(\xx)>0, \\
    \forall \xx \in S_2,~~~ h_1(\xx)<0.
  \end{align}
\end{lemma}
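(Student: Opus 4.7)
The plan is to derive $h_1$ from a single algebraic identity supplied by Putinar's Positivstellensatz, exploiting the disjointness $S_1\cap S_2=\emptyset$ rather than treating $S_1$ and $S_2$ one at a time. The key observation is that the \emph{combined} constraint system $p_1\ge 0,\ldots,p_J\ge 0,q_1\ge 0,\ldots,q_K\ge 0$ cuts out exactly the empty set, on which the constant $-1$ is vacuously strictly positive.

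First, I would check that the combined quadratic module $\mathcal{M}(p_1,\ldots,p_J,q_1,\ldots,q_K)$ is still Archimedean. Since $\mathcal{M}(p_1,\ldots,p_J)$ already is, some $a-\sum_i x_i^2$ lies in it, and hence in the larger module. Applying Theorem~\ref{the:putinar} to this combined Archimedean module with $q=-1>0$ on $\emptyset=S_1\cap S_2$ yields sums of squares $\sigma_0,\sigma_1,\ldots,\sigma_J,\tau_1,\ldots,\tau_K$ with
$$-1=\sigma_0+\sum_{i=1}^{J}\sigma_i p_i+\sum_{j=1}^{K}\tau_j q_j.$$
Writing $A:=\sigma_0+\sum_i \sigma_i p_i$ and $B:=\sum_j \tau_j q_j$, this identity becomes $A+B=-1$, so the two pieces are antipodal up to the constant $-1$.

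I would then define the candidate separator as
$$h_1(\xx)\;:=\;1+2A(\xx)\;=\;-1-2B(\xx).$$
Verification is immediate from the signs of the SOS terms: on $S_1$ each $p_i\ge 0$ forces $A\ge 0$, hence $h_1\ge 1>0$; on $S_2$ each $q_j\ge 0$ forces $B\ge 0$, hence $h_1\le -1<0$. This is exactly the required strict separation.

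The only delicate point is recognizing that Putinar's Positivstellensatz applies in the degenerate case where the feasibility region is empty, so that the disjointness hypothesis $S_1\cap S_2=\emptyset$ can be converted into an algebraic certificate. Once this move is accepted, the construction is purely formal and I do not anticipate any real obstacle; the bookkeeping of SOS multipliers is routine.
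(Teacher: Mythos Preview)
Your proof is correct and follows the same strategy as the paper: apply Putinar's Positivstellensatz to the combined Archimedean module $\mathcal{M}(p_1,\ldots,p_J,q_1,\ldots,q_K)$ and split the resulting certificate into its $p$-part and $q$-part to produce the separator. The only difference is cosmetic---the paper first peels off $p_1$, observes $-p_1>0$ on the set $S_3$ cut out by the remaining constraints, and applies Putinar to $-u_1p_1-1$ there, whereas you apply Putinar directly to $-1$ on the empty intersection; your version is a slightly cleaner execution of the same idea.
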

\begin{proof}[of Lemma~\ref{the:scase}]
  Since $S_1 \cap S_2 = \emptyset$, i.e.,
  $$p_1\geq 0\wedge\cdots\wedge p_J \geq 0 \wedge q_1\geq 0\wedge\cdots\wedge q_K\geq 0 \models \bot,$$
  it follows 
  $$p_2\geq 0\wedge\cdots\wedge p_J \geq 0 \wedge q_1\geq 0\wedge\cdots\wedge  q_K \geq 0 \models -p_1> 0.$$
  Let $S_3= \{\xx\mid p_2\geq 0\wedge\cdots\wedge p_J \geq 0 \wedge q_1\geq 0\wedge\cdots\wedge  q_K \geq 0\}$,
  then $-p_1>0$ on $S_3$. Since $S_1$ and $S_2$ are semi-algebraic sets of the Archimedean form, $$\mathcal{M}_{\xx}(p_2(\xx),\ldots,p_J(\xx),q_1(\xx), \ldots, q_K(\xx))$$ is also Archimedean,
   and thus $S_3$ is compact. From $-p_1>0$ on $S_3$,  we further have that there exists some $u_1 \in \sum \mathbb{R}[\xx]^2$
   such that $-u_1p_1-1>0$ on $S_3$. Using Theorem \ref{the:putinar}, we have that
  $$-u_1p_1-1 \in \mathcal{M}_{\xx}(p_2(\xx),\ldots,p_J(\xx),q_1(\xx), \ldots, q_K(\xx)),$$  implying that there exists a set of sums of squares polynomials $u_2,\ldots,u_J$ and $v_0$,$v_1$, \ldots, $v_K \in \mathbb{R}[\xx]$,
  such that
  $$-u_1p_1-1\equiv u_2p_2+\cdots+u_Jp_J+v_0+v_1q_1+\cdots+v_Kq_K.$$
  Let $h_1=\frac{1}{2}+u_1p_1+\cdots+u_Jp_J$, i.e.,
  $-h_1=\frac{1}{2}+v_0+v_1q_1+\cdots+v_Kq_K$.
  It is easy to check that
  \begin{align*}
    \forall \xx \in S_1,~~~ h_1(\xx)>0, \\
    \forall \xx \in S_2,~~~ h_1(\xx)<0.
  \end{align*}
\end{proof}
Lemma \ref{lemma:1} generalizes the results of Lemma \ref{the:scase} to more general compact semi-algebraic sets of the Archimedean form, which is the union of multiple basic semi-algebraic sets.
\begin{lemma}
\label{lemma:1}
Assume $S_0=\{\xx \mid p_1(\xx)\geq 0, \ldots, p_J(\xx)\ge 0\}$ and $S_i=\{\xx \mid q_{i,1}(\xx)\geq 0, \ldots, q_{i,K_i}(\xx)\geq 0\}$, $i=1,\ldots,b$, are semi-algebraic sets of the Archimedean form, and $S_0 \cap \bigcup_{i=1}^{b}S_i = \emptyset$, then there exists
  a polynomial $h_0(\xx)$ such that
  \begin{equation}
  \label{cond:l1}
  \begin{split}
    &\forall \xx \in S_0, ~~~h_0(\xx)>0, \\
    &\forall \xx \in \bigcup_{i=1}^{b}S_i,~~~ h_0(\xx)<0.
    \end{split}
\end{equation}
\end{lemma}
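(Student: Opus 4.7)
The plan is to apply Lemma~\ref{the:scase} pairwise and then stitch the resulting polynomials together by a uniform-approximation argument. For each $i\in\{1,\ldots,b\}$ the pair $(S_0,S_i)$ is disjoint and of the Archimedean form, so Lemma~\ref{the:scase} furnishes a polynomial $h^{(i)}(\xx)\in\RR[\xx]$ that is strictly positive on $S_0$ and strictly negative on $S_i$. Inspecting the explicit construction inside that proof, one may actually take $h^{(i)}(\xx)\ge \tfrac{1}{2}$ for all $\xx\in S_0$ and $h^{(i)}(\xx)\le -\tfrac{1}{2}$ for all $\xx\in S_i$, because that proof writes $h^{(i)}=\tfrac{1}{2}+\sum_j u_j^{(i)} p_j$ and $-h^{(i)}=\tfrac{1}{2}+v_0^{(i)}+\sum_k v_k^{(i)} q_{i,k}$ with SOS coefficients.

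I would then combine the $h^{(i)}$'s into the single continuous function $f:\RR^r\to\RR$ defined by $f(\xx):=\min_{1\le i\le b} h^{(i)}(\xx)$. Because every $S_i$ and $S_0$ is Archimedean, hence compact, the union $K:=S_0\cup\bigcup_{i=1}^{b}S_i$ is also compact. The quantitative bounds above immediately give $f(\xx)\ge \tfrac{1}{2}$ on $S_0$ and, on any $S_j$, $f(\xx)\le h^{(j)}(\xx)\le -\tfrac{1}{2}$, so $f\le -\tfrac{1}{2}$ throughout $\bigcup_{i=1}^{b} S_i$.

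To upgrade $f$ to a polynomial I would invoke the classical Weierstrass approximation theorem on the compact set $K$: the restriction of $\RR[\xx]$ to $K$ is uniformly dense in $C(K)$, so there exists $h_0\in\RR[\xx]$ with $\sup_{\xx\in K}|h_0(\xx)-f(\xx)|<\tfrac{1}{4}$. Combined with the estimates on $f$ this yields $h_0(\xx)>\tfrac{1}{4}>0$ for all $\xx\in S_0$ and $h_0(\xx)<-\tfrac{1}{4}<0$ for all $\xx\in\bigcup_{i=1}^{b}S_i$, which is exactly~\eqref{cond:l1}.

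The main obstacle, and the only non-algebraic ingredient, is precisely the passage from the piecewise separator $f=\min_i h^{(i)}$ to a single polynomial separator. Purely algebraic recipes such as $\sum_i h^{(i)}$, $\prod_i h^{(i)}$, or weighted/power-sum combinations fail in general: on $S_j$ the ``off-diagonal'' polynomials $h^{(i)}$, $i\ne j$, can easily outweigh the guaranteed negativity of $h^{(j)}$, and no simultaneous scaling of the individual $h^{(i)}$'s fixes the signs on $S_0$ and on every $S_j$ at once. Stone--Weierstrass circumvents the difficulty at the cost of being purely existential, which is acceptable here since Lemma~\ref{lemma:1} only asserts existence; a constructive, SDP-based synthesis is developed in the subsequent sections.
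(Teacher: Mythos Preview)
Your argument is correct: the quantitative bounds $h^{(i)}\ge\tfrac12$ on $S_0$ and $h^{(i)}\le-\tfrac12$ on $S_i$ are indeed delivered by the explicit construction in Lemma~\ref{the:scase}, the pointwise minimum $f=\min_i h^{(i)}$ inherits the desired sign pattern with the same margins, and a Stone--Weierstrass $\tfrac14$-approximation on the compact set $K$ finishes the job. This is a valid existence proof.

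The paper proceeds differently and, contrary to your concluding remark, it \emph{does} use a purely algebraic combination of the $h_i$'s. After obtaining the separators $h_1,\ldots,h_b$, it invokes an auxiliary Lemma~\ref{lemma:2} which produces, for suitable $0<c<d$, an explicit polynomial
\[
\hat h(w_1,\ldots,w_b)=\Bigl(\tfrac{d}{2}\Bigr)^{2k}-\sum_{i=1}^b\Bigl(w_i-\tfrac{c+d}{2}\Bigr)^{2k}
\]
with the property that $\hat h>0$ on the cube $[c,d]^b$ and $\hat h(w)>0\Rightarrow w_i>0$ for all $i$. Choosing $c,d$ so that $[c,d]$ contains the range of every $h_i$ on $S_0$, one sets $h_0'(\xx)=\hat h(h_1(\xx),\ldots,h_b(\xx))$, obtaining a genuine polynomial that is positive on $S_0$ and nonpositive on each $S_i$; a final $\epsilon$-shift makes the second inequality strict. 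So the paper's combination is neither a sum nor a product but a carefully chosen ``superellipsoid'' polynomial that acts as a smooth surrogate for the min; this is exactly the kind of algebraic recipe you declared impossible. What your approach buys is brevity and conceptual transparency, at the price of being entirely non-constructive. What the paper's approach buys is an explicit closed form for $h_0$ in terms of the $h_i$'s, with a concrete degree bound depending only on $b$, $c$, $d$; this is more in keeping with the computational spirit of the later SDP sections, even though at this stage only existence is required.
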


In order to prove this lemma, we prove the following lemma first.

\begin{lemma} \label{lemma:2}
  Let $c,d \in \mathbb{R}$ with $0<c<d$ and $U_0=[c,d]^r$.
  There exists a polynomial $\hat{h}(\xx)$ such that
  \begin{equation}
  \label{11}
    \xx\in U_0 \models \hat{h}(\xx) >0 \models \bigwedge_{i=1}^{r} x_i > 0,
  \end{equation}
where $\xx=(x_1,\ldots,x_r)$.
\end{lemma}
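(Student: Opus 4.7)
The plan is to build $\hat{h}$ explicitly as a high-degree ``$\ell^{2k}$-ball'' polynomial centered at the midpoint of $U_0$. Geometrically, $U_0 = [c,d]^r$ is a compact hypercube strictly inside the positive orthant: its center is $m := (c+d)/2 > 0$, the coordinate-wise half-width is $(d-c)/2$, and the distance from the center to the boundary $\{x_i = 0\}$ of the positive orthant is exactly $m$. Since $c>0$, we have $(d-c)/2 < m$, so there is ``room'' between $U_0$ and $\partial\RR_{>0}^r$. The idea is to fill that room with the positivity set of a polynomial whose level sets tend to a hypercube as the degree grows.

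Concretely, I would set
\[
\hat{h}(\xx) \;:=\; m^{2k} \;-\; \sum_{i=1}^{r} (x_i - m)^{2k},
\]
with an even exponent $2k$ to be fixed. For the second implication $\hat{h}(\xx) > 0 \models \bigwedge_{i=1}^r x_i > 0$: if $\hat{h}(\xx) > 0$, then for each $i$, $(x_i - m)^{2k} \le \sum_{j=1}^r (x_j - m)^{2k} < m^{2k}$, hence $|x_i - m| < m$, giving $0 < x_i < 2m$ as required; this holds independently of $k$. For the first implication $\xx \in U_0 \models \hat{h}(\xx) > 0$: on $U_0$, each $|x_i - m| \le (d-c)/2$, so $\sum_{i=1}^r (x_i - m)^{2k} \le r\bigl((d-c)/2\bigr)^{2k}$, and positivity of $\hat{h}$ on $U_0$ reduces to the inequality $r\bigl((d-c)/2\bigr)^{2k} < m^{2k}$, equivalently $r^{1/(2k)} < (c+d)/(d-c)$.

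Finally, I would observe that $(c+d)/(d-c) > 1$ because $c > 0$, while $r^{1/(2k)} \to 1$ as $k \to \infty$, so any sufficiently large $k \in \NN$ satisfies the required inequality, and the lemma follows for that choice of $\hat{h}$. There is no serious obstacle here: once the ansatz is in place, the only thing to check is the elementary scalar inequality governing $k$, and the coordinate-wise separation of the sum into individual terms $(x_i - m)^{2k}$ makes the second implication automatic.
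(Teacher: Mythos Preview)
Your proof is correct and follows essentially the same approach as the paper: the paper also takes
\[
\hat{h}(\xx)=\Bigl(\tfrac{d}{2}\Bigr)^{2k}-\sum_{i=1}^{r}\Bigl(x_i-\tfrac{c+d}{2}\Bigr)^{2k},
\]
proves the second implication coordinate-wise, and reduces the first implication to $(d/(d-c))^{2k}>r$, which holds for large $k$ since $d/(d-c)>1$. The only (inessential) difference is your choice of leading constant $m^{2k}=((c+d)/2)^{2k}$ in place of the paper's $(d/2)^{2k}$; both choices work and the arguments are otherwise identical.
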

\begin{proof}[of Lemma~\ref{lemma:2}]
 We show that there exists $k \in \mathbb{N}$ such that
  $$\hat{h}(\xx)=(\frac{d}{2})^{2k}-(x_1-\frac{c+d}{2})^{2k}-\cdots-(x_r-\frac{c+d}{2})^{2k}$$ satisfies \eqref{11}.
It is evident that $\hat{h}(\xx)>0 \models \bigwedge_{i=1}^r x_i>0$ holds.  In the following we just need to verify that $\bigwedge_{i=1}^r c\leq x_i \le d \models  \hat{h}(\xx)>0$ holds.
  Since $c\le x_i \le d$, we have $(x_i-\frac{c+d}{2})^{2k} \le (\frac{d-c}{2})^{2k}$ and $$(\frac{d}{2})^{2k}-\sum_{i=1}^{r}(x_i-\frac{c+d}{2})^{2k}\geq (\frac{d}{2})^{2k}-r(\frac{d-c}{2})^{2k}.$$
Obviously, if an interger $k$ satisfies $(\frac{d}{d-c})^{2k}> r$, then
$(\frac{d}{2})^{2k}-\sum_{i=1}^{r}(x_i-\frac{c+d}{2})^{2k}>0$. The existence of such $k$ satisfying $(\frac{d}{d-c})^{2k}> r$ is assured by  $\frac{d}{d-c}> 1$.
\end{proof}

Now we give a proof of Lemma \ref{lemma:1} as follows.
\begin{proof}[of Lemma~\ref{lemma:1}]
 For any $i$ with $1\le i\le b$, according to Lemma \ref{the:scase}, there exists a
  polynomial $h_i\in \RR[\xx]$, satisfying
   \begin{align*}
    \forall \xx \in S_0,~~~ h_i(\xx)>0, \\
    \forall \xx \in S_i,~~~ h_i(\xx)<0.
  \end{align*}

Next, we construct $h_0(\xx)\in \mathbb{R}[\xx]$ from $h_1(\xx), \ldots, h_b(\xx)$. Since $S_0$ is a semi-algebraic set of the Archimedean form, $S_0$ is compact and thus $h_i(\xx)$ has minimum value and maximum value on $S_0$, denoted by $c_i$ and $d_i$. Let $c=\min(c_1, \ldots, c_b)$ and $d=\max(d_1, \ldots, d_b)$. It is evident that $0<c<d$.

  From Lemma \ref{lemma:2} there must exist a polynomial $\hat{h}(w_1,\ldots,w_b)$ such that
  \begin{align}
    \bigwedge_{i=1}^b c\le w_i \le d \models \hat{h}(w_1,\ldots,w_b)>0, \label{for:l1:3} \\
    \hat{h}(w_1,\ldots,w_b)>0 \models \bigwedge_{i=1}^b w_i>0. \label{for:l1:4}
  \end{align}
  Let $h'_0(\xx)=\hat{h}(h_1(\xx), \ldots,h_b(\xx))$. Obviously, $h'_0(\xx)\in \mathbb{R}[\xx]$. We next prove that $h'_0(\xx)$ satisfies \eqref{cond:l1} in Lemma \ref{lemma:1}.

  For all $\xx_0\in S_0$,  $c \leq h_i(\xx_0) \leq d$, $i=1,\ldots,b$,  $h'_0(\xx_0)=\hat{h}(h_1(\xx_0), \ldots,h_b(\xx_0))$ $ >0$ from formula (\ref{for:l1:3}). Therefore, the first constraint in \eqref{cond:l1}, i.e. $\forall \xx_0\in S_0, h_0(\xx_0)$ $>0$, holds.

  For any $\xx_0\in \bigcup_{i=1}^{b} S_i$, there must exist some $i$ such that $\xx_0 \in S_i$, implying that $h_i(\xx_0)<0$. From formula (\ref{for:l1:4}) we have $h'_0(\xx_0)=\hat{h}(h_1(\xx_0), \ldots,h_b(\xx_0)) \leq 0$.

  Thus, we obtain the conclusion that there exists a polynomial $h'_0(\xx)$ such that
  \begin{align*}
    \forall \xx \in S_0, ~~~h'_0(\xx)>0, \\
    \forall \xx \in \bigcup_{i=1}^{b}S_i,~~~ h'_0(\xx)\le 0.
  \end{align*}
 Also, since $S_0$ is a compact set, and $h'_0(\xx)>0$ on $S_0$, there must exist some positive number $\epsilon >0$
  such that $h'_0(\xx)-\epsilon>0$ over $S_0$. Then $h'_0(\xx)-\epsilon < 0$ on $\bigcup_{i=1}^{b}S_i$.
  Therefore, setting $h_0(\xx):=h'_0(\xx)-\epsilon$, the conclusion in Lemma \ref{lemma:1} is proved.
\end{proof}

In Lemma \ref{lemma:1} we proved that there exists a polynomial $h(\xx) \in \mathbb{R}[\xx]$ such that its zero level set is a barrier between two semi-algebraic sets of the Archimedean form, of which one set is a union of finitely many basic semi-algebraic sets. In the following we will give a formal proof of Theorem \ref{the:main}, which is a generalization of Lemma \ref{lemma:1} by considering the situation that two compact semi-algebraic sets both are unions of finitely many basic semi-algebraic sets.
\begin{proof}[Proof of Theorem~\ref{the:main}]
  According to Lemma \ref{the:forproj} we have that $P_{\xx}(\phi(\xx,\yy))$ and $P_{\xx}(\psi(\xx,\zz))$ are compact sets, and there respectively exists a set of polynomials
  $p_{i,j}(\xx)\in \mathbb{R}[\xx]$, $i=1,\ldots,a$, $j=1,\ldots,J_i$, and $q_{l,k}(\xx)\in \mathbb{R}[\xx]$, $l=1,\ldots,b$, $k=1,\ldots,K_i$,
  such that
  \begin{align*}
    P_{\xx}(\phi(\xx,\yy))=\{\xx\mid\bigvee_{i=1}^{a} \bigwedge_{j=1}^{J_i} p_{i,j}(\xx) \geq 0\}, \\
    P_{\xx}(\psi(\xx,\zz))=\{\xx\mid\bigvee_{l=1}^{b} \bigwedge_{k=1}^{K_l} q_{l,k}(\xx) \geq 0\}.
  \end{align*}
Since $P_{\xx}(\phi(\xx,\yy))$ and $P_{\xx}(\psi(\xx,\zz))$ are compact sets, there exists a positive $N\in \mathbb{R}$ such that $f=N-\sum_{i=1}^rx_i^2\ge0$ over $P_{\xx}(\phi(\xx,\yy))$ and $P_{\xx}(\psi(\xx,\zz))$. For each $i=1,\ldots,a$ and each $l=1,\ldots,b$, set $p_{i,0}=q_{l,0}=f$. Denote
$$\{\xx\mid\bigvee_{i=1}^{a} \bigwedge_{j=0}^{J_i} p_{i,j}(\xx) \geq 0\}=\bigcup_{i=1}^a\{\xx\mid \bigwedge_{j=0}^{J_i} p_{i,j}(\xx) \geq 0\}$$ by $P_1$ and
$$\{\xx\mid\bigvee_{l=1}^{b} \bigwedge_{k=0}^{K_l} q_{l,k}(\xx) \geq 0\}=\bigcup_{l=1}^b \{\xx\mid\bigwedge_{k=0}^{K_l} q_{l,k}(\xx) \geq 0\}$$ by $P_2$. It is easy to see that $P_1=P_{\xx}(\phi(\xx,\yy)$, $P_2=P_{\xx}(\psi(\xx,\zz))$.


 Since $\phi \wedge \psi \models \bot$, there does not exist
  $(\xx,\yy,\zz)\in \mathbb{R}^{r+s+t}$ that satisfies $\phi \wedge \psi$, implying that $P_{\xx}(\phi(\xx,\yy)) \cap P_{\xx}(\psi(\xx,\zz))=\emptyset$ and thus $P_1\cap P_2=\emptyset$. Also, since $\{ \xx\mid \bigwedge_{j=0}^{J_{i_1}} p_{i_1,j}(\xx) \geq 0\}\subseteq P_1$, for each $i_1=1,\ldots,a$,
  $$\{\xx\mid \bigwedge_{j=0}^{J_{i_1}} p_{i_1,j}(\xx) \geq 0\} \cap P_2=\emptyset$$ holds. From Lemma \ref{lemma:1} there exists $h_{i_1}(\xx)\in\mathbb{R}[\xx]$ such that
  \begin{align*}
    &\forall \xx \in \{ \xx\mid \bigwedge_{j=0}^{J_{i_1}} p_{i_1,j}(\xx) \geq 0\} \Rightarrow h_{i_1}(\xx)>0, \\
    &\forall \xx \in P_2 \Rightarrow h_{i_1}(\xx)<0.
  \end{align*}


  Let
  \begin{align*}
    S'=&\{\xx\mid -h_1(\xx)\ge 0, \ldots, -h_a(\xx)\ge 0, N-\sum_{i=1}^rx_i^2\geq 0\}.
  \end{align*}
Obviously, $S'$ is a semialgebraic set of the Archimedean form,
  $P_2\subset S'$ and
  $P_1 \cap S'=\emptyset.$  Therefore, according to Lemma \ref{the:scase}, there exists a polynomial $\overline{h}(\xx)\in \mathbb{R}[\xx]$ such that
  \begin{align*}
    &\forall \xx \in S',~~ \overline{h}(\xx) >0, \\
    &\forall \xx \in P_1, ~~\overline{h}(\xx)<0.
  \end{align*}
  Let $h(\xx)=-\overline{h}(\xx)$, then we have
  \begin{align*}
    &\forall \xx \in P_1, ~~h(\xx)>0, \\
    &\forall \xx \in P_2, ~~h(\xx)<0,
  \end{align*}
implying that
  \begin{align*}
  \forall \xx \in P_{\xx}(\phi(\xx,\yy)), h(\xx) > 0,  \\
  \forall \xx \in P_{\xx}(\psi(\xx,\zz)), h(\xx) < 0.
\end{align*}
Thus, we have proved Theorem \ref{the:main}.
\end{proof}

Consequently, we immediately have the following conclusion.
\begin{corollary}
\label{conse}
  Let $\phi(\xx,\yy)$ and $\psi(\xx,\zz)$ be defined as in \textbf{Problem} 1. There must exist a polynomial $h(\xx)\in\RR[\xx]$ such that $h(\xx)>0$ is an interpolant for $\phi$ and $\psi$.
\end{corollary}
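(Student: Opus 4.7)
The plan is to derive this corollary as an immediate synthesis of two results already established in the preceding development: Theorem \ref{the:main}, which produces a polynomial $h(\xx)$ strictly positive on $P_{\xx}(\phi(\xx,\yy))$ and strictly negative on $P_{\xx}(\psi(\xx,\zz))$, and Proposition \ref{prop:1}, which promotes any such separating polynomial to an interpolant. So the proof strategy is essentially a two-line reduction: first invoke Theorem \ref{the:main} to obtain a polynomial $h(\xx)\in\RR[\xx]$ satisfying the sign conditions \eqref{formu:h}; then feed this $h(\xx)$ into Proposition \ref{prop:1} to conclude that $h(\xx)>0$ satisfies the three clauses of Definition \ref{interpolant}, namely $\phi\models h(\xx)>0$, $h(\xx)>0\wedge\psi\models\bot$, and the fact that $h(\xx)$ mentions only the common variables $\xx$.

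The main thing I would check is that the hypotheses of Problem \ref{problem:1} exactly match what Theorem \ref{the:main} and Proposition \ref{prop:1} require, which they do by construction: the Archimedean hypothesis on $\MM_{\xx,\yy}\{f_1,\ldots,f_m\}$ and $\MM_{\xx,\zz}\{g_1,\ldots,g_n\}$ is precisely what Theorem \ref{the:main} assumes (through Lemma \ref{the:forproj} yielding compact semi-algebraic projections), and the unsatisfiability $\phi\wedge\psi\models\bot$ is what guarantees disjointness of the projections. Since the variable-symbol requirement in clause (iii) of Definition \ref{interpolant} is built in — $h$ is a polynomial in $\xx$ alone — nothing further needs to be verified. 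There is no real obstacle here; the corollary is a packaging statement, and all the genuine mathematical content (the use of Putinar's Positivstellensatz, the finiteness decomposition via Theorem \ref{the:finiteness}, and the constructions in Lemmas \ref{the:scase}, \ref{lemma:2}, and \ref{lemma:1}) has already been consumed in proving Theorem \ref{the:main}.
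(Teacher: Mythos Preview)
Your proposal is correct and matches the paper's approach exactly: the paper presents Corollary~\ref{conse} with the remark ``Consequently, we immediately have the following conclusion,'' treating it as the evident combination of Theorem~\ref{the:main} and Proposition~\ref{prop:1}, just as you outline.
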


Actually, since $P_{\xx}(\phi(\xx,\yy))$ and $P_{\xx}(\psi(\xx,\zz))$ both are compact set from
Lemma \ref{the:finiteness}, and $h(\xx)>0$ on $P_{\xx}(\phi(\xx,\yy))$, $h(\xx)<0$ on
$P_{\xx}(\psi(\xx,\zz))$, for a small perturbation of the coefficients of $h(\xx)$ to obtain $h'(\xx)$, $h'(\xx)$ should also has the property as $h(\xx)$. Thus, there should exists a $h(\xx) \in \mathbb{Q}[\xx]$ such that $h(\xx)>0$ is an interpolant for $\phi$ and $\psi$, intuitively.
We show this in the following theorem.

\begin{theorem}
  \label{rational-int}
  Let $\phi(\xx,\yy)$ and $\psi(\xx,\zz)$ be defined as in \textbf{Problem} 1. There must exist a polynomial $h(\xx)\in\mathbb{Q}[\xx]$ such that $h(\xx)>0$ is an interpolant for $\phi$ and $\psi$.
\end{theorem}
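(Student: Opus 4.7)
The plan is to derive Theorem \ref{rational-int} as a short corollary of Theorem \ref{the:main} (equivalently Corollary \ref{conse}) by a small-perturbation argument, relying on the compactness of the two projections and the density of $\mathbb{Q}$ in $\RR$. The intuition, already sketched in the paragraph preceding the theorem, is that the separating conditions $h(\xx)>0$ on $P_{\xx}(\phi(\xx,\yy))$ and $h(\xx)<0$ on $P_{\xx}(\psi(\xx,\zz))$ are both strict and the two sets are compact, so the conditions are robust under sufficiently small coefficient perturbations.

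Concretely, I would first invoke Theorem \ref{the:main} to obtain some $h\in\RR[\xx]$ satisfying \eqref{formu:h}. By Lemma \ref{the:forproj} the projections $P_{\xx}(\phi(\xx,\yy))$ and $P_{\xx}(\psi(\xx,\zz))$ are compact, so the continuous function $h$ attains a positive minimum $\mu_1$ on the first and a negative maximum $-\mu_2$ on the second; set $\mu:=\min(\mu_1,\mu_2)>0$. Using the Archimedean hypothesis, both projections lie in some closed box $K:=[-R,R]^r$. Writing $h(\xx)=\sum_{\alpha} c_\alpha\xx^\alpha$ as a finite sum over multi-indices, the quantity $M:=\sum_\alpha \max_{\xx\in K}|\xx^\alpha|$ is a finite positive real, and this is the only use of compactness in the perturbation step.

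Next, using density of $\mathbb{Q}$ in $\RR$, for each real coefficient $c_\alpha$ I pick a rational approximation $\tilde c_\alpha\in\mathbb{Q}$ with $|c_\alpha-\tilde c_\alpha|<\mu/(2M)$, and define $\tilde h(\xx):=\sum_\alpha \tilde c_\alpha\xx^\alpha\in\mathbb{Q}[\xx]$. A one-line triangle-inequality estimate then yields $|h(\xx)-\tilde h(\xx)|<\mu/2$ for every $\xx\in K$. In particular, $\tilde h>\mu/2>0$ on $P_{\xx}(\phi(\xx,\yy))$ and $\tilde h<-\mu/2<0$ on $P_{\xx}(\psi(\xx,\zz))$, so $\tilde h$ also satisfies \eqref{formu:h}. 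Invoking Proposition \ref{prop:1} yields that $\tilde h(\xx)>0$ is an interpolant for $\phi$ and $\psi$ with $\tilde h\in\mathbb{Q}[\xx]$, as required.

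The argument is essentially routine; no genuine obstacle arises once Theorem \ref{the:main} and the compactness of the projections are in hand. The only detail to be careful about is that the coefficient-perturbation error be controlled \emph{uniformly} over a compact superset of both projections, which is precisely why the Archimedean assumption (ensuring compactness of $K$) is used one more time in this step.
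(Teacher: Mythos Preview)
Your proposal is correct and follows essentially the same approach as the paper's own proof: invoke Theorem~\ref{the:main} to obtain a real separating polynomial, use compactness of the two projections to extract a positive margin, bound the monomials on a compact set, and then perturb the coefficients to rationals by density of $\mathbb{Q}$ in $\RR$. The only cosmetic differences are that the paper bounds each monomial individually on $P_{\xx}(\phi)\cup P_{\xx}(\psi)$ rather than on an enclosing box $K$, and organizes the error term as $\eta/(N M_\alpha)$ per coefficient instead of your aggregate $\mu/(2M)$; neither variation affects the argument.
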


\begin{proof}[of Theorem~\ref{rational-int}]
  We just need to prove there exists a polynomial $h(\xx)\in\mathbb{Q}[\xx]$ satisfying formula (\ref{formu:h}).

  From Theorem \ref{the:main}, there exists a polynomial $h'(\xx)\in\RR[\xx]$ satisfying formula (\ref{formu:h}). Since $P_{\xx}(\phi(\xx,\yy))$ and $P_{\xx}(\psi(\xx,\zz))$ are compact sets, $h'(\xx)>0$ on $P_{\xx}(\phi(\xx,\yy))$ and $h'(\xx)<0$ on
$P_{\xx}(\psi(\xx,\zz))$, there exist $\eta_1>0$ and $\eta_2>0$ such that
\begin{align*}
 \forall \xx \in P_{\xx}(\phi(\xx,\yy)), h'(\xx)-\eta_1 \ge 0,~
 \forall \xx \in P_{\xx}(\psi(\xx,\zz)), h'(\xx)+\eta_2 \le 0.
\end{align*}
Let $\eta=\min(\frac{\eta_1}{2},\frac{\eta_2}{2})$.
Suppose $h'(\xx)\in\RR[\xx]$ has the following form
\begin{align*}
  h'(\xx)=\sum_{\alpha\in\Omega} c_{\alpha}\xx^{\alpha},
\end{align*}
where $\alpha\in \NN^r$, $\Omega \subset \NN^r$ is a finite set of indices,  $r$ is the dimension of $\xx$, $\xx^{\alpha}$ is the monomial $\xx_1^{\alpha_1}\cdots\xx_r^{\alpha_r}$, and $0\neq c_{\alpha}\in\RR$ is the coefficient of monomial $\xx^{\alpha}$. Let $N=|\Omega|$ be the cardinality of $\Omega$. Since $P_{\xx}(\phi(\xx,\yy))$ and $P_{\xx}(\psi(\xx,\zz))$ are compact sets, for any $\alpha\in\Omega$, there exists $M_{\alpha}>0$ such that
\begin{align*}
  M_{\alpha}=\max\{|\xx^{\alpha}| \mid \xx\in P_{\xx}(\phi(\xx,\yy)) \cup P_{\xx}(\psi(\xx,\zz))\}.
\end{align*}
Then for any fixed polynomial
\begin{align*}
  \hat{h}(\xx)=\sum_{\alpha\in\Omega} d_{\alpha}\xx^{\alpha},
\end{align*}
with $d_{\alpha}\in [c_{\alpha}-\frac{\eta}{NM_{\alpha}},c_{\alpha}+\frac{\eta}{NM_{\alpha}}]$,
and any $\xx \in P_{\xx}(\phi(\xx,\yy)) \cup P_{\xx}(\psi(\xx,\zz))$, we have
{\small
\begin{align*}
  |\hat{h}(\xx)-h'(\xx)|&=|\sum_{\alpha\in\Omega} (d_{\alpha}-c_{\alpha})\xx^{\alpha}|
  \le \sum_{\alpha\in\Omega} |(d_{\alpha}-c_{\alpha})|\cdot|\xx^{\alpha}|
  \le \sum_{\alpha\in\Omega}\frac{\eta}{NM_{\alpha}} \cdot M_{\alpha}=\eta.
\end{align*}}
Since $\eta = \min(\frac{\eta_1}{2},\frac{\eta_2}{2})$, hence
\begin{equation}
\label{formu:hhh}
\begin{split}
 &\forall \xx \in P_{\xx}(\phi(\xx,\yy)), \hat{h}(\xx)\ge\frac{\eta_1}{2}  > 0,\\
 &\forall \xx \in P_{\xx}(\psi(\xx,\zz)), \hat{h}(\xx) \le-\frac{\eta_2}{2} < 0.
\end{split}
\end{equation}
Since for any $d_\alpha \in [c_{\alpha}-\frac{\eta}{NM_{\alpha}},c_{\alpha}+\frac{\eta}{NM_{\alpha}}]$ the above formula
(\ref{formu:hhh}) holds, there must exist some rational number $r_{\alpha}\in\mathbb{Q}$ in
$[c_{\alpha}-\frac{\eta}{NM_{\alpha}},c_{\alpha}+\frac{\eta}{NM_{\alpha}}]$ satisfying (\ref{formu:hhh}) because of
the density of rational numbers. Thus,  let
 \[ h(\xx)=\sum_{\alpha\in\Omega} r_{\alpha}\xx^{\alpha}.\]
Clearly, it follows that $h(\xx)\in\mathbb{Q}[\xx]$ and formula (\ref{formu:h}) holds.
\end{proof}

Therefore, we proved the existence of $h(\xx)\in\mathbb{Q}[\xx]$. Moreover, from the proof of Theorem \ref{rational-int},
 we know that a small perturbation of $h(\xx)$ is permitted, which is a good property for computing $h(\xx)$ in a numeric way. In the subsequent subsection, we recast the problem of finding such $h(\xx)$ as a semi-definite programming problem.

\section{ SOS Formulation}
\label{sec:ssc}
Similar as in \cite{DXZ13}, in this section, we discuss how to reduce the problem of finding $h(\xx)$ satisfying \eqref{formu:h} to a sum of squares
programming problem, which falls within the convex programming framework, and therefore can be solved by interior-point methods efficiently.

\begin{theorem}
\label{the:main2}
  Let $\phi(\xx,\yy)$ and $\psi(\xx,\zz)$ be defined as in the \textbf{Problem} 1. Then there exist $m+n+2$ SOS (sum of squares) polynomials $u_i(\xx,\yy)~ (i=1,\ldots,m+1),$ $v_j(\xx,\zz)~ (j=1,\ldots,n+1)$ and a polynomial $h(\xx)$ such that
  \begin{align}
  \label{formula:main}
   &h-1=\sum_{i=1}^m u_if_i+u_{m+1},\\
   &-h-1=\sum_{j=1}^nv_jg_j+ v_{n+1},
  \end{align}
and $h(\xx)>0$ is an interpolant for $\phi(\xx,\yy)$ and $\psi(\xx,\zz)$.
\end{theorem}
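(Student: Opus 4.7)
The plan is to combine the existence result already established (Theorem \ref{the:main} and Corollary \ref{conse}) with a direct application of Putinar's Positivstellensatz on each side of the contradiction, after rescaling so that the constant $1$ fits on the correct side of the inequality.

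First, I would invoke Theorem \ref{the:main} to obtain a polynomial $\tilde{h}(\xx)\in\RR[\xx]$ satisfying $\tilde{h}>0$ on $P_{\xx}(\phi(\xx,\yy))$ and $\tilde{h}<0$ on $P_{\xx}(\psi(\xx,\zz))$. By Lemma \ref{the:forproj} both projections are compact, so $\tilde{h}$ attains a strictly positive minimum $m_1$ on $P_{\xx}(\phi)$ and $-\tilde{h}$ a strictly positive minimum $m_2$ on $P_{\xx}(\psi)$. Choosing any constant $C>\max(1/m_1,\,1/m_2)$ and setting $h:=C\tilde{h}\in\RR[\xx]$, I obtain $h-1>0$ on $P_{\xx}(\phi)$ and $-h-1>0$ on $P_{\xx}(\psi)$. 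Because $h$ depends only on $\xx$, this lifts without effort to the ``full'' sets: $h(\xx)-1>0$ on $S_\phi:=\{(\xx,\yy)\mid f_1\ge0\wedge\cdots\wedge f_m\ge0\}$, and $-h(\xx)-1>0$ on $S_\psi:=\{(\xx,\zz)\mid g_1\ge0\wedge\cdots\wedge g_n\ge0\}$.

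Next I would apply Putinar's Positivstellensatz (Theorem \ref{the:putinar}) on each side. By the hypothesis of \textbf{Problem} \ref{problem:1}, $\MM_{\xx,\yy}(f_1,\ldots,f_m)$ is Archimedean, so strict positivity of $h-1$ on $S_\phi$ gives $h-1\in\MM_{\xx,\yy}(f_1,\ldots,f_m)$, i.e., by definition of the quadratic module, there exist SOS polynomials $u_1,\ldots,u_m,u_{m+1}\in\RR[\xx,\yy]$ with $h-1=\sum_{i=1}^m u_if_i+u_{m+1}$. Symmetrically, the Archimedean property of $\MM_{\xx,\zz}(g_1,\ldots,g_n)$ together with $-h-1>0$ on $S_\psi$ produces SOS polynomials $v_1,\ldots,v_n,v_{n+1}\in\RR[\xx,\zz]$ with $-h-1=\sum_{j=1}^n v_jg_j+v_{n+1}$.

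Finally, to check that $h(\xx)>0$ is an interpolant, for any $(\xx_0,\yy_0)$ satisfying $\phi$ the first identity gives $h(\xx_0)=1+u_{m+1}(\xx_0,\yy_0)+\sum_{i=1}^m u_i(\xx_0,\yy_0)f_i(\xx_0,\yy_0)\ge 1>0$, while for any $(\xx_0,\zz_0)$ satisfying $\psi$ the second gives $h(\xx_0)\le-1$, so $h(\xx_0)>0$ fails and $h(\xx)>0\wedge\psi\models\bot$; since $h\in\RR[\xx]$, it contains only common symbols, so all three clauses of Definition \ref{interpolant} hold. The main obstacle I expect is the rescaling step: one must be sure that multiplying $\tilde{h}$ by a constant preserves its polynomial status and that positivity on the projections lifts cleanly to positivity on $S_\phi$ and $S_\psi$; but since $h$ involves only $\xx$, the value $h(\xx_0)$ is unchanged by the choice of $\yy_0$ (or $\zz_0$), so the lifting is automatic. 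All the substantive work is already packaged inside Theorem \ref{the:main} and Putinar's theorem.
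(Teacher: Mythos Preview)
Your proposal is correct and follows essentially the same route as the paper: invoke Theorem~\ref{the:main}, rescale the separating polynomial so that $h-1>0$ on the $\phi$-side and $-h-1>0$ on the $\psi$-side, and then apply Putinar's Positivstellensatz on each Archimedean quadratic module. The only cosmetic difference is that the paper uses compactness of the full sets $S_\phi,S_\psi$ directly to obtain the margins $\epsilon_1,\epsilon_2$ (and sets $h=\hat h/\min(\epsilon_1,\epsilon_2)$), whereas you extract the minima on the compact projections and then lift; both are equivalent since $h$ depends only on $\xx$.
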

\begin{proof}[of Theorem~\ref{the:main2}]

  From Theorem \ref{the:main} there exists a polynomial $\hat{h}(\xx)$ such that
  \begin{align*}
    \forall \xx \in P_{\xx}(\phi(\xx,\yy)), \hat{h}(\xx) > 0,  \\
    \forall \xx \in P_{\xx}(\psi(\xx,\zz)), \hat{h}(\xx) < 0.
  \end{align*}
  Set $S_1=\{(\xx,\yy)\mid f_1\ge 0, \ldots, f_m\ge 0\}$ and
  $S_2=\{(\xx,\zz)\mid g_1\ge 0, \ldots, g_n\ge 0\}$. Since $\hat{h}(\xx) > 0$ on $S_1$, which is compact, there exist $\epsilon_1>0$ such that $\hat{h}(\xx)-\epsilon_1 > 0$ on $S_1$.
  For the same reason, there exist $\epsilon_2>0$ such that $-\hat{h}(\xx)-\epsilon_2 > 0$ on $S_2$.
  Let $\epsilon=\min(\epsilon_1,\epsilon_2)$, and $h(\xx)=\frac{\hat{h}(\xx)}{\epsilon}$, then
  ${h}(\xx)-1 > 0$ on $S_1$ and $-{h}(\xx)-1 > 0$ on $S_2$.
  Since $\mathcal{M}_{\xx,\yy}(f_1(\xx,\yy),$ $\ldots,$ $f_m(\xx,\yy))$  is Archimedean, according to Theorem \ref{the:putinar}, we have
  $${h}(\xx) -1 \in \mathcal{M}_{\xx,\yy}(f_1(\xx,\yy),\ldots, f_m(\xx,\yy)). $$ Similarly,
  $$-{h}(\xx) -1 \in \MM_{\xx,\zz}(g_1(\xx,\zz),\ldots, g_n(\xx,\zz)). $$
  That is,  there exist $m+n+2$ SOS polynomials $u_i,v_j$ satisfying the following semi-definite constraints:
  \begin{align*}
   h(\xx)-1=\sum_{i=1}^m u_if_i+u_{m+1},\\
   -h(\xx)-1=\sum_{j=1}^nv_jg_j+ v_{n+1}.
  \end{align*}
\end{proof}

According to Theorem \ref{the:main2}, the problem of finding $h(\xx)\in \mathbb{R}[\xx]$ solving \textbf{Problem} 1 can be equivalently reformulated as the problem of searching for SOS polynomials $u_1(\xx,\yy),\ldots,u_{m}(\xx,\yy)$, $v_1(\xx,\zz),\ldots,v_{n}(\xx,\zz)$ and a polynomial $h(\xx)$ with  appropriate degrees such that
\begin{equation}
\label{sos} \left\{
\begin{split}
   &h(\xx)-1-\sum_{i=1}^m u_if_i \in \sum \mathbb{R}[\xx,\yy]^2,\\
   &-h(\xx)-1-\sum_{j=1}^nv_jg_j\in \sum \mathbb{R}[\xx,\zz]^2,\\
   &u_i \in \sum\mathbb{R}[\xx,\yy]^2, i=1,\ldots,m,\\
   &v_j \in \sum\mathbb{R}[\xx,\zz]^2, j=1,\ldots,n.
\end{split} \right.
\end{equation}
\eqref{sos} is SOS constraints over SOS multipliers $u_1(\xx,\yy),$ $\ldots,u_{m}(\xx,\yy)$, $v_1(\xx,\zz),$ \ldots, $v_{n}(\xx,\zz)$, polynomial $h(\xx)$, which is convex and could be solved by many existing semi-definite programming solvers such as the optimization library  AiSat \cite{DXZ13} built on CSDP \cite{CSDP}. Therefore, according to Theorem \ref{the:main2}, $h(\xx)>0 $ is an interpolant for $\phi$ and $\psi$, which is formulated in Theorem \ref{coro-inter}.
\begin{theorem}[Soundness]
\label{coro-inter}
Suppose that $\phi(\xx,\yy)$ and $\psi(\xx,\zz)$ are defined as in \textbf{Problem} \ref{problem:1}, and $h(\xx)$ is a feasible solution to \eqref{sos}, then $h(\xx)$ solves \textbf{Problem} \ref{problem:1}, i.e. $h(\xx)>0$ is an interpolant for $\phi$ and $\psi$.
\end{theorem}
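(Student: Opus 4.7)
The plan is to verify directly the three defining conditions of an interpolant (Definition \ref{interpolant}) from the SOS identities (\ref{sos}), without invoking Putinar's Positivstellensatz again. The variable-containment condition (iii) is immediate: by hypothesis the polynomial $h$ depends only on $\xx$, which is precisely the set of shared symbols of $\phi(\xx,\yy)$ and $\psi(\xx,\zz)$. So the real content is to establish (i) $\phi \models h(\xx) > 0$ and (ii) $h(\xx) > 0 \wedge \psi \models \bot$.

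For (i), I would take an arbitrary point $(\xx_0,\yy_0)$ satisfying $\phi(\xx_0,\yy_0)$, i.e., $f_i(\xx_0,\yy_0) \ge 0$ for $i=1,\ldots,m$, and evaluate the first identity from (\ref{sos}) at $(\xx_0,\yy_0)$. Since the SOS multipliers $u_i$ are nonnegative everywhere (being sums of squares), and the $f_i(\xx_0,\yy_0)$ are nonnegative by assumption, each product $u_i(\xx_0,\yy_0)\,f_i(\xx_0,\yy_0)$ is nonnegative; the remaining SOS remainder (the element of $\sum \mathbb{R}[\xx,\yy]^2$) is also nonnegative. Rewriting the first identity as
\begin{equation*}
h(\xx_0) \;=\; 1 \,+\, \sum_{i=1}^{m} u_i(\xx_0,\yy_0)\,f_i(\xx_0,\yy_0) \,+\, \sigma(\xx_0,\yy_0),
\end{equation*}
where $\sigma \in \sum \mathbb{R}[\xx,\yy]^2$, yields $h(\xx_0) \ge 1 > 0$, as required.

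For (ii), the argument is symmetric using the second identity in (\ref{sos}). Let $(\xx_0,\zz_0)$ satisfy $\psi(\xx_0,\zz_0)$, so $g_j(\xx_0,\zz_0) \ge 0$ for $j=1,\ldots,n$. Evaluating the identity gives
\begin{equation*}
-h(\xx_0) \;=\; 1 \,+\, \sum_{j=1}^{n} v_j(\xx_0,\zz_0)\,g_j(\xx_0,\zz_0) \,+\, \tau(\xx_0,\zz_0),
\end{equation*}
with $\tau \in \sum \mathbb{R}[\xx,\zz]^2$, so every term on the right is nonnegative and hence $-h(\xx_0) \ge 1$, i.e., $h(\xx_0) \le -1 < 0$. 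Consequently no point satisfying $\psi$ can satisfy $h(\xx) > 0$, establishing (ii). There is no genuine obstacle here — the difficulty has already been discharged in the constructive existence proof (Theorem \ref{the:main2}), and what remains is merely the easy ``sufficient direction'' of the SOS reformulation, obtained by pointwise evaluation plus the fact that sums of squares are nonnegative on $\RR^{r+s}$ and $\RR^{r+t}$ respectively.
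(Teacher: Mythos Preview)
Your proof is correct and is precisely the direct verification one expects: evaluate the SOS identities \eqref{sos} at an arbitrary model of $\phi$ (resp.\ $\psi$) and use nonnegativity of sums of squares and of the $f_i$ (resp.\ $g_j$) to conclude $h(\xx)\ge 1$ (resp.\ $h(\xx)\le -1$). The paper does not give a separate proof of Theorem~\ref{coro-inter}; it simply states it as an immediate consequence of Theorem~\ref{the:main2}, so your argument is exactly the content that the paper leaves implicit.
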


Moreover, we have the following completeness theorem stating that if the degrees of the polynomial $h(\xx)\in \mathbb{R}[\xx]$ and sum of squares polynomials $u_i(\xx,\yy)\in \sum \mathbb{R}[\xx,\yy]^2,v_j(\xx,\zz)\in \sum\mathbb{R}[\xx,\zz]^2$, $i=1,\ldots,m$, $j=1,\ldots,n$, are large enough, $h(\xx)$ can be synthesized definitely via solving \eqref{sos}.
\begin{theorem}[Completeness] \label{the:completeness}
For \textbf{Problem} \ref{problem:1}, there must be polynomials $u_i(\xx,\yy)\in \mathbb{R}_N[\xx,\yy]$ $(i=1,\ldots,m)$, $v_j(\xx,\zz)\in \mathbb{R}_N[\xx,\zz]$ $(j=1,\ldots,n)$ and $h(\xx)\in \mathbb{R}_N[\xx]$ satisfying (11) for some positive integer $N$, where $\mathbb{R}_k[\cdot]$ stands for the family of polynomials of degree no more than $k$.
\end{theorem}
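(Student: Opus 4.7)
The plan is to derive Theorem \ref{the:completeness} as an immediate corollary of Theorem \ref{the:main2}. That earlier theorem already did the real work: combining the existence of a separating polynomial $\hat h(\xx)$ from Theorem \ref{the:main} with Putinar's Positivstellensatz (Theorem \ref{the:putinar}), it produced concrete SOS polynomials $u_1,\ldots,u_{m+1}$ and $v_1,\ldots,v_{n+1}$ together with a polynomial $h(\xx)$ such that
\[
   h-1=\sum_{i=1}^m u_i f_i+u_{m+1},\qquad -h-1=\sum_{j=1}^n v_j g_j+v_{n+1}.
\]
This already furnishes a witness to the feasibility of the SOS constraint system (11).

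First, I would invoke Theorem \ref{the:main2} to obtain such a witness $(h,u_1,\ldots,u_{m+1},v_1,\ldots,v_{n+1})$. Each of these is a single polynomial in a fixed polynomial ring, so each has a well-defined finite total degree; denote them $d_h$, $d_{u_1},\ldots,d_{u_{m+1}}$ and $d_{v_1},\ldots,d_{v_{n+1}}$. Then setting
\[
   N:=\max\bigl\{d_h,\ d_{u_1},\ldots,d_{u_{m+1}},\ d_{v_1},\ldots,d_{v_{n+1}}\bigr\}
\]
places $h\in\mathbb{R}_N[\xx]$, every $u_i\in\mathbb{R}_N[\xx,\yy]$, and every $v_j\in\mathbb{R}_N[\xx,\zz]$, while preserving both the identities displayed above and the SOS property of the $u_i$'s and $v_j$'s. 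Hence these polynomials satisfy the full constraint system (11), establishing the claim.

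There is essentially no obstacle here beyond what has already been overcome: the Archimedean hypothesis on the quadratic modules $\MM_{\xx,\yy}(f_1,\ldots,f_m)$ and $\MM_{\xx,\zz}(g_1,\ldots,g_n)$ was used to invoke Putinar's Positivstellensatz inside the proof of Theorem \ref{the:main2}, which is precisely where all the analytical content lives. The completeness statement is then the observation that Putinar-type certificates are single polynomial identities and therefore have finite degree, so raising the degree cap of the SDP search eventually captures them. Note, however, that the argument is purely existential: it guarantees the existence of some sufficient $N$ but gives no explicit a priori upper bound for $N$ in terms of the input data $f_i$, $g_j$, which is typical of Positivstellensatz-based completeness results.
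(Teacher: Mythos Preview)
Your proposal is correct and follows exactly the paper's approach: the paper's own proof is the single sentence ``This is an immediate result of Theorem \ref{the:main2}.'' You have simply spelled out the obvious reason, namely that the finitely many polynomials produced by Theorem \ref{the:main2} each have finite degree, so any $N$ at least their maximum works.
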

\begin{proof}[of Theorem~\ref{the:completeness}]
This is an immediate result of Theorem \ref{the:main2}.
\end{proof}

\begin{example}
\label{ex3}

\begin{figure}[htbp]
  \centering

  \includegraphics[scale=0.6]{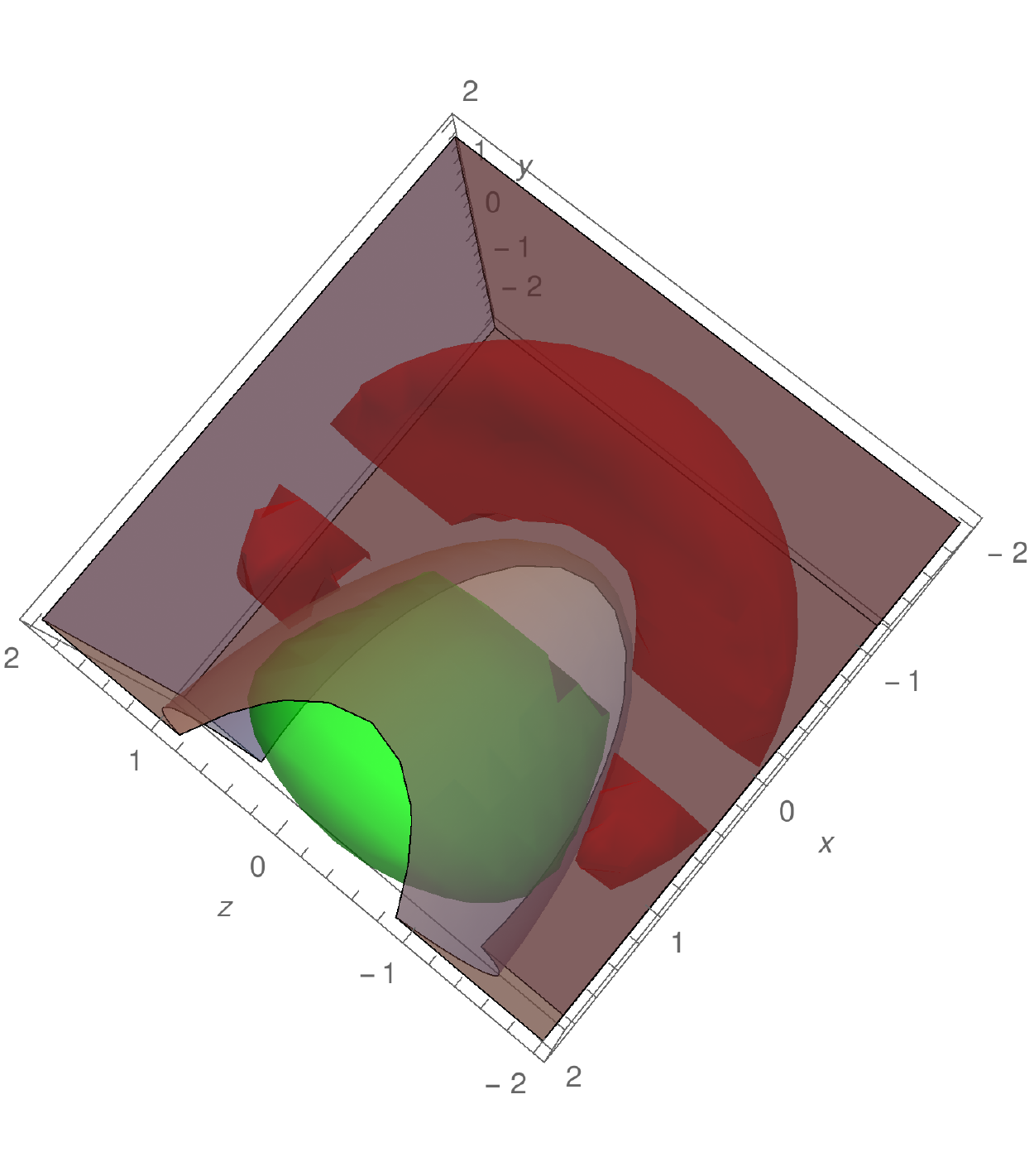}
  \caption{Example \ref{ex3}. \small{(Red region: $P_{x,y,z}(\phi(x,y,z,a_1,b_1,c_1,d_1))$; Green region:   $P_{x,y,z}(\psi(x,y,z,a_2,b_2,c_2,d_2))$; Gray region:  $\{(x,y,z)\mid h(x,y,z)>0\}$.)}}
  \label{fig-one-3}
\end{figure}

Consider two contradictory formulas $\phi$ and $\psi$ as follows:
$\phi(x,y,z$, $a_1$, $b_1,c_1,d_1):$
\begin{align*}
 &f_1(x,y,z,a_1,b_1,c_1,d_1)\geq 0 ~~\wedge\\ &f_2(x,y,z,a_1,b_1,c_1,d_1)\geq 0 ~~\wedge \\
 &f_3(x,y,z,a_1,b_1,c_1,d_1)\geq 0
\end{align*}
 and $\psi(x,y,z,a_2,b_2,c_2,d_2):$
\begin{align*}
 &g_1(x,y,z,a_2,b_2,c_2,d_2)\geq 0 ~~\wedge \\
  &g_2(x,y,z,a_2,b_2,c_2,d_2)\geq 0 ~~\wedge \\
  &g_3(x,y,z,a_2,b_2,c_2,d_2)\geq 0,
\end{align*}
where
  \begin{align*}
    &f_1=4-x^2-y^2-z^2-a_1^2-b_1^2-c_1^2-d_1^2,\\
    &f_2=-y^4+2x^4-a_1^4-1/100,\\
    &f_3=z^2-b_1^2-c_1^2-d_1^2-x-1,\\
    &g_1=4-x^2-y^2-z^2-a_2^2-b_2^2-c_2^2-d_2^2,\\
    &g_2=x^2-y-a_2-b_2-d_2^2-3,\\
    &g_3=x.
  \end{align*}
It is easy to observe that $\phi$ and $\psi$ satisfy the conditions in \textbf{Problem} 1.
Since there are local variables in $\phi$ and $\psi$ and the degree of $f_2$ is $4$, the interpolant generation methods in \cite{DXZ13} and \cite{GDX16} are not applicable. We get a concrete $\sdp$ problem of the form (\ref{sos}) by setting the degree of the polynomial $h(x,y,z)$ in \eqref{sos} to be $2$. Using the MATLAB package YALMIP\footnote{It can be downloaded from \url{https://yalmip.github.io/}.} \cite{lofberg2004} and Mosek\footnote{For academic use, the software Mosek can be obtained free from \url{https://www.mosek.com/}.} \cite{mosek2015mosek}, we obtain
\begin{align*}
&h(x,y,z)=-416.7204-914.7840x+472.6184y\\
&\quad+199.8985x^2+
190.2252y^2+690.4208z^2-
187.1592xy.
\end{align*}
Pictorially, we plot $P_{x,y,z}(\phi(x,y,z,a_1,b_1,c_1,d_1))$, $P_{x,y,z}(\psi(x,y,z,a_2,b_2,c_2,$ $d_2))$ and $\{(x,y,z)\mid h(x,y,z)>0\}$ in Fig. \ref{fig-one-3}. 
It is evident that $h(x,y,z)$ as presented above for $d_h=2$ is a real interpolant for $\phi(x,y,z,a,b,c,d)$ and $\psi(x,y,z,a,b,c,d)$.

\end{example}

\section{Avoidance of the unsoundness due to numerical error in SDP} \label{sec:numericalerror}
To the best of our knowledge, all the efficient SDP solvers are based on interior point method,
which is a numerical method. Thus, the numerical error is inevitable in our approach.
In this section, we discuss how to avoid the unsoundness of our approach
caused by numerical error in SDP
 based on the work in \cite{RVS16}.

We say a square matrix $A$ is positive semidefinite is $A$ is real symmetric and all eigenvalues of $A$ are $\ge 0$, denote
$A \succeq 0$ for $A$ is positive semidefinite.

In order to solve formula (\ref{sos}) to obtain $h(\xx)$, we first need to fix a degree
bound of $u_i$, $v_j$ and $h$, say $2d$, $d\in\NN$.
It is well-known that any  $u(\xx) \in\sum \RR[\xx]^2$ with degree $2d$
can be represented by 
\begin{align}
\label{polytomatrix}
u(\xx) \equiv E_d(\xx)^T C_u E_d(\xx),
\end{align}
where $C_u\in\RR^{\binom{r+d}{d}\times\binom{r+d}{d}}$ with
$C_u\succeq 0$, $E_d(\xx)$ is a column vector with all monomials
in $\xx$, whose total degree is not greater  than $d$, and $E_d(\xx)^T$
stands for the transposition of $E_d(\xx)$.
Equaling the corresponding coefficient of each monomial whose degree
is less than or equal to $2d$ at the two sides of
 (\ref{polytomatrix}), we can get a  linear equation system of the form
\begin{align}
  \label{eqsforu}
\mathtt{tr}(A_{u,k}C_u)=b_{u,k},~ k=1,\ldots,K_u,
\end{align}
where $A_{u,k}\in\RR^{\binom{r+d}{d}\times\binom{r+d}{d}}$ is constant matrix, $b_{u,k}\in\RR$ is constant,
$\mathtt{tr}(A)$ stands for  the trace of matrix $A$.
Thus, searching for $u_i$, $v_j$ and $h$ satisfying (\ref{sos})
can be reduced to the following SDP problem:
\oomit{there are matrixes $C_{u_i}$, $C_{v_j}$, $C_h$
and linear equations corresponding to them, respectively.
Then the formula (\ref{sos}) can be translated to a SDP problem as:}
\begin{equation}
\label{sdp}
\begin{split}
   \mathtt{find} :~ &C_{u_1},\ldots,C_{u_m},C_{v_1},\ldots,C_{v_n},C_{h},\\
   \hspace{-1cm}\mathtt{s.t.}~&\mathtt{tr}(A_{u_i,k}C_{u_i})=b_{u_i,k},~i=1,\ldots,m, k=1,\ldots,K_{u_i},\\
   &\mathtt{tr}(A_{v_j,k}C_{v_j})=b_{v_j,k},~j=1,\ldots,n, k=1,\ldots,K_{v_j},\\
   &\mathtt{tr}(A_{h,k}C_h)=b_{h,k},~ k=1,\ldots,K_h,\\
   &\hspace{-1.3cm}\mathtt{diag}(C_{u_1},\ldots,C_{u_m},C_{v_1},\ldots,C_{v_n},C_{h-1-uf},C_{-h-1-vg})\succeq 0,
\end{split}
\end{equation}
where $C_{h-1-uf}$ is the matrix corresponding to polynomial $h-1-\sum_{i=1}^m u_if_i$, which is
a linear combination of $C_{u_1}$, \ldots, $C_{u_m}$ and $C_h$; similarly,
$C_{-h-1-vg}$ is the matrix corresponding  to polynomial $-h-1-\sum_{j=1}^n v_jg_j$, which is
a linear combination of $C_{v_1}$, \ldots, $C_{v_n}$ and $C_h$; and
$\mathtt{diag}(C_1,\ldots,C_k)$ is a block-diagonal matrix of $C_1, \ldots, C_k$.

Let $D$ be the dimension of $C=\mathtt{diag}(C_{u_1},\ldots,$ $C_{-h-1-vg})$, i.e., $\mathtt{diag}(C_{u_1}$, \ldots, $C_{-h-1-vg})\in\RR^{D\times D}$ and
$\widehat{C}$ be the approximate solution to (\ref{sdp}) returned by calling  a numerical SDP solver,  
 the following theorem is proved in \cite{RVS16}.
\begin{theorem} [\cite{RVS16}, Theorem 3]  \label{the:sdp:cond}
  $C\succeq 0$ if there exists $\widetilde{C}\in\mathbb{F}^{D\times D}$ such that the following conditions hold:
  \begin{itemize}
    \item[1.] $\widetilde{C}_{ij}=C_{ij}$, for any $i\neq j$;
    \item[2.] $\widetilde{C}_{ii}\le C_{ii}-\alpha$, for any $i$; and
    \item[3.] the Cholesky algorithm implemented in floating-point arithmetic can conclude that 
     $\widetilde{C}$ is positive semi-definite,
  \end{itemize}
  where $\mathbb{F}$ is a floating-point format,
  $\alpha=\frac{(D+1)\kappa}{1-(2D+2)\kappa}\mathtt{tr}(C)+4(D+1)(2(D+2)+\max_i\{C_{ii}\} ) \eta$,
  in which $\kappa$ is the unit roundoff of $\mathbb{F}$ and $\eta$ is the underflow unit of $\mathbb{F}$.
\end{theorem}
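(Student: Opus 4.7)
The plan is to reduce the positive-semidefiniteness of $C$ to two ingredients: a backward error analysis of floating-point Cholesky applied to $\widetilde{C}$, and the fact that $C - \widetilde{C}$ is a nonnegative diagonal matrix whose diagonal entries are at least $\alpha$. The idea is that successful floating-point Cholesky on $\widetilde{C}$ certifies that $\widetilde{C}$ lies within a small ball (of radius controlled by $\kappa$, $\eta$, and the size of the matrix entries) around the positive-semidefinite cone; the slack $\alpha$ built into condition 2 then absorbs this ball, so that $C$ itself lands inside the cone.

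More concretely, I would first invoke a standard Wilkinson/Higham-style backward error analysis for the Cholesky factorization in floating-point arithmetic $\mathbb{F}$ with unit roundoff $\kappa$ and underflow unit $\eta$. If the algorithm runs to completion on $\widetilde{C}$ and produces a lower-triangular factor $L\in\mathbb{F}^{D\times D}$, then there exists an \emph{exact} perturbation $E\in\RR^{D\times D}$ such that $LL^T = \widetilde{C} + E$ holds exactly over the reals, with an explicit bound on $E$ of the form
\begin{equation*}
\|E\|_2 \;\le\; \frac{(D+1)\kappa}{1-(2D+2)\kappa}\,\mathtt{tr}(\widetilde{C}) \;+\; 4(D+1)\bigl(2(D+2)+\max_i \widetilde{C}_{ii}\bigr)\,\eta.
\end{equation*}
The first term is the usual multiplicative rounding bound for Cholesky (of order $D\kappa$ times a measure of the matrix size, here $\mathtt{tr}(\widetilde{C})$, which dominates the Frobenius norm of PSD matrices); the second term captures the worst-case damage from denormal/underflow arithmetic when forming sums and square roots. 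Replacing $\widetilde{C}_{ii}$ by $C_{ii}$ (they differ only by a nonpositive amount bounded by $\alpha$) and $\mathtt{tr}(\widetilde{C})$ by $\mathtt{tr}(C)$ only \emph{weakens} the right-hand side, so $\|E\|_2 \le \alpha$ with the $\alpha$ given in the statement.

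Now I would assemble the pieces. By construction $LL^T \succeq 0$, so $\widetilde{C} + E \succeq 0$, i.e.\ $\widetilde{C} \succeq -E$ and hence $\widetilde{C} \succeq -\|E\|_2 \, I \succeq -\alpha I$. Conditions 1 and 2 together give $C - \widetilde{C} = \mathtt{diag}(C_{11}-\widetilde{C}_{11},\ldots,C_{DD}-\widetilde{C}_{DD})$, a diagonal matrix whose entries are all $\ge \alpha$, so $C - \widetilde{C} \succeq \alpha I$. Adding the two inequalities yields
\begin{equation*}
C \;=\; \widetilde{C} + (C-\widetilde{C}) \;\succeq\; -\alpha I + \alpha I \;=\; 0,
\end{equation*}
which is exactly the desired conclusion $C \succeq 0$.

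The step I expect to be the main obstacle is pinning down the explicit backward error constant for floating-point Cholesky with the particular dependence on $D$, $\kappa$, $\eta$, and $\max_i C_{ii}$ that appears in the definition of $\alpha$. Rounding analyses in the literature are usually stated up to unspecified low-order polynomial factors in $D$, and handling the underflow term $\eta$ carefully (as opposed to only the relative rounding term $\kappa$) requires tracking absolute errors through each square root, division, and inner-product accumulation in the Cholesky recurrence. Once those constants are nailed down, however, the semidefinite-cone argument above is essentially mechanical: one merely checks that the slack $\alpha$ placed on the diagonal by condition 2 is at least the spectral norm of the Cholesky backward error.
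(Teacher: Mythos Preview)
The paper does not prove this theorem at all: it is quoted verbatim from \cite{RVS16} as an external result and then used as a black box to derive Corollary~\ref{coro:a}. So there is nothing in the present paper to compare your argument against.

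That said, your sketch is exactly the structure of the proof in \cite{RVS16}: a Wilkinson-style backward error bound for floating-point Cholesky shows that successful termination on $\widetilde{C}$ implies $\widetilde{C}+E\succeq 0$ for some $E$ with $\|E\|_2\le\alpha$, and then the diagonal slack $C-\widetilde{C}\succeq\alpha I$ built in by conditions 1 and 2 absorbs this perturbation. You have also correctly identified the only genuinely laborious step, namely deriving the specific constants in $\alpha$ (the $\frac{(D+1)\kappa}{1-(2D+2)\kappa}$ factor and the underflow term in $\eta$) from a careful line-by-line analysis of the Cholesky recurrence; that is precisely the technical content of \cite{RVS16}, and the present paper makes no attempt to reproduce it.
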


\oomit{
Then formula (\ref{sos}) will be translated into positive semi-definite constraints, this is
done by translate polynomial constraints to semi-definite constraints with linear equalities,
a simple way is showing as following.
Let }

\begin{corollary} \label{coro:a}
  Let $\widetilde{C} \in \mathbb{F}^{D\times D}$. Suppose that
  $\frac{(D+1)D \kappa}{1-(2D+2)\kappa}+4(D+1)\eta\le\frac{1}{2}$,
  $\beta = \frac{(D+1)\kappa}{1-(2D+2)\kappa}\mathtt{tr}(\widetilde{C})+4(D+1)(2(D+2)+\max_i\{\widetilde{C}_{ii}\}) \eta>0$,
   where $\mathbb{F}$ is a floating-point format.
  Then $\widetilde{C}+2\beta I\succeq 0$ if the Cholesky algorithm based on floating-point arithmetic succeeds on $\widetilde{C}$,
    i.e., concludes that $\widetilde{C}$ is positive semi-definite.
\end{corollary}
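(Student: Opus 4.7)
The plan is to apply Theorem~\ref{the:sdp:cond} directly to the matrix $C := \widetilde{C} + 2\beta I$, treating the given floating-point matrix $\widetilde{C}$ as the certificate. Thus the goal becomes checking that conditions 1, 2, and 3 of Theorem~\ref{the:sdp:cond} are all satisfied for this particular pair $(C,\widetilde{C})$.

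First I would dispatch the easy two. Condition~1 ($\widetilde{C}_{ij}=C_{ij}$ for $i\ne j$) is immediate, because $C$ and $\widetilde{C}$ differ only by a scalar multiple of the identity, which affects no off-diagonal entries. Condition~3 (the Cholesky algorithm in floating-point arithmetic succeeds on $\widetilde{C}$) is exactly the hypothesis of the corollary.

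The substantive step is Condition~2, namely $\widetilde{C}_{ii}\le C_{ii}-\alpha$ for each $i$, where
\[
\alpha \;=\; \frac{(D+1)\kappa}{1-(2D+2)\kappa}\,\mathtt{tr}(C)+4(D+1)\bigl(2(D+2)+\max_i\{C_{ii}\}\bigr)\eta
\]
is computed from $C$, not from $\widetilde{C}$. Since $C_{ii}=\widetilde{C}_{ii}+2\beta$, this reduces to showing $\alpha\le 2\beta$. I would expand $\alpha$ using $\mathtt{tr}(C)=\mathtt{tr}(\widetilde{C})+2D\beta$ and $\max_i C_{ii}=\max_i \widetilde{C}_{ii}+2\beta$, collect the $\beta$-free terms (which give exactly $\beta$ by the definition in the hypothesis), and factor the remainder to obtain
\[
\alpha \;=\; \beta + 2\beta\left(\frac{(D+1)D\kappa}{1-(2D+2)\kappa}+4(D+1)\eta\right).
\]
The assumption $\frac{(D+1)D\kappa}{1-(2D+2)\kappa}+4(D+1)\eta\le\frac{1}{2}$ then yields $\alpha\le\beta+2\beta\cdot\tfrac{1}{2}=2\beta$, which is Condition~2. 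Applying Theorem~\ref{the:sdp:cond} gives $C=\widetilde{C}+2\beta I\succeq 0$, as desired.

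I do not anticipate any real obstacle: the only task beyond invoking Theorem~\ref{the:sdp:cond} is the bookkeeping that translates the bound $\alpha\le 2\beta$ into the stated arithmetic constant $\tfrac{1}{2}$. The role of the positivity assumption $\beta>0$ is precisely to guarantee that the perturbation $2\beta I$ supplies strictly positive diagonal slack, so that the inequalities above are not vacuous and $C$ is a legitimate target matrix for the theorem.
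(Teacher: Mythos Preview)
Your proposal is correct and follows essentially the same approach as the paper: apply Theorem~\ref{the:sdp:cond} with $C=\widetilde{C}+2\beta I$, note that conditions~1 and~3 are immediate, and reduce condition~2 to the inequality $\alpha\le 2\beta$ via the same expansion $\alpha=\beta+2\beta\bigl(\frac{(D+1)D\kappa}{1-(2D+2)\kappa}+4(D+1)\eta\bigr)\le \beta+\beta=2\beta$. There is nothing to add.
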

\begin{proof}[of Corollary~\ref{coro:a}]
By directly checking.
 \oomit{
   By Theorem \ref{the:sdp:cond}, treating
    $\widetilde{C}+2\beta I$ as $C$, we just need to prove that
  \begin{align}
  \end{align}

    \label{formula:needp}
    2\beta \ge \alpha(\widetilde{C}+2\beta I),
  \end{align}
  $\alpha(\widetilde{C}+2\beta I)$ means replace $C$ in $\alpha$ by $\widetilde{C}+2\beta I$.
  Since
  \begin{align*}
    & \alpha(\widetilde{C}+2\beta I) \\
    =& \frac{(D+1)\kappa}{1-(2D+2)\kappa}\mathtt{tr}(\widetilde{C}+2\beta I)+4(D+1)(2(D+2)+\max_i(\widetilde{C}+2\beta I)_{ii}) \eta \\
    =&\frac{(D+1)\kappa}{1-(2D+2)\kappa}(\mathtt{tr}(\widetilde{C})+2D\beta )+4(D+1)(2(D+2)+\max_i\widetilde{C}_{ii}+2\beta) \eta \\
    =&\beta + \frac{(D+1)\kappa}{1-(2D+2)\kappa}2D\beta +4(D+1)2\beta \eta\\
    =&\beta +2(\frac{(D+1)D \kappa}{1-(2D+2)\kappa}+4(D+1)\eta)\beta \\
    \le & \beta +2\cdot \frac{1}{2} \beta=2\beta,
  \end{align*}
  thus, (\ref{formula:needp}) holds. }
\end{proof}

According to Remark 5 in \cite{RVS16}, for IEEE 754 binary64 format with rounding to nearest,
$\kappa=2^{-53}(\simeq 10^{-16})$ and $\eta=2^{-1075}(\simeq 10^{-323})$. In this case, the order of magnitude of $\beta$
is $10^{-10}$ and $\frac{(D+1)D \kappa}{1-(2D+2)\kappa}+4(D+1)\eta$ is $10^{-13}$,
 much less than $\frac{1}{2}$. 
Obviously, $\beta$ becomes smaller when the length of binary format becomes  longer.

Without loss of generality, we suppose that the Cholesky algorithm succeed on
$\widehat{C}$ the solution of (\ref{sdp}), which is reasonable since if an SDP solver returns a solution $\widehat{C}$, then $\widehat{C}$ should be considered to be positive semi-definite in a perspective of numeric computation (in other words, we assume the answer obtained by numeric computation is correct.).


Therefore, by Corollary \ref{coro:a}, we have $\widehat{C}+2\beta I\succeq 0$ holds, where $I$ is the identity matrix with corresponding dimension. Then we have
\begin{align*}
  \mathtt{diag}(\widehat{C}_{u_1}, \ldots, \widehat{C}_{u_m}, \widehat{C}_{v_1}, \ldots, \widehat{C}_{v_n}, \widehat{C}_{h-1-uf}, \widehat{C}_{-h-1-vg}) \\
   +2\beta I\succeq 0,
\end{align*}
i.e.,
\begin{equation}
\label{solusdp}
\begin{split}
  &\widehat{C}_{u_1}+2\beta I\succeq 0,~ \ldots,~ \widehat{C}_{u_m}+2\beta I\succeq 0,\\
  &\widehat{C}_{v_1}+2\beta I\succeq 0,~ \ldots,~ \widehat{C}_{v_n}+2\beta I\succeq 0,\\
  &\widehat{C}_{h-1-uf}+2\beta I\succeq 0,~ \widehat{C}_{-h-1-vg}+2\beta I\succeq 0.
\end{split}
\end{equation}

Let $\epsilon=\max_{p\in P,1\le i\le K_p} |\mathtt{tr}(A_{p,i}\widehat{C}_p)-b_{p,i}|$, where  $P=\{u_1,\ldots,u_m,v_1,\ldots,v_n,h\}$, which can be
regarded  as the tolerance  of the SDP solver. Since $|\mathtt{tr}(A_{p,i}C_p)-b_{p,i}|$ is the error term
for each monomial of $p$, i.e., $\epsilon$ can be considered as the error bound on the coefficients of
polynomials $u_i$, $v_j$ and $h$, for any polynomial $\hat{u_i}$ ( $\hat{v_j}$ and $\hat{h}$), computed from (\ref{eqsforu})
by replacing $C_u$ with the corresponding $\widehat{C_u}$,
there exists a corresponding remainder
term $R_{u_i}$ (resp. $R_{v_j}$ and $R_h$) with degree not greater than
$2d$, whose coefficients are bounded by $\epsilon$.
Hence, from (\ref{solusdp}), we have
\begin{eqnarray} \label{solusos}
\begin{split}
  \widehat{u_i}+R_{u_i}+2\beta E_d(\xx,\yy)^T E_d(\xx,\yy)\in\sum\RR[\xx,\yy]^2, \\ i=1,\ldots,m, \\
  \widehat{v_j}+R_{v_j}+2\beta E_d(\xx,\zz)^T E_d(\xx,\zz) \in\sum\RR[\xx,\zz]^2,  \\
  j=1,\ldots,n,  \\
  \widehat{h}+R_{h}-1-\sum_{i=1}^m (\widehat{u_i}+R'_{u_i}) f_i+2\beta E_d(\xx,\yy)^T E_d(\xx,\yy)
   \in\sum\RR[\xx,\yy]^2, \\
  -\widehat{h}+R'_{h}-1-\sum_{j=1}^m (\widehat{v_j}+R'_{v_j}) g_j+2\beta E_d(\xx,\zz)^T E_d(\xx,\zz) \in\sum\RR[\xx,\zz]^2.
\end{split}
\end{eqnarray}
\oomit{where $R_{u_i}$, $R_{v_j}$ or $R_h$ is not a polynomial, but a numerical error term, which mean for each
SOS constraint, there exists a polynomial with coefficients bounded by $\epsilon$ and degree bounded
by $2d$ as a numerical error correction term to make the constraint holds.}

Now, in order to avoid unsoundness of our approach caused by the numerical issue due to SDP, we have to prove
\begin{align}
  f_1\ge 0\wedge\cdots\wedge f_m\ge0 \Rightarrow \widehat{h}>0,\label{numhold:1}\\
  g_1\ge 0\wedge\cdots\wedge g_n\ge0 \Rightarrow \widehat{h}<0.\label{numhold:2}
\end{align}

Regarding (\ref{numhold:1}), let $R_{2d,\xx}$ be a polynomial in $\RR[|\xx|]$, whose total degree is
$2d$, and all coefficients are $1$, e.g., $R_{2,x,y}=1+|x|+|y|+|x^2|+|xy|+|y^2|$.
Since $S=\{(\xx,\yy) \mid f_1\ge 0\wedge\cdots\wedge f_m\ge0\}$ is a compact set, then for any polynomial
$p\in\RR[\xx,\yy]$, $|p|$ is bounded on $S$. Let $M_1$ be an upper bound of $R_{2d,\xx,\yy}$ on $S$, $M_2$
an upper bound of $E_d(\xx,\yy)^T E_d(\xx,\yy)$, and $M_{f_i}$ an upper bound of $f_i$ on $S$.
Then, $|R_{u_i}|$, $|R'_{u_i}|$ and $|R_{h}|$ are bounded by $\epsilon M_1$.
Let $E_{\xx \yy}=E_d(\xx,\yy)^T E_d(\xx,\yy)$.
So for any $(\xx_0,\yy_0)\in S$, considering
  the polynomials below at $(\xx_0,\yy_0)\in S$, by the first and third line in  (\ref{solusos}),  we have
\begin{align*}
  \widehat{h}\ge & 1-R_{h}+\sum_{i=1}^m (\widehat{u_i}+R'_{u_i}) f_i-2\beta E_{\xx \yy}\\
  \ge& 1- \epsilon M_1+\sum_{i=1}^m(\widehat{u_i}+R_{u_i}+2\beta E_{\xx \yy}+R'_{u_i}-R_{u_i}-2\beta E_{xy}) f_i-2\beta M_2 \\
  =& 1- \epsilon M_1-2\beta M_2 +\sum_{i=1}^m(\widehat{u_i}+R_{u_i}+2\beta E_{\xx \yy}) f_i+\sum_{i=1}^m(R'_{u_i}-R_{u_i}-2\beta E_{\xx \yy}) f_i \\
  \ge &1- \epsilon M_1-2\beta M_2 +0-\sum_{i=1}^m (\epsilon M_1+\epsilon M_1+2\beta M_2)M_{f_i} \\
  =& 1-(2\sum_{i=1}^m M_{f_i}+1)M_1 \epsilon- 2(\sum_{i=1}^m M_{f_i}+1)M_2 \beta.
\end{align*}
Whence,
\begin{equation}
\label{forsolhold:1}
\begin{split}
    &f_1\ge 0\wedge\cdots\wedge f_m\ge0 \Rightarrow \\
  &\widehat{h}\geq 1-(2\sum_{i=1}^m M_{f_i}+1)M_1 \epsilon- 2(\sum_{i=1}^m M_{f_i}+1)M_2 \beta.
\end{split}
\end{equation}

Let $S'=\{(\xx,\zz) \mid g_1\ge 0\wedge\cdots\wedge g_n\ge0\}$, $M_3$ be an upper bound of $R_{2d,\xx,\zz}$ on $S'$, $M_4$
an upper bound of $E_d(\xx,\zz)^T E_d(\xx,\zz)$ on $S'$, and $M_{g_j}$ an upper bound of $g_j$ on $S'$.
Similarly to the above, it follows
\begin{align*}\label{forsolhold:2}
  &g_1\ge 0\wedge\cdots\wedge g_n\ge0 \Rightarrow \\
  &-\widehat{h}\geq 1-(2\sum_{j=1}^n M_{g_j}+1)M_3 \epsilon- 2(\sum_{j=1}^n M_{g_j}+1)M_4 \beta.
\end{align*}

So,  the following proposition holds.
\begin{proposition} \label{prop:17}
  There exist two positive constants $\gamma_1$ and $\gamma_2$ such that
  \begin{align}
  f_1\ge 0\wedge\cdots\wedge f_m\ge0 \Rightarrow \widehat{h}\geq 1-\gamma_1 \epsilon- \gamma_2 \beta, \\
  g_1\ge 0\wedge\cdots\wedge g_n\ge0 \Rightarrow -\widehat{h}\geq 1-\gamma_1 \epsilon- \gamma_2 \beta.
\end{align}
\end{proposition}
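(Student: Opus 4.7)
The plan is to observe that Proposition~\ref{prop:17} is essentially a clean repackaging of the two error bounds already derived in the displayed inequalities above the statement, where the only real work left is to (a) unify the constants from the two sides into a single pair $(\gamma_1,\gamma_2)$ and (b) formally carry out the symmetric computation for the $-\widehat{h}$ side that was only sketched.

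First, I would recall that the argument preceding the proposition gives, on the compact set $S=\{(\xx,\yy)\mid f_1\ge 0,\ldots,f_m\ge 0\}$, the explicit bound
\[
\widehat{h}\ \ge\ 1-(2\textstyle\sum_{i=1}^m M_{f_i}+1)M_1\,\epsilon\ -\ 2(\sum_{i=1}^m M_{f_i}+1)M_2\,\beta,
\]
obtained from the SOS certificates in \eqref{solusos} by absorbing the numerical remainder terms $R_h,R_{u_i},R'_{u_i}$ (coefficient-wise bounded by $\epsilon$ on the polynomial $R_{2d,\xx,\yy}$, hence by $\epsilon M_1$ pointwise on $S$) and the perturbation $2\beta E_d(\xx,\yy)^T E_d(\xx,\yy)$ (bounded by $2\beta M_2$ on $S$), while using that each $f_i\ge 0$ on $S$ and $f_i\le M_{f_i}$ there. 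Set $\gamma_1^{(1)}:=(2\sum_{i=1}^m M_{f_i}+1)M_1$ and $\gamma_2^{(1)}:=2(\sum_{i=1}^m M_{f_i}+1)M_2$; both are positive and finite because $S$ is compact.

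Second, I would repeat exactly the same computation verbatim on the compact set $S'=\{(\xx,\zz)\mid g_1\ge 0,\ldots,g_n\ge 0\}$, now using the fourth line of \eqref{solusos} and the bounds $M_3$ on $R_{2d,\xx,\zz}$, $M_4$ on $E_d(\xx,\zz)^T E_d(\xx,\zz)$, and $M_{g_j}$ on $g_j$. This yields on $S'$ the companion inequality
\[
-\widehat{h}\ \ge\ 1-(2\textstyle\sum_{j=1}^n M_{g_j}+1)M_3\,\epsilon\ -\ 2(\sum_{j=1}^n M_{g_j}+1)M_4\,\beta,
\]
with constants $\gamma_1^{(2)}:=(2\sum_{j=1}^n M_{g_j}+1)M_3$ and $\gamma_2^{(2)}:=2(\sum_{j=1}^n M_{g_j}+1)M_4$, again finite by compactness of $S'$.

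Finally, to match the single pair of constants in the statement, I would simply set $\gamma_1:=\max\{\gamma_1^{(1)},\gamma_1^{(2)}\}$ and $\gamma_2:=\max\{\gamma_2^{(1)},\gamma_2^{(2)}\}$; since $\epsilon,\beta\ge 0$, replacing the side-specific constants by the larger common ones only weakens the right-hand sides, so both implications in Proposition~\ref{prop:17} follow immediately. There is no real obstacle here beyond the careful triangle-inequality bookkeeping of the numerical remainder terms $R_{u_i},R'_{u_i},R_h,R'_h$ against $\epsilon$, and the compactness-based uniform bounds $M_1,\ldots,M_4,M_{f_i},M_{g_j}$; the only subtlety worth flagging in the write-up is that $S$ and $S'$ are indeed compact, which follows from the Archimedean hypothesis on $\MM_{\xx,\yy}(f_1,\ldots,f_m)$ and $\MM_{\xx,\zz}(g_1,\ldots,g_n)$ assumed in \textbf{Problem}~\ref{problem:1}.
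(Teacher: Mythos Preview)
Your proposal is correct and follows exactly the paper's approach: the paper's proof simply takes $\gamma_1=\max\{(2\sum_{i=1}^m M_{f_i}+1)M_1,\ (2\sum_{j=1}^n M_{g_j}+1)M_3\}$ and $\gamma_2=\max\{2(\sum_{i=1}^m M_{f_i}+1)M_2,\ 2(\sum_{j=1}^n M_{g_j}+1)M_4\}$ and refers back to the already-derived inequalities \eqref{forsolhold:1} and its $g$-side analogue. Your write-up is in fact more detailed than the paper's one-line proof.
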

\begin{proof}[of Proposition~\ref{prop:17}]
  We just need to take
\begin{align*}
 \gamma_1=\max((2\sum_{i=1}^m M_{f_i}+1)M_1,(2\sum_{j=1}^n M_{g_j}+1)M_3)\, ,  \\
 \gamma_2=\max(2(\sum_{i=1}^m M_{f_i}+1)M_2,2(\sum_{j=1}^n M_{g_j}+1)M_4)
\end{align*}
 in formulas (\ref{forsolhold:1}) and (\ref{forsolhold:2}).
\end{proof}

Since $\epsilon$ and $\beta$ heavily rely on the numerical tolerance and the floating point representation, it is easy to see that
 $\epsilon$ and $\beta$ become small enough with
  $\gamma_1 \epsilon<\frac{1}{2}$ and $\gamma_2 \beta<\frac{1}{2}$,
 if  the numerical tolerance is small enough and the length of the floating point representation is  long enough. This implies that
\begin{align*}
  f_1\ge 0\wedge\cdots\wedge f_m\ge0 \Rightarrow \widehat{h}>0, \\
  g_1\ge 0\wedge\cdots\wedge g_n\ge0 \Rightarrow -\widehat{h}>0.
\end{align*}
If so, any numerical result $\widehat{h}>0$ returned by calling
 an SDP solver to (\ref{sdp}) is guaranteed
to be a real interpolant for $\phi$ and $\psi$, i.e., a correct solution to \textbf{Problem} 1.

\begin{example}
  Consider the numerical result for Example \ref{ex3} in Section \ref{sec:ssc}. Let
  $M_{f_1}$, $M_{f_2}$, $M_{f_3}$, $M_{g_1}$, $M_{g_2}$, $M_{g_3}$, $M_1$, $M_2$, $M_3$,
  $M_4$ are defined as above. It is easy to see that
  \begin{align*}
    f_1\ge 0\Rightarrow & |x|\le 2\wedge |y|\le 2\wedge|z|\le 2\wedge|a_1|\le 2\wedge|b_1|\le 2 \wedge|c_1|\le 2\wedge|d_1|\le 2.
  \end{align*}
  Then, by simple calculations, we obtain
  $$M_{f_1}=4, M_{f_2}=32, M_{f_3}=3, M_1=83, M_2=29.$$
  Thus, $$(2\sum_{i=1}^m M_{f_i}+1)M_1=6557,\quad 2(\sum_{i=1}^m M_{f_i}+1)M_2=2320.$$
Also, since
  \begin{align*}
    g_1\ge 0\Rightarrow & |x|\le 2\wedge |y|\le 2\wedge|z|\le 2\wedge|a_2|\le 2\wedge
      |b_2|\le 2  \wedge|c_2|\le 2\wedge|d_2|\le 2,
  \end{align*}
  we obtain
  $$M_{g_1}=4, M_{g_2}=7, M_{g_3}=2, M_3=83, M_4=29.$$
  Thus, $$(2\sum_{i=1}^m M_{g_i}+1)M_3=2241,\quad 2(\sum_{i=1}^m M_{g_i}+1)M_4=812.$$
Consequently, we have $\gamma_1=6557$ and $\gamma_2=2320$ in Proposition \ref{prop:17}.

Due to the fact that the default error tolerance is $10^{-8}$ in the SDP solver Mosek and $h$ is rounding to $4$ decimal places, we have $\epsilon=\frac{10^{-4}}{2}$. In addition,
 as the absolute value of each element in $\widehat{C}$ is less than $10^3$, and the dimension of $D$ is less than $10^3$, we obtain that
  \begin{align*}
    \beta  &= \frac{(D+1)\kappa}{1-(2D+2)\kappa}\mathtt{tr}(\widetilde{C})
  +4(D+1)(2(D+2)+\max_i(\widetilde{C}_{ii})) \eta \\
   & \le \frac{(1000+1)10^{-16}}{1-(2000+2)10^{-16}}10^6
  +4(1000+1)(2(1000+2)+1000) 10^{-1075} \\
  &\le 10^{-6}.
  \end{align*}

Consequently,
  \begin{align*}
    \gamma_1 \epsilon \le 6557 \cdot \frac{10^{-4}}{2} <\frac{1}{2},\\
    \gamma_2 \beta\le 2320 \cdot 10^{-6} < \frac{1}{2},
  \end{align*}
which imply that $h(x,y,z)>0$ presented in Example \ref{ex3} is indeed a sound interpolant.

  \begin{remark} \label{remark:1}
  Besides, the result could be verified by the following symbolic computation procedure instead: computing $P_\xx(\phi)$ and $P_\xx(\psi)$ first by some symbolic tools, such as Redlog \cite{dolzmann1997} which is a package that extends the computer algebra system REDUCE to a computer logic system; then verifying $\xx\in P_\xx(\phi)\Rightarrow h(\xx)>0$ and $\xx\in P_\xx(\psi)\Rightarrow h(\xx)<0.$ For this example, $P_{x,y,z}(\phi)$ and $P_{x,y,z}(\psi)$ obtained by Redlog are too complicated and therefore not presented here. The symbolic computation can verify that $h(x,y,z)$ in this example is exactly an interpolant, which confirms  our conclusion.
  \end{remark}
\end{example}

\section{Generalizing to general polynomial formulas} \label{sec:generalization}

\begin{problem} \label{problem:2}
Let $\phi(\xx,\yy)$ and $\psi(\xx,\zz)$ be two polynomial formulas defined as follows,
\begin{align*}
  \phi(\xx,\yy) : \bigvee_{i=1}^m \phi_i, \quad \phi_i=\bigwedge_{k=1}^{K_i}f_{i,k}(\xx,\yy) \geq 0 ; \\
  \psi(\xx,\zz) : \bigvee_{j=1}^n \psi_j, \quad \psi_j=\bigwedge_{s=1}^{S_j}g_{j,s}(\xx,\zz) \geq 0 ,
\end{align*}
where all $f_{i,k}$ and $g_{j,s}$ are polynomials.
Suppose $\phi \wedge \psi \models \bot$, and for $i=1,\ldots,m$, $j=1,\ldots,n$, $\{(\xx,\yy)\mid \phi_i(\xx,\yy)\}$ and $\{(\xx,\zz)\mid \psi_j(\xx,\zz)\}$ are all semi-algebraic sets of the Archimedean form. Find a polynomial $h(\xx)$ such that $h(\xx)>0$ is an interpolant for $\phi$ and $\psi$.
\end{problem}

\begin{theorem}
\label{the:main:forgen}
For \textbf{Problem} 2,
there exists a polynomial $h(\xx)$ satisfying
 \[\forall \xx \in P_{\xx}(\phi(\xx,\yy)), h(\xx) > 0,\]
 \[\forall \xx \in P_{\xx}(\psi(\xx,\zz)), h(\xx) < 0.\]
\end{theorem}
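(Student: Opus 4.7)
The plan is to reduce Theorem \ref{the:main:forgen} to the proof template already used for Theorem \ref{the:main}, since the hypotheses in Problem \ref{problem:2} supply, after projection to $\xx$, exactly the structural ingredients that argument exploited: compactness of both projections and representability as finite unions of basic closed semi-algebraic sets in $\xx$.

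First, I would decompose $P_{\xx}(\phi(\xx,\yy)) = \bigcup_{i=1}^m P_{\xx}(\phi_i(\xx,\yy))$ and $P_{\xx}(\psi(\xx,\zz)) = \bigcup_{j=1}^n P_{\xx}(\psi_j(\xx,\zz))$. For each fixed $i$, the formula $\phi_i$ is a conjunction of polynomial inequalities whose defining set is of Archimedean form, so Lemma \ref{the:forproj} applies to $\phi_i$ individually and yields that $P_{\xx}(\phi_i)$ is compact and expressible as a finite union of basic closed semi-algebraic sets of the form $\{\xx \mid \bigwedge_\ell \alpha_{i,k,\ell}(\xx) \ge 0\}$. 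Taking the union over $i$, $P_{\xx}(\phi)$ is itself compact and of the form $\bigcup_{a=1}^A \{\xx \mid \bigwedge_j p_{a,j}(\xx) \ge 0\}$; an identical argument gives $P_{\xx}(\psi) = \bigcup_{b=1}^B \{\xx \mid \bigwedge_k q_{b,k}(\xx) \ge 0\}$, also compact. Finally, $\phi \wedge \psi \models \bot$ forces $P_{\xx}(\phi) \cap P_{\xx}(\psi) = \emptyset$.

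At this point the situation is identical to the one reached in the proof of Theorem \ref{the:main} immediately after Lemma \ref{the:forproj} is invoked, so the rest of that proof can be replayed verbatim. Exploiting compactness of $P_{\xx}(\phi) \cup P_{\xx}(\psi)$, I would pick $N > 0$ with $N - \sum_{i=1}^r x_i^2 \ge 0$ on both sets and adjoin this inequality to every basic conjunction, making each individual basic piece of Archimedean form. For each $a \in \{1,\ldots,A\}$, Lemma \ref{lemma:1} applied with $S_0$ equal to the $a$-th basic piece of $P_{\xx}(\phi)$ and the $S_l$ running over the basic pieces of $P_{\xx}(\psi)$ produces a polynomial $h_a(\xx)$ that is positive on that piece and negative on the whole of $P_{\xx}(\psi)$. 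Then the Archimedean basic set $S' = \{\xx \mid -h_1(\xx) \ge 0,\ \ldots,\ -h_A(\xx) \ge 0,\ N - \sum_{i=1}^r x_i^2 \ge 0\}$ contains $P_{\xx}(\psi)$ and is disjoint from $P_{\xx}(\phi)$; a final application of Lemma \ref{lemma:1} (with the roles reversed, taking $S_0 := S'$ and the $S_l$ to be the basic pieces of $P_{\xx}(\phi)$) yields $\bar{h}(\xx)$ with $\bar{h} > 0$ on $S'$ and $\bar{h} < 0$ on $P_{\xx}(\phi)$, and setting $h := -\bar{h}$ satisfies the two required inequalities.

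The only subtlety is that the Archimedean hypothesis in Problem \ref{problem:2} is imposed on each disjunct $\phi_i$ and $\psi_j$ individually, not globally on $\phi$ and $\psi$, and the basic pieces produced through the Finiteness Theorem inside Lemma \ref{the:forproj} are not, as written, of Archimedean form. This is resolved cleanly by the standard device of augmenting each basic conjunction with the bounding inequality $N - \sum_{i=1}^r x_i^2 \ge 0$, valid because a finite union of compact sets is compact. Apart from this bookkeeping, the argument is a direct replay of Theorem \ref{the:main}'s proof in a marginally larger indexing setting, so I do not anticipate any genuinely new difficulty.
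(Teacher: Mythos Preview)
Your proposal is correct and takes essentially the same approach as the paper: establish that $P_{\xx}(\phi)$ and $P_{\xx}(\psi)$ are compact finite unions of basic closed semi-algebraic sets, then replay the proof of Theorem \ref{the:main} verbatim. The only cosmetic difference is that the paper invokes (the proof of) Lemma \ref{the:forproj} once on the full compact semi-algebraic set $\{(\xx,\yy)\mid\phi(\xx,\yy)\}$, whereas you apply it to each disjunct $\phi_i$ separately and then take the union---both routes arrive at the same structural setup.
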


\begin{proof}[of Theorem~\ref{the:main:forgen}]
  We claim that Lemma \ref{the:forproj} holds for \textbf{Problem} 2 as well. Since $\{(\xx,\yy)\mid \phi_i(\xx,\yy)\}$ and $\{(\xx,\zz)\mid \psi_j(\xx,\zz)\}$ are all semi-algebraic sets of the Archimedean form, then $\{(\xx,\yy)\mid \phi(\xx,\yy)\}$ and $\{(\xx,\zz)\mid \psi(\xx,\zz)\}$ both are compact. See $\{(\xx,\yy)\mid \phi(\xx,\yy)\}$ or
  $\{(\xx,\zz)\mid \psi(\xx,\zz)\}$ as $S$ in the proof of Lemma \ref{the:forproj}, then
  Lemma \ref{the:forproj} holds for \textbf{Problem} 2.
 Thus, the rest of proof is same as that for Theorem \ref{the:main}.
\end{proof}


\begin{corollary} \label{conse:forgen}
  Let $\phi(\xx,\yy)$ and $\psi(\xx,\zz)$ be defined as in \textbf{Problem} 2. There must exist a polynomial $h(\xx)$ such that $h(\xx)>0$ is an interpolant for $\phi$ and $\psi$.
\end{corollary}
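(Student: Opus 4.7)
The plan is to derive Corollary \ref{conse:forgen} as an essentially immediate consequence of Theorem \ref{the:main:forgen}, by reusing the projection-based reasoning already codified in Proposition \ref{prop:1}. The only thing to check is that the proposition's argument, although originally stated for the conjunctive setting of \textbf{Problem} \ref{problem:1}, carries over verbatim to the disjunctive setting of \textbf{Problem} \ref{problem:2}, since it relies only on the abstract projections $P_{\xx}(\phi(\xx,\yy))$ and $P_{\xx}(\psi(\xx,\zz))$ and not on the syntactic shape of $\phi$ or $\psi$.

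First I would apply Theorem \ref{the:main:forgen} to obtain a polynomial $h(\xx)\in\RR[\xx]$ such that $h(\xx)>0$ on $P_{\xx}(\phi(\xx,\yy))$ and $h(\xx)<0$ on $P_{\xx}(\psi(\xx,\zz))$. Next I would verify the three clauses of Definition \ref{interpolant}. For clause (i), any $(\xx_0,\yy_0)$ satisfying $\phi$ forces $\xx_0\in P_{\xx}(\phi(\xx,\yy))$, hence $h(\xx_0)>0$; so $\phi\models h(\xx)>0$. For clause (ii), any $(\xx_0,\zz_0)$ satisfying $\psi$ forces $\xx_0\in P_{\xx}(\psi(\xx,\zz))$, hence $h(\xx_0)<0$, which is incompatible with $h(\xx_0)>0$; so $\psi\wedge h(\xx)>0\models\bot$. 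For clause (iii), by construction $h$ is a polynomial in $\xx$ alone, and $\xx$ consists precisely of the common variables of $\phi$ and $\psi$, so the common-symbol requirement is satisfied trivially (no uninterpreted function symbols are involved).

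There is no real obstacle here: the substance of the generalization lies entirely in Theorem \ref{the:main:forgen}, whose proof the authors have already reduced to that of Theorem \ref{the:main} by observing that $\{(\xx,\yy)\mid\phi(\xx,\yy)\}$ and $\{(\xx,\zz)\mid\psi(\xx,\zz)\}$ remain compact semi-algebraic sets under finite unions of Archimedean basic closed pieces, so that Lemma \ref{the:forproj} still applies. The corollary itself is then a one-line packaging step, and I would present the proof as little more than ``immediate from Theorem \ref{the:main:forgen} and the argument of Proposition \ref{prop:1}.''
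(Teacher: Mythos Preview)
Your proposal is correct and matches the paper's intent: the paper states Corollary~\ref{conse:forgen} without proof, treating it as immediate from Theorem~\ref{the:main:forgen} exactly as Corollary~\ref{conse} follows from Theorem~\ref{the:main} via Proposition~\ref{prop:1}. Your explicit verification of the three interpolant clauses is precisely the packaging step the paper leaves implicit.
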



\begin{theorem} \label{the:gencase:main}
  Let $\phi(\xx,\yy)$ and $\psi(\xx,\zz)$ be defined as in \textbf{Problem} 2. Then there exists a polynomial $h(\xx)$ and $\sum_{i=1}^m(K_i+1)+\sum_{j=1}^n (S_j+1)$ sum of squares polynomials $u_{i,k}(\xx,\yy)$ $(i=1,\ldots,m$, $k=1,\ldots,K_i+1)$, $v_{j,s}(\xx,\zz)$
   $(j=1,\ldots,n$, $s=1,\ldots,S_j)$ satisfying the following semi-definite constraints such that $h(\xx)>0$ is an interpolant for $\phi(\xx,\yy)$ and $\psi(\xx,\zz)$:
  \begin{align}
  \label{formula:main:casegen}
   &h-1=\sum_{k=1}^{K_i} u_{i,k}f_{i,k}+u_{i,K_i+1},\quad i=1,\ldots, m;\\
   &-h-1=\sum_{s=1}^{S_j} v_{j,s}g_{j,s}+ v_{j,S_j+1}, \quad j=1,\ldots, n.
  \end{align}
\end{theorem}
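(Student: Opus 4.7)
The plan is to reduce Theorem \ref{the:gencase:main} to applying Putinar's Positivstellensatz separately to each disjunct $\phi_i$ and $\psi_j$, after first producing a single scalar polynomial $h(\xx)$ that simultaneously witnesses the separation from every disjunct's side. I would start by invoking Theorem \ref{the:main:forgen} to obtain $\hat{h}(\xx)\in\RR[\xx]$ with $\hat{h}>0$ on $P_{\xx}(\phi)$ and $\hat{h}<0$ on $P_{\xx}(\psi)$. Because each $\{(\xx,\yy)\mid \phi_i(\xx,\yy)\}$ is compact by the Archimedean assumption, and $\hat{h}$ depends only on $\xx$, the restriction of $\hat{h}$ to this lifted set attains a strictly positive minimum (its $\xx$-projection lies inside $P_{\xx}(\phi_i)\subseteq P_{\xx}(\phi)$). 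Taking a common strictly positive lower bound $\epsilon_1$ over $i=1,\ldots,m$, and similarly $\epsilon_2$ from the $\psi_j$ side, then setting $\epsilon=\min(\epsilon_1,\epsilon_2)$ and $h:=\hat{h}/\epsilon$, I obtain $h-1>0$ on every $\{(\xx,\yy)\mid \phi_i(\xx,\yy)\}$ and $-h-1>0$ on every $\{(\xx,\zz)\mid \psi_j(\xx,\zz)\}$.

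Next, I would apply Putinar's Positivstellensatz (Theorem \ref{the:putinar}) to each disjunct individually. Since $\mathcal{M}_{\xx,\yy}(f_{i,1},\ldots,f_{i,K_i})$ is Archimedean and $h-1>0$ on the basic semialgebraic set it defines, for every $i$ there exist SOS polynomials $u_{i,1},\ldots,u_{i,K_i+1}\in\sum\RR[\xx,\yy]^2$ with
\[ h-1 = \sum_{k=1}^{K_i} u_{i,k} f_{i,k} + u_{i,K_i+1}. \]
An entirely analogous application of Putinar to each $\psi_j$-region produces SOS polynomials $v_{j,1},\ldots,v_{j,S_j+1}$ certifying $-h-1 = \sum_{s=1}^{S_j} v_{j,s}\,g_{j,s} + v_{j,S_j+1}$. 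This collectively yields the required $\sum_{i=1}^m(K_i+1)+\sum_{j=1}^n(S_j+1)$ SOS witnesses and verifies that $h(\xx)>0$ is an interpolant: if $\phi$ holds at $(\xx,\yy)$, some $\phi_i$ does, and the corresponding identity forces $h(\xx)\ge 1$; dually, if $\psi$ holds at $(\xx,\zz)$, some $\psi_j$ identity gives $h(\xx)\le -1$; and the common-symbol condition is automatic because $h$ only involves $\xx$.

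The main obstacle is conceptual rather than technical: one must check that the single polynomial $\hat{h}(\xx)$ produced by Theorem \ref{the:main:forgen} can be reused simultaneously for every disjunct, since Putinar's Positivstellensatz is applied to each quadratic module $\mathcal{M}_{\xx,\yy}(f_{i,1},\ldots,f_{i,K_i})$ separately and requires strict positivity on the \emph{lifted} set, not merely on its projection. This succeeds precisely because $\hat{h}$ does not involve $\yy$ or $\zz$, so strict positivity on $P_{\xx}(\phi_i)$ automatically lifts to strict positivity on $\{(\xx,\yy)\mid \phi_i(\xx,\yy)\}$; a uniform scaling by $\epsilon$ then aligns the thresholds $h-1>0$ and $-h-1>0$ across all disjuncts simultaneously. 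Beyond this observation, the argument is a disjunct-by-disjunct replay of the proof of Theorem \ref{the:main2}.
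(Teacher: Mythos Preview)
Your proposal is correct and follows essentially the same approach as the paper: the paper's proof simply states that, by the Archimedean property, the argument is the same as for Theorem~\ref{the:main2}, and what you have written is precisely the disjunct-by-disjunct unwinding of that proof (invoke Theorem~\ref{the:main:forgen} in place of Theorem~\ref{the:main}, use compactness to scale $\hat{h}$ so that $h-1>0$ and $-h-1>0$ hold on every disjunct, then apply Putinar to each Archimedean quadratic module separately).
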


\begin{proof}[of Theorem~\ref{the:gencase:main}]
  According to  the property of Archimedean, the proof is  same as that for Theorem \ref{the:main2}.
\end{proof}

Similarly, \textbf{Problem} 2 can be equivalently reformulated as the problem of searching for sum of squares polynomials satisfying 
 \begin{equation}
\label{sos:casegen} \left\{
\begin{split}
    &h(\xx)-1-\sum_{k=1}^{K_i} u_{i,k}f_{i,k} \in \sum \mathbb{R}[\xx,\yy]^2, i=1,\ldots,m;\\
   &-h(\xx)-1-\sum_{s=1}^{S_j} v_{j,s}g_{j,s}\in \sum \mathbb{R}[\xx,\zz]^2, j=1,\ldots,n;\\
   &u_{i,k} \in \sum\mathbb{R}[\xx,\yy]^2, i=1,\ldots,m, k=1,\ldots, K_i;\\
   &v_{j,s} \in \sum\mathbb{R}[\xx,\zz]^2, j=1,\ldots,n, s=1,\ldots, S_j.
\end{split} \right.
\end{equation}

\begin{example} \label{ex:317}
  Consider 
{\small  \begin{align*}
    \phi(x,y,a_1,a_2,b_1,b_2): (f_1\ge0\wedge f_2\ge0)\vee(f_3\ge0 \wedge f_4\ge0),\\
    \psi(x,y,c_1,c_2,d_1,d_2): (g_1\ge0\wedge g_2\ge0)\vee(g_3\ge0 \wedge g_4\ge0),
  \end{align*}
}
  where
{
  \begin{align*}
    f_1=&16-(x+y-4)^2-16(x-y)^2-a_1^2,\\
    f_2=&x+y-a_2^2-(2-a_2)^2,\\
    f_3=&16-(x+y+4)^2-16(x-y)^2-b_1^2,\\
    f_4=&-x-y-b_2^2-(2-b_2)^2,\\
    g_1=&16-16(x+y)^2-(x-y+4)^2-c_1^2,\\
    g_2=&y-x-c_2^2-(1-c_2)^2,\\
    g_3=&16-16(x+y)^2-(x-y-4)^2-d_1^2,\\
    g_4=&x-y-d_2^2-(1-d_2)^2.
  \end{align*}
}
We get a concrete $\sdp$ problem of the form (\ref{sos:casegen}) by setting the degree of $h(x,y)$ in (\ref{sos:casegen}) to be $2$. Using the MATLAB package YALMIP and Mosek, we obtain
\[    h(x,y)=-2.3238+0.6957x^2+0.6957y^2+7.6524xy.\]
The result is plotted in Fig. \ref{fig317}, and can be verified either by
numerical error analysis as in Example \ref{ex3} or by a symbolic procedure like REDUCE as described in Remark \ref{remark:1}.

\begin{figure}
\flushleft
\includegraphics[scale=0.5]{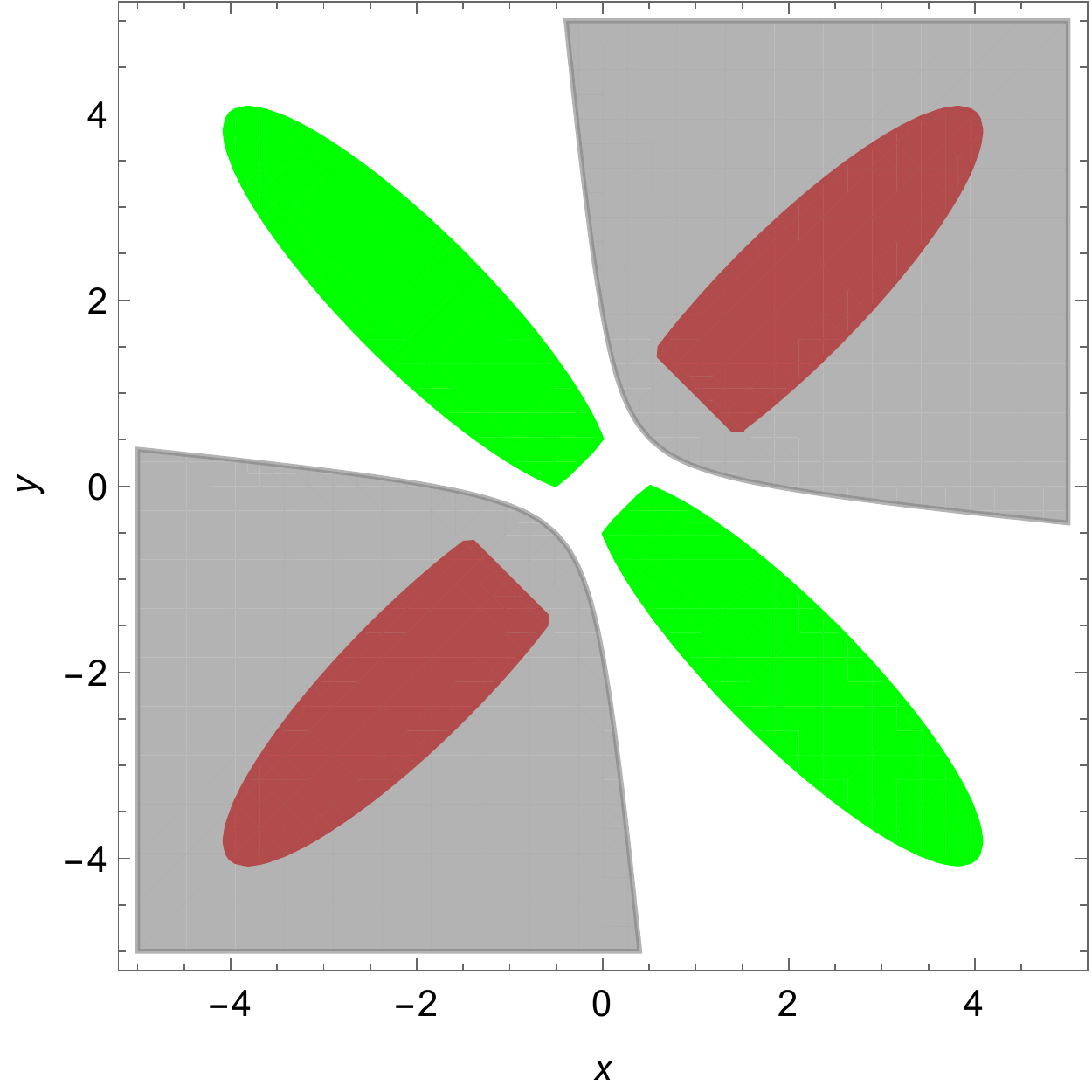}
\caption{Example \ref{ex:317}. \small{(Red region: ~ $P_{x,y}(\phi(x,y,a_1,a_2,b_1,b_2))$; Green region:~ $P_{x,y}(\psi(x,y,c_1,c_2,d_1,d_2))$; Gray region:~  $\{(x,y)\mid h(x,y)>0\}$.)}}
\label{fig317}
\end{figure}
\end{example}

\oomit{
\section{Examples and Discussions}
\label{ED}
In this section we evaluate the performance of our interpolant syntshsis method built upon semi-definite programs \eqref{sos} on three case studies. The first two examples, i.e. Examples \ref{ex1} and \ref{ex2}, are constructed to illustarte the soundness of our method. The third one, i.e. Example \ref{ex3}, is used to evaluate the scalability of our method in dealing with \textbf{Problem} 1 . The parameters that control the performance of our approach in applying \eqref{sos} to these three examples are presented in Table \ref{table}. All computations were performed on an i7-7500U 2.70GHz CPU with 32GB RAM running Windows 10. For numerical implementation, we formulate the sum of squares problem \eqref{sos} using the MATLAB package YALMIP\footnote{It can be downloaded from \url{https://yalmip.github.io/}.} \cite{lofberg2004} and use Mosek\footnote{For academic use, the software Mosek can be obtained free from \url{https://www.mosek.com/}.} \cite{mosek2015mosek} as a semi-definite programming solver.

\begin{table}[h!]
\begin{center}
\begin{tabular}{|l|r|r|r|r|}
  \hline
   Ex.&$d_h$&$d_{u}$&$d_v$&$Time$\\\hline
   1&2&2&&0.59 \\\hline
   2&2&2&2&0.60 \\\hline
   3&2&2&2&1.05\\\hline
   \end{tabular}
\end{center}
\caption{\textit{Parameters and performance of our implementations on the examples presented in this section.  $d_h, d_u ,d_v$: the degree of the polynomials $h$, $u$ and $v$ in \eqref{sos}, respectively; $Time$: computation times (seconds).} }
\label{table}
\end{table}

\begin{example}
\label{ex1}
In this example we consider two contradictory polynomial formulas $\phi$ and $\psi$, which have two variables respectively but  only one common variable. $\phi$ and $\psi$ are defined as follows:
  $\phi(x,y) :$
  \begin{align*}
   f_1(x,y)\geq 0 \wedge f_2(x,y)\geq 0
  \end{align*} and $\psi(x,z) :$
  \begin{align*}
   g_1(x,z)\geq 0\wedge g_2(x,z)\geq 0,
  \end{align*} where
  \begin{align*}
    &f_1(x,y)=4-x^2-y^2,\\
    &f_2(x,y)=y-x^2-1; \\
    &g_1(x,z)=4-x^2-z^2,\\
    &g_2(x,z)=x^2-z-3.
  \end{align*}

 As to this example, we obtain $$h(x)=6.566902-6.354755x^2$$ by solving semi-definite program \eqref{sos} and thus $h(x)>0$ is an interpolant for $\phi(x,y)$ and $\psi(x,z)$. This could be verified further by the following symbolic computation procedure: $$P_{x}(\phi(x,y))=\{x\mid 4-x^2\geq 0\wedge -x^4-3x^2+3\geq 0\}$$ and
 \begin{align*}
 P_{x}(\psi(x,y))=&\{x\mid 4-x^2\geq 0\wedge x^2-3\geq 0\}\cup \\
 &\{x\mid 4-x^2\geq 0\wedge -x^4+5x^2-5\geq 0\}
 \end{align*}
 are gained via Redlog \cite{dolzmann1997}, which is a package that extends the computer algebra system REDUCE to a computer logic system. Therefore,
 $$P_{x}(\phi(x,y))=\{x\mid -\sqrt{\frac{-3+\sqrt{21}}{2}}\leq x \leq \sqrt{\frac{-3+\sqrt{21}}{2}}\}$$ and
 \begin{align*}
 P_{x}(\psi(x,z))=&\{x\mid -2\leq x \leq -\sqrt{\frac{5-\sqrt{5}}{2}}\}\cup\\
 &\{x\mid \sqrt{\frac{5-\sqrt{5}}{2}}\leq x \leq 2\}.
 \end{align*}
 It is evident that $$
x\in P_x(\phi(x,y))\Rightarrow h(\xx)>0$$ and
$$x\in P_x(\psi(x,y))\Rightarrow h(\xx)<0.$$
\end{example}

\begin{example}
In this example we consider two contradictory polynomial formulas $\phi$ and $\psi$, which have three variables respectively but two common variables. $\phi$ and $\psi$ are defined as follows:
\label{ex2}
  Let $\phi(x,y,w) :$
  \begin{align*}
  f_1(x,y,w)\geq 0 \wedge
   f_2(x,y,w)\geq 0
\end{align*}   and $\psi(x,y,z) :$
\begin{align*}
 g_1(x,y,z)\geq 0\wedge
  g_2(x,y,z)\geq 0,
\end{align*}
where
  \begin{align*}
    &f_1(x,y)=4-x^2-y^2-w^2,\\
    &f_2(x,y)=y-x^2-1-w, \\
    &g_1(x,z)=4-x^2-z^2-y^2,\\
    &g_2(x,z)=x^2-z-3-y^4.
  \end{align*}

 In case that a polynomial $h(x,y)$ of the quadratic form defining an interpolant $h(x,y)>0$ for $\phi(x,y,w)$ and $\psi(x,y,z)$ is searched i.e. $d_h=2$, $$h(x,y)=9.04170+3.972077y-8.94027x^2+3.039147y^2$$ is returned by addressing \eqref{sos}.

We also use Redlog to obtain
\begin{align*}
&P_{x,y}(\phi(x,y,w))=\\
&\quad\quad\quad\{(x,y)\mid 4-x^2-y^2\geq 0 \wedge y-x^2-1\geq 0\}\cup \\
&\quad\quad\quad\{(x,y)\mid 4-x^2-y^2\geq 0 \wedge \\
&\quad\quad\quad\quad\quad\quad-2y^2+2yx^2+2y-x^4-3x^2+3\geq 0\}
\end{align*}
 and
\begin{align*}
&P_{x,y}(\psi(x,y,z))=\\
&\quad\quad\quad\{(x,y)\mid 4-x^2-y^2\geq 0 \wedge -y^4+x^2-3\geq 0\}\cup \\
&\quad\quad\quad\{(x,y)\mid 4-x^2-y^2\geq 0 \wedge \\
&\quad\quad\quad-y^8+2y^4x^2-6y^4-y^2-x^4+5x^2-5\geq 0\}.
\end{align*}
$P_{x,y}(\phi(x,y,w))$, $P_{x,y}(\psi(x,y,z))$ and $\{(x,y)\mid h(x,y)>0\}$ when $d_h=2$ are illustrated in Fig. \ref{fig-one-1}. 
 From  the results showacased in Fig .\ref{fig-one-1}, we further conclude that the polynomial $h(x,y)$ computed via solving \eqref{sos} forms an interpolant for the two contradictory formulas $\phi(x,y,w)$ and $\psi(x,y,z)$ indeed. 

\begin{figure}
\flushleft
\includegraphics[scale=0.5]{forbai22.pdf}
\caption{Example \ref{ex2} when $d_h=2$. \small{(Green region -- $P_{x,y}(\phi(x,y,w))$; Red region -- $P_{x,y}(\psi(x,y,z))$; Gray region -- $\{(x,y)\mid h(x,y)>0\}$.)}}
\label{fig-one-1}
\end{figure}
\end{example}

\begin{example}
\label{ex3}
In this example we evaluate the performance of our method in solving \textbf{Problem} 1 with more variables. The two contradictory formulas $\phi$ and $\psi$ have seven variables respectively but three common variables, which are defined as follows:
$\phi(x,y,z,a_1,b_1,c_1,d_1) : $
\begin{align*}
 &f_1(x,y,z,a_1,b_1,c_1,d_1)\geq 0 ~~\wedge\\ &f_2(x,y,z,a_1,b_1,c_1,d_1)\geq 0 ~~\wedge \\
 &f_3(x,y,z,a_1,b_1,c_1,d_1)\geq 0
\end{align*}
 and $\psi(x,y,z,a_2,b_2,c_2,d_2) :$
\begin{align*}
 &g_1(x,y,z,a_2,b_2,c_2,d_2)\geq 0 ~~\wedge \\
  &g_2(x,y,z,a_2,b_2,c_2,d_2)\geq 0 ~~\wedge \\
  &g_3(x,y,z,a_2,b_2,c_2,d_2)\geq 0,
\end{align*}
where
  \begin{align*}
    &f_1=4-x^2-y^2-a_1^2-z^2-b_1^2-c_1^2-d_1^2,\\
    &f_2=-y^4+2x^4-1/100-a_1^4,\\
    &f_3=z^2-b_1^2-c_1^2-d_1^2-x-1,\\
    &g_1=4-x^2-z^2-y^2-a_2^2-b_2^2-c_2^2-d_2^2,\\
    &g_2=x^2-a_2-3-y-b_2-d_2^2,\\
    &g_3=x.
  \end{align*}

When the degree of the polynomial $h(x,y,z)$ in \eqref{sos} are $2$, i.e. $d_h=2$, we obtain
\begin{align*}
&h(x,y,z)=-10.0146-20.1478x+10.1879y\\
&\quad+4.18700x^2+
4.28968y^2+15.6846z^2-
3.92448xy.
\end{align*}
$P_{x,y,z}(\phi(x,y,z,a_1,b_1,c_1,d_1))$, $P_{x,y,z}(\psi(x,y,z,a_2,b_2,c_2,$ $d_2))$ and $\{(x,y,z)\mid h(x,y,z)>0\}$ when $d_h=2$ are illustrated in Fig. \ref{fig-one-3}. 
By visualizing the result in this plot, it is evident that $h(x,y,z)$ as presented above for $d_h=2$ is a real interpolant for $\phi(x,y,z,a,b,c,d)$ and $\psi(x,y,z,a,b,c,d)$.

$P_{x,y,z}(\phi(x,y,z,a_1,b_1,c_1,d_1))$, $P_{x,y,z}(\psi(x,y,z,a_2,b_2,$ \\  $c_2,d_2))$ obtained by Redlog are too complicated and therefore we do not present them in this paper. Forthermore, by observing the computation times listed in Table \ref{table}, we conclude that,
though the computation time increases with the number of variables increasing,
our method may deal with problems with many variables.

\begin{figure}
\flushleft
\includegraphics[scale=0.5]{forbai32.pdf}
\caption{Example \ref{ex3} when $d_h=2$. \small{(Red region -- $P_{x,y,z}(\phi(x,y,z,a_1,b_1,c_1,d_1))$; Green region -- $P_{x,y,z}(\psi(x,y,z,a_2,b_2,c_2,d_2))$; Gray region -- $\{(x,y,z)\mid h(x,y,z)>0\}$.)}}
\label{fig-one-3}
\end{figure}

\end{example}
}

\begin{example}[Ultimate] \label{ex:ultimate}
  Let
  \begin{eqnarray*}
\phi = (f_1\ge 0 \wedge f_2\ge 0 \vee f_3\ge 0) \wedge f_4\ge 0 \wedge f_5\ge 0 \vee f_6\ge 0,
     \\[1mm]
\psi = (g_1\ge 0 \wedge g_2\ge 0 \vee g_3\ge 0) \wedge g_4\ge 0 \wedge g_5\ge 0 \vee g_6\ge 0,
\end{eqnarray*}
where
\begin{eqnarray*}
 f_1& = &3.8025-x^2-y^2, \\
 f_2 & = &y, \\
 f_3 & = & 0.9025-(x-1)^2-y^2, \\
 f_4& = & (x-1)^2+y^2-0.09, \\
 f_5 & = & (x+1)^2+y^2-1.1025, \\
 f_6 & = & 0.04-(x+1)^2-y^2,\\
 g_1 & = & 3.8025-x^2-y^2, \\
 g_2 & = & -y,  \\
 g_3 & = & 0.9025-(x+1)^2-y^2, \\
 g_4 & = & (x+1)^2+y^2-0.09, \\
 g_5 & = & (x-1)^2+y^2-1.1025,\\
 g_6 &= & 0.04-(x-1)^2-y^2.
\end{eqnarray*}
We first convert $\phi$ and $\psi$ to the disjunction normal form as:
\begin{align*}
\phi =&(f_1\ge 0 \wedge f_2\ge 0 \wedge f_4\ge 0 \wedge f_5 \ge 0) \\
&\vee (f_3\ge 0\wedge f_4\ge 0 \wedge f_5\ge 0) \vee (f_6\ge 0),
     \\[1mm]
\psi = & (g_1\ge 0 \wedge g_2\ge 0 \wedge g_4\ge 0 \wedge g_5\ge 0) \\
&\vee (g_3\ge 0 \wedge g_4\ge 0 \wedge g_5\ge 0) \vee (g_6\ge 0).
\end{align*}
We get a concrete $\sdp$ problem of the form (\ref{sos:casegen}) by setting the degree of $h(x,y)$ in (\ref{sos:casegen}) to be $7$. Using the MATLAB package YALMIP and Mosek, keeping the decimal to four, we obtain
{  \begin{align*}
    h&(x,y)=1297.5980x+191.3260y-3172.9653x^3+196.5763x^2y\\
    &+2168.1739xy^2+1045.7373y^3
    +1885.8986x^5-1009.6275x^4y\\
    &+3205.3793x^3y^2-1403.5431x^2y^3+
    1842.0669xy^4\\
    &+1075.2003y^5-222.0698x^7+
    547.9542x^6y-704.7474x^5y^2\\
    &+1724.7008x^4y^3-728.2229x^3y^4+1775.7548x^2y^5\\
    &-413.3771xy^6+1210.2617y^7.
  \end{align*}
}
The result is plotted in Fig. \ref{ultimate}, and can be verified either by
numerical error analysis as in Example \ref{ex3} or by a symbolic procedure like REDUCE as described in Remark \ref{remark:1}.

\begin{figure}
\flushleft
\includegraphics[scale=0.35]{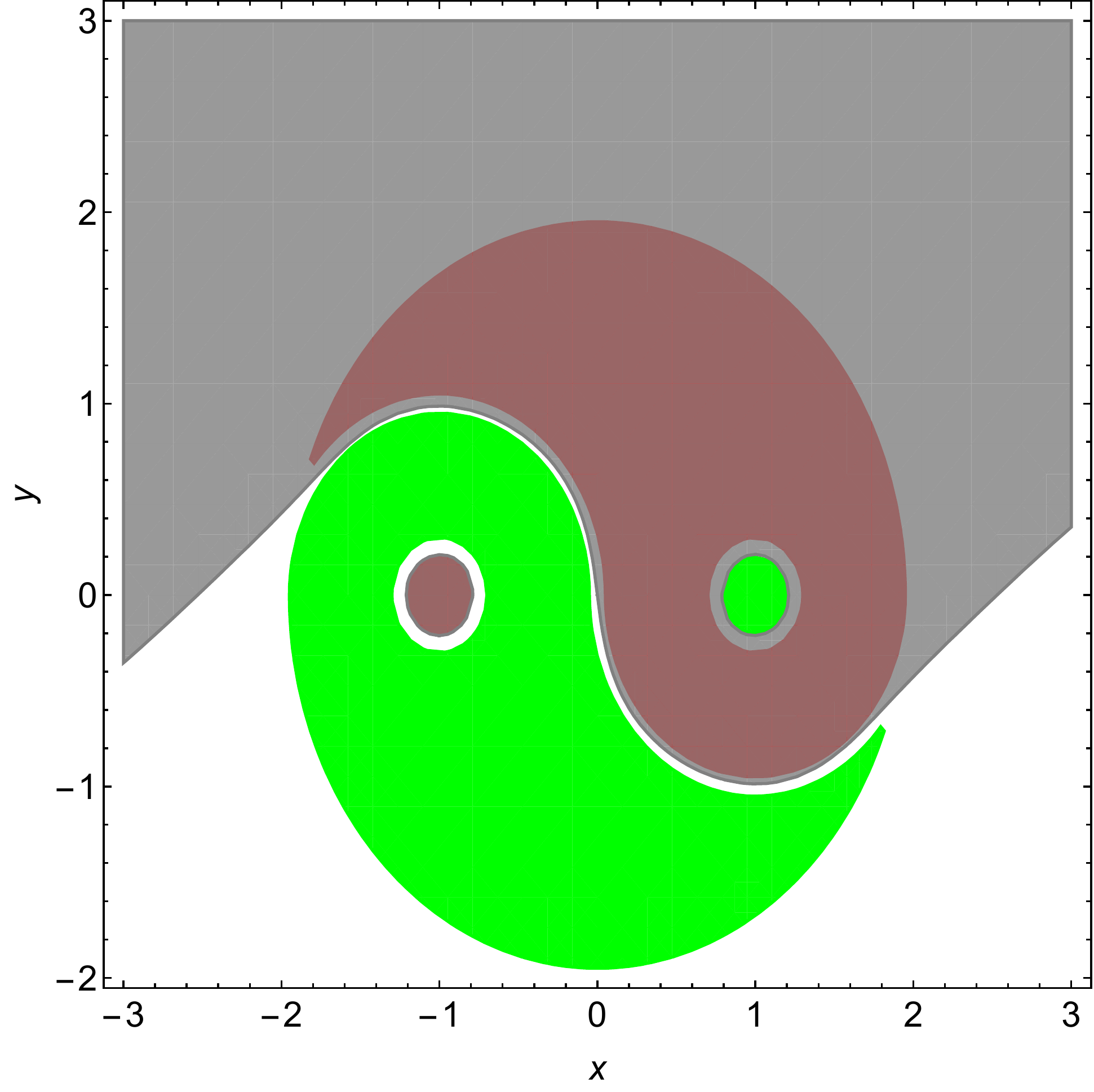}
\caption{Example \ref{ex:ultimate}.  \small{(Red region: $P_{x,y}(\phi(x,y))$;  Green region: $P_{x,y}(\psi(x,y))$; Gray region: $\{(x,y)\mid h(x,y)>0\}$.)}}
\label{ultimate}
\end{figure}
\end{example}

\section{Application to Invariant Generation} \label{sec:invariant}
In this section, as an application, we show how to
 apply our approach to invariant generation in program verification.

In \cite{LSXLSH2017}, Lin \emph{et al.} proposed a framework for invariant generation using \emph{weakest precondition},
\emph{strongest postcondition} and \emph{interpolation}, which consists of two procedures, i.e.,
synthesizing invariants by forward interpolation based on \emph{strongest postcondition} and
\emph{interpolant generation},  and by backward interpolation  based on \emph{weakest precondition} and
\emph{interpolant generation}. In \cite{LSXLSH2017}, only linear invariants can be synthesized as no powerful
approaches are available to synthesize nonlinear interpolants. Obviously, our results can
strengthen their framework by allowing to generate nonlinear invariants.
To this end, we revise the two procedures, i.e., Squeezing Invariant - Forward
 and Squeezing Invariant - Backward, in their
framework accordingly, and obtain
 Algorithm \ref{forward:invariant} and  Algorithm \ref{backward:invariant}.
  The major revisions include:
\begin{itemize}
	\item   firstly, we exploit our method to synthesize interpolants
	see line 4 in Algorithm \ref{forward:invariant} and line 4 in in Algorithm \ref{backward:invariant};
	\item secondly, we add a conditional statement for $A_{i+1}$ at line 7-10 in Algorithm \ref{forward:invariant} in order to make $A_{i+1}$ to be Archimedean, the same for $B_{j+1}$ in Algorithm \ref{backward:invariant}.
\end{itemize}
We then illustrate the basic idea by exploiting Algorithm \ref{forward:invariant}
to an example given in Algorithm~\ref{exam:inv:2}. 
The reader can refer to \cite{LSXLSH2017} for the detail of the framework.

\oomit{
	\begin{algorithm}[htb]
		$\textit{vc}\in[0,40]$\;
		\While{ $\top$ }{
			$\textit{fa}:=0.5418*\textit{vc}*\textit{vc}$\;
			$fr:=1000-\textit{fa}$\;
			$\textit{ac}:=0.0005*fr$\;
			$\textit{vc}:=\textit{vc}+\textit{ac}$\;
			\tcc{$\textit{vc}<49.61$ holds in while loop. }
		}
		\caption{Codes for an accelerating car}
		\label{exam:program1}
	\end{algorithm}

	\begin{example}
		Consider a program fragment in Algorithm \ref{exam:program1} for an accelerating car from
		\cite{KB11}. Suppose we know that $\textit{vc}$ is in $[0,40]$ at the beginning of the while loop,
		and we want to prove that $\textit{vc}<49.61$ should always hold in the while loop.
		
		Then we try to run Algorithm \ref{forward:invariant} to obtain an invariant to ensure that
		$\textit{vc}<49.61$ holds. Since $\textit{vc}$ is the velocity of car, so $0\le \textit{vc}$ should hold.
		With Algorithm \ref{forward:invariant}, we have
		$A_0=\{\textit{vc}\mid \textit{vc}(40-\textit{vc})\ge0\}$ and $B=\{\textit{vc}\mid \textit{vc}<0\}\cup\{ \textit{vc}\mid \textit{vc}\ge 49.61\}$.
		Here,  we replace $B$ with $B'=[-2,-1]\cup[49.61,55]$,
		i,e, $B'=\{\textit{vc}\mid (\textit{vc}+2)(-1-\textit{vc})\ge 0\vee (\textit{vc}-49.61)(55-\textit{vc})\ge0$, in order to make it with
		Archimedean form.
		
		
		Running Algorithm \ref{forward:invariant},
		firstly, $A_0: \textit{vc}(40-\textit{vc})\ge0$ implies $A_0\wedge B'\models \bot$. By applying our approach,
		we obtain an interpolant
		\begin{align*}
			\II_0: 1.437799+3.3947*\textit{vc}-0.083*\textit{vc}^2>0
		\end{align*} for $A_0$ and $B'$.
		We can check that $\{\II_0\} \, C \, \{\II_0\}$ at line 5 does not hold.
		
		Secondly, after setting $A_1=sp(\II_0,C)$ at line 6, repeating to call our approach, we
		obtain an interpolant $$\II_1: 2.067293+3.0744*\textit{vc}-0.0734*\textit{vc}^2>0$$ for
		solving for an interpolant for $A_0\cup A_1$ and $B'$.
		Also, $\{\II_1\} C \{\II_1\}$  at line 5 does not hold.
		
		Thirdly, repeat again, we obtain an interpolant
		$$\II_2: 2.250459+2.7267*\textit{vc}-0.063*\textit{vc}^2>0.$$ After checking,
		$\{\II_2\} \, C \, \{\II_2\}$ holds, which means that $\II_2$ is an invariant.
		
	\end{example}
}

\begin{algorithm}
\caption{Revised Squeezing Invariant - Forward}
\label{forward:invariant}
	\begin{algorithmic}[1]
	\REQUIRE{An annotated loop: $\{P\}$ while $\rho$ do $C$ $\{Q \}$}
	\ENSURE{ (yes/no, $\II$), where $\II$ is a loop invariant}
	
	\STATE{$A_0 \gets P$; $B_0 \gets (\neg \rho \wedge\neg Q)$; $i \gets 0$; $j\gets 0$;}
	\WHILE{$\top$}
      \IF{$(\bigvee_{k=0}^i A_i)\wedge B_j$ is not satisfiable}
        \STATE{call our method to synthesize an interpolant for $(\bigvee_{k=0}^i A_i)$ and $B_j$, say  $\II_i$;}

        \COMMENT{{\rm Use our method to generate interpolant}}
        \IF{$\{\II_i \wedge \rho\} \, C \,  \{ \II_i\}$}
          \RETURN{(yes, $\II_i$);}
        \ELSIF{$\II_i$ is bounded}
          \STATE{$A_{i+1}\gets\mathsf{sp}(\II_i\wedge\rho,C)$;}
        \ELSE
          \STATE{$A_{i+1} \gets \mathsf{sp}(A_i\wedge\rho,C)$;}
        \ENDIF

        \COMMENT{$\mathsf{sp}$: a predicate transformer to compute the strongest postcondition of $C$ w.r.t. $\II_i\wedge \rho$}
        \STATE{$i \gets i+1$;}
        \STATE{$B_{j+1}\gets B_0\vee (\rho\wedge \mathsf{wp}(C,B_j))$;}

        \COMMENT{$\mathsf{wp}$: a predicate transformer to compute the weakest precondition of $C$ w.r.t. $B_j$}
        \STATE{$i\gets i+1$;}
      \ELSIF{$A_i$ is concrete}
        \RETURN{(no, $\bot$);}
      \ELSE
        \WHILE{$A_i$ is not concrete}
          \STATE{$i \gets i-1$;}
        \ENDWHILE
        \STATE{$A_{i+1}\gets \mathsf{sp}(A_i\wedge\rho,C)$;}
        \STATE{$i\gets i+1$;}
      \ENDIF
    \ENDWHILE
\end{algorithmic}
\end{algorithm}

\begin{algorithm}
\caption{Revised Squeezing Invariant - Backward}
\label{backward:invariant}
	\begin{algorithmic}[1]
	\REQUIRE{An annotated loop: $\{P\}$ while $\rho$ do $C$ $\{Q \}$}
	\ENSURE{ (yes/no, $\II$), where $\II$ is a loop invariant}
	
	\STATE{$A_0 \gets P$; $B_0 \gets (\neg \rho \wedge\neg Q)$; $i \gets 0$; $j\gets 0$;}
	\WHILE{$\top$}
      \IF{$B_j\wedge(\bigvee_{k=0}^i A_i)$ is not satisfiable}
        \STATE{call our method to synthesize an interpolant for $B_j\wedge(\bigvee_{k=0}^i A_i)$, say $I_j$;}

        \COMMENT{{\rm Use our method to generate interpolant}}
        \IF{$\{\neg\II_j \wedge \rho\}C \{ \neg\II_j\}$}
          \RETURN{(yes, $\neg\II_j$);}
        \ELSIF{$\II_j$ is bounded}
          \STATE{$B_{j+1} \gets \II_j \vee(\rho\wedge \mathsf{wp}(C,\II_j))$;}
        \ELSE
          \STATE{$B_{j+1} \gets B_j \vee(\rho\wedge \mathsf{wp}(C,B_j))$;}
        \ENDIF
        \STATE{$j \gets j+1$;}
        \STATE{$A_{i+1} \gets \mathsf{sp}(A_i\wedge\rho,C)$;}
        \STATE{$i\gets i+1$;}
      \ELSIF{$B_j$ is concrete}
        \RETURN{(no, $\bot$);}
      \ELSE
        \WHILE{$B_j$ is not concrete}
          \STATE{$j \gets j-1$;}
        \ENDWHILE
        \STATE{$B_{j+1}\gets B_0\vee (\rho \wedge \mathsf{wp}(C,B_j))$;}
        \STATE{$j\gets j+1$;}
      \ENDIF
    \ENDWHILE
\end{algorithmic}
\end{algorithm}

\begin{example}
	Consider a while loop given in Algorithm \ref{exam:inv:2}, which is adapted  from \cite{DDLM13}
	by modifying the precondition and the postcondition so that the precondition and the negation of the postcondition are nonlinear and compact.
	\begin{algorithm}
		\caption{} \label{exam:inv:2}
		\begin{algorithmic}[1]
			\STATE /* Pre: $100-(x+50)^2\ge0\wedge100-y^2\ge0$ */
			\WHILE {$x<0$}
			\STATE $x\gets x+y$;
			\STATE $y\gets y+1$;
			\ENDWHILE
			\STATE /* Post: $(x-5)^2+(y+3)^4> 100$ */
		\end{algorithmic}
	\end{algorithm}
	We apply  the algorithm Squeezing Invariant - Forward in Algorithm \ref{forward:invariant} to the loop to
	compute an invariant which can witness its correctness.
	
	Firstly, at line 1 in Algorithm \ref{forward:invariant}, we have $\rho:x<0$ and
	\begin{align*}
		A_0:&~ 100-(x+50)^2\ge0\wedge100-y^2\ge0, \\
		B_0:&~ x\ge0 \wedge 100-(x-5)^2-(y+3)^4\ge 0.
	\end{align*}
	Then, at line 3, $A_0\wedge B_0 \models \bot$. Using our method, we can synthesize an interpolant for
	$A_0$ and $B_0$ (line 4) as:
	\begin{align*}
		\II_0: & -4.8031-6.8601x+4.5900y+0.0905x^2
		-0.5331y^2 \\
	 & 	+0.1376xy> 0.
	\end{align*}
	It can be checked that $\{\II_0\wedge \rho \} \ C \ \{\II_0\}$ does not hold ( line 5), where $C$ stands for the loop body.
	
	Secondly,  since $\II_0$ is not bounded, set $A_1=\mathsf{sp}(A_0\wedge \rho,C)$ (line 10), and $B_1=B_0\vee(\rho \wedge \mathsf{wp}(C,B_0))$ (line 12), i.e.,
	\begin{align*}
		A_1: & ~x=x'+y' \wedge y=y'+1 \wedge x'<0
		     \wedge
		100-(x'+50)^2\ge0 \wedge 100-y'^2\ge0,\\
		B_1: & ~B_0 \vee (x<0 \wedge x''=x+y \wedge y''=y+1
		   \wedge
		x''\ge0 \wedge 100-x''^2-(y''+3)^4\ge 0).
	\end{align*}
	Now, repeating  the while loop once again, at line 3, we have $(A_0\vee A_1)\wedge B_1$ is not
	satisfiable. Thus, with our method, we can obtain
	\begin{align*}
		\II_1: & -5.5937-10.6412x+7.9251y +0.1345x^2
		+0.3086y^2
		  +0.0020xy> 0.
	\end{align*}
	It can be checked that $\{\II_1\wedge \rho\}\ C \, \{\II_1\}$ holds.
	Thus, the algorithm will return $({\rm yes}, \II_1)$.
	
	Since $\II_1$ is an interpolant of $(A_0\vee A_1)\wedge B_1$,
	it follows that
	$(A_0\vee A_1)\models \II_1$ and $\II_1\wedge B_1\models\bot$.
	From $(A_0\vee A_1)\models \II_1$,  we have $\mathrm{Pre} \models \II_1$
	as $\mathrm{Pre}=A_0$. Moreover,
	from $\II_1\wedge B_1\models\bot$ and $B_1=B_0\vee(\rho\wedge \mathsf{wp}(C,B_0))$, we have
	$\II_1\wedge B_0\models\bot$, i.e., $\II_1\wedge (\neg \mathrm{Post}  \wedge(\neg\rho))\models\bot$. This implies that $\II_1\wedge (\neg\rho)\models Q$. Hence, we have
	\begin{align*}
		\mathrm{Pre} \models \II_1,~
		\II_1\wedge (\neg\rho)\models \mathrm{Post},~
		\{\II_1\wedge \rho\}\, C\, \{\II_1\},
	\end{align*}
	i.e., $\II_1$ is an inductive invariant that can prove the correctness of the annotated loop in Algorithm
	\ref{exam:inv:2}.
\end{example}

\begin{example}
	Consider another loop given in Algorithm \ref{exam:program1} for controlling
	the acceleration of a car adapted from
	\cite{KB11}. Suppose we know that $\textit{vc}$ is in $[0,40]$ at the beginning of the loop, we would like to prove that $\textit{vc}<49.61$ holds after the loop.
	Since the loop guard is unknown, it means that the loop may terminate after any number of iterations. 
	
	\begin{algorithm}
		\caption{Control code for accelerating a car} \label{exam:program1}
		\begin{algorithmic}[1]
			\STATE /* Pre: $\textit{vc}\in[0,40]$ */
			\WHILE {unknown}
			\STATE $\textit{fa}\gets 0.5418*\textit{vc}*\textit{vc}$;
			\STATE $fr\gets 1000-\textit{fa}$;
			\STATE $\textit{ac}\gets 0.0005*fr$;
			\STATE $\textit{vc}\gets \textit{vc}+\textit{ac}$;
			\ENDWHILE
			\STATE /* Post: $\textit{vc}<49.61$ */
		\end{algorithmic}
	\end{algorithm}
	
	We apply  Algorithm \ref{forward:invariant} to the computation of an invariant to ensure that $\textit{vc}<49.61$ holds. Since $\textit{vc}$ is the velocity of car, $0\le \textit{vc}<49.61$ is required to hold in order to maintain safety. 
	Via Algorithm \ref{forward:invariant}, we have
	$A_0=\{\textit{vc}\mid \textit{vc}(40-\textit{vc})\ge0\}$ and $B=\{\textit{vc}\mid \textit{vc}<0\}\cup\{ \textit{vc}\mid \textit{vc}\ge 49.61\}$.
	Here,  we replace $B$ with $B'=[-2,-1]\cup[49.61,55]$), which is in order to make $B$ to compact,
	i,e, $B'=\{\textit{vc}\mid (\textit{vc}+2)(-1-\textit{vc})\ge 0\vee (\textit{vc}-49.61)(55-\textit{vc})\ge0\}$, in order to make it with Archimedean form.
	
	
	Firstly, it is evident that
	$A_0: \textit{vc}(40-\textit{vc})\ge0$ implies $A_0\wedge B'\models \bot$. By applying our approach,
	we obtain an interpolant
	\begin{align*}
		\II_0: 1.4378+3.3947*\textit{vc}-0.083*\textit{vc}^2>0
	\end{align*} for $A_0$ and $B'$.
	It is verified  that $\{\II_0\} \, C \, \{\II_0\}$  (line 5) does not hold, where $C$ stands for the loop body.
	
	Secondly, by setting $A_1=sp(\II_0,C)$ (line 8) and re-calling our approach, we obtain an interpolant $$\II_1: 2.0673+3.0744*\textit{vc}-0.0734*\textit{vc}^2>0$$ for
	$A_0\cup A_1$ and $B'$.
	Likewise, it is verified that
	$\{\II_1\}\, C \, \{\II_1\}$  (line 5) does not hold.
	
	Thirdly, repeating the above procedure again, we obtain an interpolant
	$$\II_2: 2.2505+2.7267*\textit{vc}-0.063*\textit{vc}^2>0,$$ and it is verified that  $\{\II_2\} \, C \, \{\II_2\}$ holds, implying that $\II_2$ is an invariant.
	Moreover,  it is trivial to verify that $\II_2\Rightarrow \textit{vc}<49.61$.
	
	Consequently, we have the conclusion that $\II_2$ is an inductive invariant which witnesses
	the correctness of the loop.
\end{example}

\section{Conclusion}\label{sec:con}
In this paper we propose a  sound and complete method to synthesize Craig interpolants for mutually contradictory polynomial formulas $\phi(\xx,\yy)$ and $\psi(\xx,\zz)$, with the form $f_1\geq 0\wedge\cdots\wedge f_n\geq 0$, where $f_i$'s are polynomials in $\xx,\yy$ or $\xx,\zz$ and the quadratic module generated by $f_i$'s is Archimedean. The interpolant could be generated by solving a semi-definite programming problem, which is a generalization of the method in \cite{DXZ13} dealing with mutually contradictory formulas with the same set of variables and the method in \cite{GDX16} dealing with mutually contradictory formulas with concave quadratic polynomial inequalities. 
As an application, we apply our approach to invariant generation in program verification.


As a future work, 
we would like to consider interpolant synthesizing for formulas with strict polynomial inequalities. 
Also, it deserves to consider how to synthesize interpolants for the combination of non-linear formulas and
other theories based on our approach and other existing ones.

\bibliographystyle{spmpsci}

\bibliography{references}
\end{document}